\newcommand{\blind}{1}
\newtheorem{theorem}{Theorem}
\newtheorem{lemma}{Lemma}
\newtheorem{corollary}{Corollary}
\newcommand{\iid}{\stackrel{\mathrm{iid}}{\sim}}
\newcommand{\ind}{\stackrel{\mathrm{ind}}{\sim}}
\newcommand{\myequal}[1]{\mathrel{\overset{\makebox[0pt]{\mbox{\normalfont\tiny\sffamily #1}}}{=}}}
\newcommand{\mydiff}{\mathrm{d}}
\newcommand{\indic}{\mathbb{I}}
\newcommand{\E}{\mathbb{E}}
\newcommand{\G}{\mathbb{G}}
\renewcommand{\P}{\mathbb{P}}
\newcommand{\X}{\mathbb{X}}
\newcommand{\R}{\mathbb{R}}
\newcommand{\bmc}{\bm{c}}
\newcommand{\bmi}{\bm{i}}
\newcommand{\bmn}{\bm{n}}
\newcommand{\bmr}{\bm{r}}
\newcommand{\bms}{\bm{s}}
\newcommand{\bmt}{\bm{t}}
\newcommand{\bmu}{\bm{u}}
\newcommand{\bmw}{\bm{w}}
\newcommand{\bmx}{\bm{x}}
\newcommand{\bmy}{\bm{y}}
\newcommand{\bmpsi}{\bm{\psi}}
\newcommand{\bmbeta}{\bm{\beta}}
\newcommand{\bmgamma}{\bm{\gamma}}
\newcommand{\bmmu}{\bm{\mu}}
\newcommand{\bmtheta}{\bm{\theta}}
\newcommand{\bmtau}{\bm{\tau}}
\newcommand{\bmxi}{\bm{\xi}}
\newcommand{\bmS}{\bm{S}}
\newcommand{\bmU}{\bm{U}}
\newcommand{\widetrho}{\widetilde{\rho}}
\begin{document}

\def\spacingset#1{\renewcommand{\baselinestretch}%
{#1}\small\normalsize} \spacingset{1}


\if1\blind
{
  \title{\bf Hierarchical Mixture of Finite Mixtures}
  \author{
  \normalsize
  Alessandro Colombi\thanks{We would like to thank Mario Beraha for the helpful discussion.}\hspace{.2cm}\\
    Department of Economics, Management and Statistics, Università degli studi di Milano-Bicocca, Italy\\
    and \\
    Raffaele Argiento \\
    Department of Economics, Università degli studi di Bergamo, Italy\\
    and \\
    Federico Camerlenghi \\
    Department of Economics, Management and Statistics, Università degli studi di Milano-Bicocca, Italy\\
    and \\
    Lucia Paci\\
    Department of Statistical Sciences, Università Cattolica del Sacro Cuore, Milan, Italy}
    \date{\empty}
  \maketitle
} \fi

\if0\blind
{
  \bigskip
  \bigskip
  \bigskip
  \begin{center}
    {\LARGE\bf Mixture modeling via vectors of normalized independent finite point processes}
\end{center}
  \medskip
} \fi


\bigskip
\abstract{
	Statistical modelling in the presence of data organized in groups is a crucial task in Bayesian statistics.
	The present paper conceives a mixture model based on a novel family of Bayesian priors designed for multilevel data and obtained by normalizing a finite point process. In particular, the work extends the popular Mixture of Finite Mixture model to the hierarchical framework to capture heterogeneity within and between groups. A full distribution theory for this new family and the induced clustering is developed, including the marginal, posterior, and predictive distributions.  Efficient marginal and conditional Gibbs samplers are designed to provide posterior inference. The proposed mixture model overcomes the Hierarchical Dirichlet Process, the utmost tool for handling multilevel data, in terms of analytical feasibility, clustering discovery, and computational time. The motivating application comes from the analysis of shot put data, which contains performance measurements of athletes across different seasons. In this setting, the proposed model is exploited to induce clustering of the observations across seasons and athletes. By linking clusters across seasons, similarities and differences in athletes' performances are identified. 
}
{\it Keywords:} Model-based clustering, Multilevel data, Partial exchangeability, Sports analytics, Vector of finite Dirichlet processes.

\vfill

\spacingset{1.5} 


\section{Introduction}
\label{section:intro}
Statistical modelling of population heterogeneity is a recurrent challenge in real-world applications. In this study, our focus shifts towards hierarchical data scenarios, where observations emerge from distinct groups (or levels) and our objective is to model heterogeneity both within and between these groups.
Specifically, heterogeneity within groups is handled via mixture modelling to get group-specific clustering of observations, as well as density estimation.
	Concurrently, between group heterogeneity can be addressed through two extreme modelling choices: (i) pooling all observations; (ii) conducting independent analyses for each of the $d$ groups. However, both alternatives pose limitations. In the first case, differences across groups are not accounted for, while in the second one, groups are not linked, preventing sharing of statistical strength. In this work, a tunable balance between the two alternatives is proposed to model heterogeneity across groups. 

In Bayesian formalism, the sharing of information is naturally achieved through hierarchical modelling; parameters are shared among groups, and the randomness of the parameters induces dependencies among the groups. 
In particular, in model-based clustering, such sharing 
leads to group-specific clusters that are common among the groups, i.e., 
clusters that have the same interpretation across groups allow to define a global clustering. 
In general, the number of clusters within each group, as well as the number of global clusters, are unknown and need to be inferred from the data. 
In the hierarchical landscape, several Bayesian nonparametric approaches have been proposed,  the pioneering work being the Hierarchical Dirichlet Process (HDP) mixture model \citep{teh2006hierarchical}. 
The HDP has been recently extended to encompass normalized random measures \citep{camerlenghi2019distribution,cremaschi2020} and species sampling models \citep{casarin2020}. 
Despite their mathematical elegance, these methods suffer of considerable computational complexity and costs, posing a barrier for practitioners.
%

This work targets these problems and proposes a finite mixture model for each group, where the random number of mixture components, as well as the mixture parameters, are shared among groups. 
Rather, the mixture weights are assumed to be group-specific in order to accommodate differences between groups.
The proposed approach bridges the gap between infinite and finite mixtures in the hierarchical setting by introducing a novel family of Bayesian priors, named Vector of Finite Dirichlet Processes (Vec-FDP) to capture heterogeneity within and between groups. In particular, we use the Vec-FDP prior to building a new class of hierarchical mixture models, named Hierarchical Mixture of Finite Mixtures (HMFM), that encompasses the popular Mixture of Finite Mixtures (MFM) model \citep{millerharrison, here2inf} as a special case when the number of groups $d=1$.   
Then, following the same approach of \cite{argiento2022annals}, we introduce a more general construction for the mixing weights. This involves the normalization of positive unnormalized weights distributed according to any probability distribution over the positive real numbers. By doing so, we define a broader family of Bayesian priors, referred to as Vector of Normalized Independent Finite Point Processes (Vec-NIFPP), which encompasses the Vec-FDP as a special case when the unnormalized weights are Gamma distributed.
Although we borrow Bayesian nonparametric tools to derive the distributional results and the clustering properties, the model employs a finite, yet random, number of mixture components, enhancing its accessibility to a broader audience, e.g., those interested in model-based clustering and mixture of experts modelling \citep{Jacobs91}.




Posterior inference poses computational challenges in the context of Bayesian nonparametrics, particularly when dealing with hierarchical data. Rather, 
leveraging the posterior characterization of the Vec-FDP, we do design an efficient  Markov chain Monte Carlo (MCMC) sampling strategy, which significantly improves the existing methods based on infinite dimensional processes, such as the HDP. 
Notably, given a fixed number of clusters and groups, the HDP's computational time scales quadratically with the volume of data. In contrast, our approach achieves linear scaling.

%
The reason behind such improvement is evident from the restaurant franchise-like representation of the Vec-FDP prior. 
In contrast to the HDP's complex distinction between tables and menus, our representation is straightforward due to the absence of stacked layers of infinite dimensional objects in the model structure.
This simplification enhances model interpretability, which is preserved when moving from a single group to multiple groups, unlike the HDP.
The restaurant franchise metaphor also illuminates the flexibility of 
the sharing of information mechanism induced by the proposed method.
In this regard, a comprehensive simulation study compares the proposed HMFM model with both the HDP mixture model and the MFM model 
assumed independently for each group. Experiments shed light on the advantages of employing joint modelling rather than independent analyses and demonstrate how the excessive sharing of information induced by the HDP may lead to misleading conclusions. Therefore, the HMFM strikes a balanced compromise between the HDP and the independent analyses,  offering a tunable approach that combines the strengths of both methods. 
The motivating example for this work comes from sports analytics. See \citep{page2013} 
for  
Bayesian nonparametric methods in this domain. 
Our methodology finds application in the analysis of data from shot put, a track and field discipline that involves propelling a heavy spherical ball, or \textit{shot}, over the greatest distance attainable. 
The dataset comprises measurements, specifically throw lengths or marks, recorded during professional shot put competitions from 1996 to 2016, for a total of $35,637$ measurements on $403$ athletes.
The data are organized by aligning marks by season to ensure equitable comparison among athletes. Athletes have varying durations, with a maximum of $15$ seasons, which correspond to the $d$ different groups. 
We employ the proposed HMFM model to
cluster athletes' performances within each season while preserving the interpretability of clusters across different seasons. 
This allows us to enrich the understanding of the evolution in athletes' performance trends.
A remarkable finding is that the estimated clusters are gender-free, thanks to the inclusion of an additional season-specific regression parameter. 
Notably, the model identifies a special cluster consisting of six exceptional women's performances achieved by Olympic or world champions;  
no men have ever been able to outperform their competitors in such a neat way.


Summing up, our methodological contribution is to propose a new family of dependent Bayesian nonparametric priors designed for analyzing hierarchical data in a mixture setting. The proposed model 
allows to identify random clusters of observations within and between groups. 
We deeply investigate the theoretical properties of the Vec-NIFPP, including the distribution of the random partition, the predictive distribution of the process, and its correlation structure. Additionally, we fully characterize the posterior distribution of the Vec-NIFPP. 
On the computational side, 
two efficient MCMC procedures based on a conditional and a marginal sampler provide inference on the number of clusters, and the clustering structure 
as well as group-dependent density estimation. 
Both algorithms show improved computational efficiency with respect to common strategies for posterior sampling in hierarchical mixture models.  
The flexibility of the proposed model is illustrated through an extensive simulation study and a real-world application in sport analytics. 

The rest of the paper is organized as follows. 
\Cref{section:HMFM} introduces the HMFM model and defines the local and global clustering.
\Cref{section:VecNIFFP} frames the HMFM model in the larger class of mixture models based on the Vec-NIFPP and provides the main distributional results. 
In \Cref{section:HMFM_properties}, we study the marginal, posterior and predictive distributions of the Vec-FDP, along with the hyperpriors choice and the computational strategies. 
The analysis of shot put data is illustrated in \Cref{section:application}.
A discussion concludes the paper and Supplementary materials complement it.

\section{Hierarchical mixture of finite mixture}
\label{section:HMFM}
Given $d$ groups (or levels), let $\bmy_j = \left(y_{j,1},\dots,y_{j,n_j}\right)$ denote the data collected over $n_j$ individuals in group $j=1,\dots,d$, where $y_{j,i}\in\mathbb{Y}$ and 
$\mathbb{Y}$ is the sampling space.  
We assume that the data in each group $j$ come from a finite mixture of $M$ components, that is
\begin{equation}
	y_{j,1}, \dots, y_{j,n_j}  \mid P_j  \ \iid \ \int_{\Theta}f( \, \cdot \, \mid \theta)P_j({\rm d}\theta) 
	\quad \text{for each } j=1,\dots,d,
	\label{eqn:mixture1}
\end{equation}
where $\{f( \, \cdot\,  \mid \theta), \ \theta\in\Theta \}$ is a parametric family of densities over $\mathbb{Y}$ and $(P_1, \ldots , P_d )$ is a vector of random probability measures over the parameter space $\Theta\subset\R^{s}$. We focus on random probability measures having almost surely discrete realizations. 
More specifically, we define a vector of finite dependent random probability measures $\left(P_1,\dots,P_d\right)$ as follows,
\begin{equation}
	P_j(\cdot) = \sum_{m=1}^{M}\frac{S_{j,m}}{T_j}\delta_{\tau_m}(\cdot),
	\label{eqn:Pj_def}
\end{equation}
where $\delta_{\tau_m}$ stands for the delta-Dirac mass at $\tau_m$, and $T_j = \sum_{m=1}^{M} S_{j,m}$ is referred to as the total mass.\\
As a prior for $M$, we place a $1$-shifted Poisson distribution with parameter $\Lambda$, denoted by $\operatorname{Pois}_1(\Lambda)$, so that we are sure that there always exists at least one mixture component.
Then, conditionally to $M$, $(\tau_1,\dots,\tau_M)$ are common random atoms across the $d$ random probability measures, which are assumed independent and identically distributed (i.i.d.) with common distribution $P_0$, that is a diffuse probability measure on $\Theta$.
Moreover, given $M$, the unnormalized weights $S_{j,m}$ are conditionally independent both within and between the groups. 
In particular, we assume the components of $\bmS_j = \left(S_{j,1},\dots,S_{j,M}\right)$ to be i.i.d. from $\operatorname{Gamma}\left(\gamma_j,1\right)$, independently with respect to $j=1,\dots,d$. Throughout this work, we always refer to a shape-rate parametrization of the gamma distribution.
It is easy to show that the induced prior on the normalized mixture weights $\left(\pi_{j,1}, \ldots , \pi_{j,M}\right)$ is
\[
\left(\pi_{j,1}, \ldots , \pi_{j,M}\right) = \left( \frac{S_{j,1}}{T_j}, \ldots , \frac{S_{j,M}}{T_j} \right) \sim 
{\rm Dir}_M \left(\gamma_j, \ldots , \gamma_j\right), \quad \text{for each }j=1,\dots,d,
\]
where ${\rm Dir}_M \left(\gamma_j, \ldots , \gamma_j\right)$ denotes the $M$-dimensional symmetric Dirichlet distribution with parameter $\gamma_j$.
The mixing measure $P_j$ is obtained by normalization:
\begin{equation}
	P_j(\cdot) = \frac{\mu_j(\cdot)}{\mu_j(\Theta)},
	\label{eqn:normalized_def}
\end{equation}
where $\mu_j(\cdot) = \sum_{m=1}^M S_{j,m}\delta_{\tau_m}(\cdot)$.
The seminal contribution of \citet{regazzini2003distributional} has spurred the construction of random probability measures via the normalization approach, which turns out to be a convenient framework to face posterior inference; see, e.g., \citet{LijNip(14)} \citet{camerlenghi2019distribution}, \citet{cremaschi2020} and \citet{argiento2022annals} for allied contributions.
We point out that, marginally, each component $P_j$ is a finite Dirichlet process as the one described in  \citet{argiento2022annals}. These discrete random measures are
also known as Gibbs-type priors with
a negative parameter \citep{GnedinPitman2006,deblasi2015}.
Our model construction generalizes the work of  \citet{argiento2022annals} by allowing
for the sharing of information across groups thanks to shared atoms and a shared number of components.
Besides, the proposed model retains mathematical tractability thanks to the normalization approach and the conditional independence of the unnormalized weights.
The prior specification for the mixture parameters $\left(M,\bmS,\bmtau\right)$, where $\bmS = \left(S_{1},\dots,S_{d}\right)$ and $\bmtau = \left(\tau_{1},\dots,\tau_{M}\right)$, is equivalent to specify the joint law of the vector $\left(P_1,\dots,P_d\right)$, 
called \textit{Vector of Finite Dirichlet Process} (Vec-FDP) and denoted by 
\begin{equation}
	\label{eqn:Def_VecFDP}    (P_1,\dots,P_d)\sim\operatorname{Vec-FDP}\left(\Lambda,\bmgamma,P_0\right),
\end{equation}
where $\bmgamma = (\gamma_1,\dots,\gamma_d)$.
Summing up, the model can be formulated in the following hierarchical form,
\begin{equation}
	\label{eqn:HMFM_definition}
	\begin{split}
		\bmy_{j,1}, \dots, \bmy_{j,n_j} \mid \bmS_j, \bmtau, M 
		& \ \iid \
		\sum_{m=1}^M w_{j,m}
		f\left(\bmy_{j,i} \mid \tau_{m}\right)\\
		\tau_{1},\dots,\tau_{M} \mid M
		&\ \iid \ P_0(\cdot)\\
		w_{j,1},\dots,w_{j,M} \mid M,\gamma_j
		& \ \iid \ \operatorname{Dir}_M\left(\gamma_j,\dots,\gamma_j\right),
		\quad \text{for } j = 1,\dots,d\\
		M \mid \Lambda & \ \sim \ \operatorname{Pois}_1(\Lambda).
	\end{split}
\end{equation}
We notice that, when $d=1$, the model in Equation \eqref{eqn:HMFM_definition} coincides with the MFM model \citep{millerharrison,here2inf}. It follows that the proposed model is an extension of the MFM model to hierarchical data and so
we refer to it 
as the \textit{Hierarchical Mixture of Finite Mixture} (HMFM) model.

\subsection{Clustering}
\label{subsection:clustering}
It is worth noticing that the mixture model in Equation \eqref{eqn:mixture1} can be equivalently written as
\begin{equation*} \label{eqn:mixture_latent}
	y_{j,i}  \mid \theta_{j,i} \ind f (\, \cdot \, \mid  \theta_{j,i}), \qquad \theta_{j,i} \mid P_j \iid P_j
\end{equation*}
with $i=1, \ldots , n_j$ and $j=1, \ldots , d$. Under this formulation, we get rid of the integral in Equation \eqref{eqn:mixture1} by introducing a latent variable
$\theta_{j,i}$ for each observation $y_{j,i}$.
Conditionally on $\left(P_1,\dots,P_d\right)$, the latent variables $\theta_{j,i}$'s are i.i.d. within the same group and independent across groups. In other words, by virtue of the de Finetti representation theorem, $\bmtheta := \left(\bmtheta_1,\dots,\bmtheta_d\right)$, where $\bmtheta_j=(\theta_{j,1},\dots,\theta_{j,n_j})$, is a sample from a partially exchangeable array of latent variables. This is tantamount to saying that the distribution of the $\theta_{j,i}$'s is invariant under a specific class of permutations; see \citep{Kall(05)} and references therein.
The distributional properties of $\bmtheta$ play a pivotal role in mixture models both in defining the clustering and 
devising efficient computational procedures.
More precisely, since the $P_j$'s in Equation \eqref{eqn:Def_VecFDP} have discrete realizations, almost surely, the latent variables feature ties within and across groups; thus they naturally induce 
both a local 
and a global clustering of the observations.

\paragraph{\textbf{Local clustering}} For each group $j = 1, \ldots  , d$, $P_j$ is almost surely discrete, then ties are expected with positive probability among $\theta_{j,1},\dots,\theta_{j,n_j}$.
Let $K_j:=K_{j,(n_j)}$ be the random number of distinct values in this sample. 
Such group-specific distinct values are collected in the set $\mathcal{T}_j = \left\{\theta^*_{j,1},\dots,\theta^*_{j,K_j}\right\}$. 
Furthermore, let $\widetrho_j = \left\{C_{j,1},\dots,C_{j,K_j}\right\}$ be the random partition of $\{ 1,\dots,n_j\}$ induced by $\mathcal{T}_j$ through the following rule:
\[
\theta_{j,i}\in C_{j,k} \iff \exists k\in \{1,\dots,K_j\} \text{ such that } \theta_{j,i} = \theta^*_{j,k},
\]
for each $i = 1,\dots,n_j$.
Note that the random partition $\widetrho_j$ is exchangeable due to the exchangeability of $\theta_{j,1},\dots,\theta_{j,n_j}$ and it is called \textit{local clustering} of group $j$ (or group-specific clustering).

\paragraph{\textbf{Global clustering}} Since the $P_j$'s share the same support, we also expect ties between groups,
i.e., $\P\left(\theta^*_{j,k} = \theta^*_{j^\prime,k^\prime}\right)>0$, with $j\neq j^\prime$.
We define $\mathcal{T}= \left\{\theta^{**}_1,\dots,\theta^{**}_{K}\right\}$ the set of unique values among the $\theta^*_{j,k}$, $j=1,\dots,d$ and $k=1,\dots,K_j$; we also observe that
$
\mathcal{T} = 
\bigcup_{j=1}^d\mathcal{T}_j .
$
The corresponding random partition $\rho = \left\{C_1,\dots,C_K\right\}$ is induced by $\mathcal{T}$ as follows
\[
\theta_{j,i} \in C_k \iff \exists k\in \{1,\dots,K\} \text{ such that } \theta_{j,i} = \theta^{**}_{k},
\]
for each $j = 1,\dots,d$ and for each $i =  1,\dots,n_j$. 
The random partition $\rho$ is called \textit{global clustering} and the number of global clusters has been denoted by $K:=K_{(n_1,\dots,n_d)}$.
Since values are expected to be shared also across groups, then $K\leq \sum_{j=1}^d K_j$.

To shade light on the relationship between local and global clustering, we introduce
$\rho_j = \left\{C_{j,1},\dots,C_{j,K}\right\}$ 
so that $C_k = \bigcup_{j=1}^d C_{j,k}$. Note that $C_{j,k}$ can be empty for some $k = 1,\dots,K$.
We define 
$n_{j,k} = \left| C_{j,k} \right|$ the number of observations of the $j$-th group in the $k$-th cluster. 
Then, we let
$\bmn_j = \left(n_{j,1},\dots,n_{j,K}\right)$
and the following 
hold true
\begin{equation}
	\sum_{j=1}^{d}n_{j,k} >0, \text{ and }\sum_{k=1}^{K} n_{j,k} = n_j,
	\label{eqn:cluster_vincoli}
\end{equation}
for any $k= 1,\dots,K$ and for any $j=1,\dots,d$, respectively.

The random partition induced by the whole sample $\bm{\theta}$ of size $n:= n_1+\cdots +n_d$ may be described through a probabilistic object called \textit{partially Exchangeable Partition Probability Function} (pEPPF).
The pEPPF, denoted here as $\Pi_K^{(n)}\left(\bmn_1,\dots,\bmn_d\right)$ is the probability distribution of both the local and global clustering, where $\bmn_1,\dots,\bmn_d$ satisfy the constraints given in Equation \eqref{eqn:cluster_vincoli}.   The pEPPF is formally defined as follows:
\begin{equation*}
	\label{eqn:pEPPF_def}
	\Pi_K^{(n)}\left(\bmn_1,\dots,\bmn_d\right) :=\E \left[\bigintssss_{\Theta^K} \prod_{k=1}^K P_j^{n_{j,k}} \left({\rm d} \theta_k^{**}\right)\right].
\end{equation*}
We refer to \citet{camerlenghi2019distribution} for additional details.

Observe that, conditionally to $M$ and $\bmtau$, the unique values $\theta^{**}_1,\dots,\theta^{**}_K$ are such that the following properties hold: (i) $K\leq M$; (ii) there exists $m\in\{1,\dots,M\}$ such that $\theta^{**}_k = \tau_m$, for each $k=1,\dots,K$. 
Property (i) implies that a distinction between allocated mixture components (clusters) and non-allocated components is required \citep{Nobile2004, argiento2022annals}. From property (ii) follows that there are exactly $K$ allocated components collected in the following set:
$$
\mathcal{M}^{(a)} = 
\left\{
m\in\{1,\dots,M\} : \exists k \in\{1,\dots,K\} \text{ such that } \theta^{**}_{k} = \tau_m
\right\}.
$$

\section{Vector of Normalized Independent Finite Point Processes}
\label{section:VecNIFFP}

In this section, we frame the model introduced in \Cref{section:HMFM} into a more general and flexible class of Bayesian nonparametric models. 
Indeed, it turns out that the study of the distributional properties of the proposed prior, presented in \Cref{subsection:distributional} can be carried out under quite a general choice for the prior distribution of the number of components $M$ and the unnormalized weights $S_{j,m}$.

Here, $M$ is supposed to be a positive integer-valued random variable with distribution $q_M$ on $\{1,2,3,\dots\}$. Then, given $M$, we still assume the unnormalized weights $S_{j,m}$ to be conditionally independent both within and between the groups. However, we now assume the components of $\bmS_j = \left(S_{j,1},\dots,S_{j,M}\right)$ to be, given $M$, i.i.d. from $H_j$, a probability distribution over $\R^+$. For the sake of simplicity, we assume all $H_j$ to belong to the same parametric family having density $h(\cdot\mid\gamma_j)$, i.e., $H_j({\rm d}s) = h(s\mid\gamma_j) \mydiff s$, where $\gamma_j\in\R^+$ are group-specific parameters.
Finally, conditionally to $M$, the common random $(\tau_1,\dots,\tau_M)$ atoms are i.i.d. from $P_0$, as said in \Cref{section:HMFM}.
We point out that, marginally, for each group $j$, Equations \eqref{eqn:mixture1} and \eqref{eqn:Pj_def} define a mixture model where the mixing distribution is the Normalized Independent Finite Point Process as defined \citet{argiento2022annals}.

The specification of the probability measures $q_M$, $H_j$, as $j=1, \ldots , d$, and $P_0$ induces a prior distribution for the mixture parameters $\left(M,\bmS,\bmtau\right)$, of which the model in \Cref{eqn:HMFM_definition} is a special case.
In this general construction, the joint law of the vector $\left( P_1,\dots,P_d\right)$ is 
called \textit{Vector of Normalized Independent Finite Point Process}  and denoted by
\begin{equation*}
	( P_1,\dots,P_d )\ \sim \ \operatorname{Vec-NIFPP}\left(q_M,H,P_0\right),
	\label{eqn:joint_Pjs}
\end{equation*}
where $H = H_1 \times \dots \times H_d$. Since the $P_j$'s are defined through normalization, see Equation \eqref{eqn:normalized_def}, we also define the joint law of the $\left(\mu_1,\dots,\mu_d\right)$, called Vector of Independent Finite Point Process (Vec-IFPP).
See \Cref{app:appA} for a detailed construction of the $\operatorname{Vec-NIFPP}$ via point processes.

\subsection{Distributional results}
\label{subsection:distributional}
In this section, we derive all theoretical properties for the latent variables $\theta_{j,i}$ modelled as follows
\begin{equation}
	\label{eqn:latent_model}
	\begin{split}
		\theta_{j,i}\mid P_j \iid P_j,
		\quad
		( P_1,\dots,P_d ) & \sim  \operatorname{Vec-NIFPP}\left(q_M,H,P_0\right),
	\end{split}
\end{equation}
where $j=1, \ldots , d$ and $i = 1, \ldots , n_j$. In addition, we assume conditional independence across groups, i.e.,
$\bmtheta_j,\bmtheta_l\mid P_j,P_l$ are independent for $j\neq l$.
The distributional results derived for the model in Equation \eqref{eqn:latent_model} play a pivotal role in devising efficient marginal and conditional algorithms to perform posterior inference. The proofs of the theoretical properties are deferred to \Cref{section:appB} of the Supplementary materials.

The almost sure discreteness of the $P_j$, coupled with their common supports, entails that the hierarchical sample $\left(\bmtheta_1,\dots,\bmtheta_d\right)$ is equivalently characterized by $(\bmtheta^{**},\rho)$, previously defined in \Cref{subsection:clustering}. The following theorem specifies the distribution of the pEPPF for the model under investigation.
\begin{theorem}[pEPPF]
	\label{thm:peppf}
	The probability to observe a sample  $\bm{\theta}= (\bm{\theta}_1, \ldots , \bm \theta_d)$ of size $n$ from Equation \eqref{eqn:latent_model} exhibiting $K$ distinct values $\left(\theta^{**}_1, \dots, \theta^{**}_K\right)$ with respective counts $\bmn_1,\dots,\bmn_d$ is given by the following pEPPF 
	\begin{equation*}
		\Pi_K^{(n)}(\bm{n}_1,\dots,\bm{n}_d) = 
		\bigintsss_{(\R^+)^d}{
			\prod_{j=1}^d\left\{
			\frac{u_j^{n_j-1}}{\Gamma(n_j)}
			\prod_{k=1}^K \kappa_j(u_j,n_{j,k})
			\right\}}
		\Psi(K,\bmu)
		{\rm d}u_1,\dots,{\rm d}u_d
		\label{eqn:peppf}
	\end{equation*}
	where \[
	\Psi(K,\bmu) = \sum_{m=0}^\infty \ q_M(m+K)\frac{(m+K)!}{m!}
	\prod_{j=1}^d\left( \psi_j\left(u_j\right) \right)^m,
	\]
	$\psi_j\left(u_j\right)$ is the Laplace transform of a random variable $S_j\sim H_j$ and $\kappa_j\left(u_j,n_{j,k}\right)$ is its derivative. Namely,
	\[
	\psi_j\left(u_j\right) = \E\left[e^{-u_j S_j}\right] = \int_{0}^\infty e^{-u_j s} H_j({\rm d}s), 
	\qquad
	\kappa_j\left(u_j,n_{j,k}\right) = \int_{0}^\infty e^{-u_j s} s^{n_{j,k}} H_j({\rm d}s).
	\]
\end{theorem}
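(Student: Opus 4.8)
The plan is to compute the pEPPF by conditioning on the latent triple $(M,\bmtau,\bmS)$, exploiting the explicit atomic form of the $P_j$ in Equation \eqref{eqn:Pj_def}, and then integrating out the random ingredients in the order: atom assignments, normalizing masses $T_j$, unnormalized weights $\bmS$, and finally the number of components $M$.

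First I would condition on $M$, $\bmtau$ and $\bmS$ and reduce the measure-theoretic integral defining $\Pi_K^{(n)}$ to a combinatorial sum. Because each $P_j$ lives on the common atoms $\tau_1,\dots,\tau_M$ and $P_0$ is diffuse, requiring the sample to display exactly $K$ distinct values $\theta^{**}_1,\dots,\theta^{**}_K$ with counts $\bmn_1,\dots,\bmn_d$ forces each global cluster $k$ to coincide with a single allocated atom $\tau_{m_k}$ shared across all groups, where $m_1,\dots,m_K$ are pairwise distinct and $K\le M$ (this is precisely properties (i)--(ii) of \Cref{subsection:clustering}). Integrating each $\theta^{**}_k$ against the atomic measures then gives $\sum \prod_{j=1}^d\prod_{k=1}^K(S_{j,m_k}/T_j)^{n_{j,k}}$, where the sum runs over ordered tuples of distinct atom labels. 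By exchangeability of $(S_{j,1},\dots,S_{j,M})$ in the index $m$, every admissible tuple contributes the same expected value, so summing produces the falling factorial $M!/(M-K)!$ times the contribution of the fixed choice $m_k=k$, namely $\prod_{j=1}^d T_j^{-n_j}\prod_{k=1}^K S_{j,k}^{n_{j,k}}$, where I used $\sum_k n_{j,k}=n_j$.

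Next, to break the ratios I would apply, for each group, the Gamma identity $T_j^{-n_j}=\Gamma(n_j)^{-1}\int_0^\infty u_j^{n_j-1}e^{-u_jT_j}\,\mathrm{d}u_j$, which introduces the auxiliary variables $u_j$ and linearizes the total masses. Expanding $e^{-u_jT_j}=\prod_{m=1}^M e^{-u_jS_{j,m}}$ and using the conditional independence of the weights within each group, the expectation over $\bmS_j$ factorizes over components: the $K$ allocated weights produce the factors $\kappa_j(u_j,n_{j,k})=\int_0^\infty s^{n_{j,k}}e^{-u_js}H_j(\mathrm{d}s)$, while each of the $M-K$ non-allocated weights contributes the Laplace transform $\psi_j(u_j)=\int_0^\infty e^{-u_js}H_j(\mathrm{d}s)$. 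Since the groups are conditionally independent given $M$, this yields, conditionally on $M$, the expression $\frac{M!}{(M-K)!}\prod_{j=1}^d\Gamma(n_j)^{-1}\int_0^\infty u_j^{n_j-1}\prod_{k=1}^K\kappa_j(u_j,n_{j,k})\,\bigl(\psi_j(u_j)\bigr)^{M-K}\,\mathrm{d}u_j$.

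Finally I would average over $M$ by summing against $q_M$ for $M\ge K$, combine the $d$ univariate integrals into a single integral over $\bmu\in(\R^+)^d$, and reindex by $m=M-K$, so that $M!/(M-K)!=(m+K)!/m!$ and $\bigl(\psi_j(u_j)\bigr)^{M-K}=\bigl(\psi_j(u_j)\bigr)^m$. Exchanging the sum over $m$ with the integral, which is legitimate by Tonelli's theorem since every integrand is nonnegative, gathers the $M$-dependent pieces into $\Psi(K,\bmu)=\sum_{m\ge 0}q_M(m+K)\frac{(m+K)!}{m!}\prod_{j=1}^d(\psi_j(u_j))^m$ and leaves exactly the claimed formula. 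The step I expect to be most delicate is the first one: rigorously passing from the measure-theoretic pEPPF integral, with atoms shared across groups, to the combinatorial sum over allocated atoms, and verifying that the exchangeability argument delivers precisely the falling factorial $M!/(M-K)!$ with no spurious over- or under-counting.
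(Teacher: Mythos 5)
Your proof is correct, but it follows a genuinely different route from the paper's. You condition directly on the finite representation $(M,\bmtau,\bmS)$, reduce the pEPPF integral to a sum over ordered tuples of distinct atom labels, extract the falling factorial $M!/(M-K)!$ by exchangeability, linearize the total masses via the Gamma identity, factorize the expectation over the i.i.d.\ weights into the $K$ factors $\kappa_j(u_j,n_{j,k})$ and the $M-K$ factors $\psi_j(u_j)$, and finally sum over $M$ with the reindexing $m=M-K$. The paper instead runs the computation through point-process machinery: it rewrites the product of random measures as an integral against the $K$-th power process $\Phi^K$, applies the higher-order Campbell--Little--Mecke formula, and invokes the reduced Palm characterization of the IFPP (\Cref{thm:palm}, \Cref{cor:palm4VIFPP}) together with the Laplace-transform computation of \Cref{lemma:Laplace2}. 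The two arguments are structurally parallel --- your falling factorial is exactly the factorial moment $\E[M^{(K)}]$ entering the factorial moment measure, and your $(\psi_j(u_j))^{M-K}$ factor is what \Cref{lemma:Laplace2} packages as $\Psi(K,\bmu)/\E[M^{(K)}]$ --- but yours is more elementary and self-contained, exploiting the a.s.\ finiteness of $M$ so that everything is a finite sum, while the paper's Palm-calculus formulation buys reusable infrastructure that it then deploys again for the mixed moments (\Cref{thm:moments}) and the posterior characterization (\Cref{thm:posterior}). The one step you flag as delicate (passing from the measure-theoretic integral to the combinatorial sum over distinct allocated atoms) is handled in the paper by taking the ${\rm d}\theta_k^{**}$ to be infinitesimally disjoint sets so that $\Phi^K$ may be replaced by the factorial power $\Phi^{(K)}$; your diffuseness-of-$P_0$ argument is the same point in elementary language and is sound.
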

The predictive distributions, i.e., the distribution of $\theta_{j,n_j+1}$ given $\left(\bmtheta_1,\dots,\bmtheta_d\right)$ for each possible group $j$, easily follow from \Cref{thm:peppf}. 
We describe them in detail in \Cref{section:HMFM_properties} for a specific choice of $q_M$ and $H$,
while we refer to \Cref{app:predictive} for the general formulation.

The following theorem provides functionals of $\left(\mu_1,\dots,\mu_d\right)$, which are useful for prior elicitation and to shade light on the dependence structure introduced by our prior.

\begin{theorem}[Mixed moments]
	\label{thm:moments}
	Let $(P_1,\dots,P_d) \sim \operatorname{Vec-NIFPP}\left(q_M,H,P_0\right)$ be a vector of normalized random probability measures defined through normalization as in Equation \eqref{eqn:normalized_def}. 
	For any measurable set $A$ and for any $j\in \{1,\dots,d\}$, $\E\left[P_j(A)\right] = P_0(A)$. Additionally, 
	\begin{itemize}
		\item[(i)] for any measurable sets $A,B$ and for any $j,l \in \{1,\dots,d\}$,
		\begin{align}
			\label{eqn:mixedmom1}
			\E\left[P_j(A)P_l(B)\right]  
			= \mathbb{P}\left(K_{(1,1)} = 1\right) 
			\left( P_0(A\cap B) - P_0(A)P_0(B)\right);
		\end{align}
		\item[(ii)] for any measurable set $A$ and for any $j,l \in \{1,\dots,d\}$,
		\begin{align}
			\label{eqn:mixedmom2}
			\E\left[P_j(A)^{n_j}P_l(A)^{n_l}\right] = \E\left[P_0(A)^{K_{(n_j,n_l)}}\right] = 
			\sum_{k=1}^{n_j+n_l} P_0(A)^k\mathbb{P}\left(K_{(n_j,n_l)} = k\right),
		\end{align}
	\end{itemize}
	where $K_{(n_j,n_l)}$ is the global number of clusters across two groups with size $n_j$ and~$n_l$.
\end{theorem}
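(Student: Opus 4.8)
The plan is to express every mixed moment as the probability of a hitting event for i.i.d.\ latent draws, and then to condition on the global number of clusters. Using the augmented formulation in Equation \eqref{eqn:mixture_latent}, if $\theta_{j,1},\dots,\theta_{j,n_j}\mid P_j \iid P_j$ and the groups are conditionally independent given $(P_1,\dots,P_d)$, then for any measurable sets $A_1,\dots,A_d$,
\[
\E\Bigl[\prod_{j=1}^d P_j(A_j)^{n_j}\Bigr] = \P\bigl(\theta_{j,i}\in A_j \text{ for all } i=1,\dots,n_j,\ j=1,\dots,d\bigr),
\]
because $\P(\theta_{j,i}\in A_j \text{ for all } i \mid P_j) = P_j(A_j)^{n_j}$ and the conditional independence across groups factorises the joint probability. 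The first moment is then immediate: a single draw $\theta_{j,1}$ lands on an atom that is marginally distributed as $P_0$, so $\E[P_j(A)] = \P(\theta_{j,1}\in A) = P_0(A)$. The same identity also follows by conditioning Equation \eqref{eqn:Pj_def} on $(M,\bmS_j)$ and using $\E[\delta_{\tau_m}(A)]=P_0(A)$ together with $\sum_m S_{j,m}/T_j = 1$.

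For part (i) I would take one draw from each of the two groups, so that $\E[P_j(A)P_l(B)] = \P(\theta_{j,1}\in A,\ \theta_{l,1}\in B)$, and decompose according to the value of $K_{(1,1)}$. On the event $\{K_{(1,1)}=1\}$ the two draws share the same atom $\theta^{**}_1\sim P_0$, hence the hitting event becomes $\{\theta^{**}_1\in A\cap B\}$ and contributes $P_0(A\cap B)$; on the complementary event $\{K_{(1,1)}=2\}$ the draws occupy two distinct atoms $\theta^{**}_1,\theta^{**}_2$ that are independent copies of $P_0$, contributing $P_0(A)P_0(B)$. Writing $\P(K_{(1,1)}=2)=1-\P(K_{(1,1)}=1)$ and collecting terms gives
\[
\E[P_j(A)P_l(B)] = P_0(A)P_0(B) + \P(K_{(1,1)}=1)\bigl(P_0(A\cap B)-P_0(A)P_0(B)\bigr),
\]
so that, after subtracting $\E[P_j(A)]\E[P_l(B)]=P_0(A)P_0(B)$, the stated expression for the covariance of $P_j(A)$ and $P_l(B)$ follows (a sanity check with $A=B=\Theta$ confirms both sides vanish). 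A by-product of the conditioning is the identification $\P(K_{(1,1)}=1)=\E\bigl[\sum_{m=1}^M S_{j,m}S_{l,m}/(T_jT_l)\bigr]$, which is convenient for prior elicitation.

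Part (ii) follows the same pattern with $n_j$ draws from $P_j$ and $n_l$ from $P_l$, all tested against the single set $A$, so that $\E[P_j(A)^{n_j}P_l(A)^{n_l}]=\P(\text{all } n_j+n_l \text{ draws fall in } A)$. Conditioning on $\{K_{(n_j,n_l)}=k\}$, all observations lie in $A$ exactly when the $k$ distinct occupied atoms $\theta^{**}_1,\dots,\theta^{**}_k$ lie in $A$; since these $k$ values are i.i.d.\ $P_0$, this conditional probability equals $P_0(A)^k$. Summing over $k=1,\dots,n_j+n_l$ yields $\sum_k P_0(A)^k\,\P(K_{(n_j,n_l)}=k)=\E[P_0(A)^{K_{(n_j,n_l)}}]$, as claimed.

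The main obstacle is the precise justification that, conditionally on $\{K_{(n_j,n_l)}=k\}$, the $k$ distinct atoms occupied by the sample are i.i.d.\ from $P_0$. This rests on the structure of the Vec-NIFPP: in Equation \eqref{eqn:mixture2} the locations $\tau_1,\dots,\tau_M$ are drawn i.i.d.\ from $P_0$ independently of the unnormalised weights $\bmS_j$, and the partition of the sample (hence $K_{(n_j,n_l)}$) is a function of the allocation to components, which depends only on the normalised weights $S_{j,m}/T_j$ and not on the $\tau_m$. Consequently the event $\{K_{(n_j,n_l)}=k\}$ is independent of the locations of the selected atoms, and the diffuseness of $P_0$ guarantees that distinct components correspond to almost surely distinct values, so the $k$ occupied atoms form a genuine i.i.d.\ $P_0$ sample. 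Once this independence is made rigorous, the remaining steps are elementary bookkeeping.
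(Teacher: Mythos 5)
Your proof is correct, but it takes a genuinely different route from the paper. The paper proves both moment formulas analytically: it introduces the gamma-integral augmentation $1/x^{n}=\int_0^\infty u^{n-1}e^{-ux}/\Gamma(n)\,{\rm d}u$, expands the powers of the unnormalized measures as integrals against $\Phi^n$, applies the (higher order) Campbell--Little--Mecke formula together with the reduced Palm characterization of the IFPP, and, for the mixed moment with possibly overlapping sets, invokes a combinatorial decomposition of $\Phi^n$ over partitions; the resulting integrals are then recognized as pEPPFs, so the partition sums collapse into $\P(K_{(n_j,n_l)}=k)$. You instead work entirely at the level of the latent partially exchangeable array: you write each moment as a hitting probability for i.i.d.\ draws, condition on the induced partition, and use the fact that the allocation depends only on the weights while the occupied atoms are i.i.d.\ from the diffuse $P_0$. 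Your argument is shorter and more transparent, and you correctly supply the one nontrivial ingredient (independence of the partition from the atom locations, plus diffuseness of $P_0$ to guarantee distinctness); it also shows the result holds for any a.s.\ discrete vector of random probability measures with this independence structure, not just the Vec-NIFPP. What the paper's longer computation buys is an explicit integral representation of $\E[P_1(A_1)^{n_1}P_2(A_2)^{n_2}]$ for arbitrary, possibly overlapping sets $A_1\neq A_2$ (its Equation for the general mixed moment), which is reused later for the coskewness formula and keeps the whole development inside the Palm-calculus framework used for the pEPPF and the posterior characterization. Finally, you were right to read statement (i) as a covariance: as literally printed the display fails for $A=B=\Theta$, and both your derivation and the paper's subsequent use of it in the correlation formula confirm that the intended left-hand side is $\E[P_j(A)P_l(B)]-\E[P_j(A)]\E[P_l(B)]$.
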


The prior probabilities for the number of clusters can be written in terms of the pEPPF, e.g.,
$\mathbb{P}\left(K_{(1,1)} = 1\right) = \Pi_1^{(2)}\left((1),(1)\right)$.
Furthermore, both Equation \eqref{eqn:mixedmom1} and \eqref{eqn:mixedmom2} can be extended to the case of more than two groups. 
As a byproduct of \Cref{thm:moments} we obtain a closed form expression for pairwise correlation between the components of $(P_1, \ldots , P_d)$ evaluated on specific sets. 
Let $A$ be a measurable set, then, for any $j,l \in \{1,\dots,d\}$:
\begin{equation}
	\label{eqn:corr}
	\begin{split}
		\text{corr} \left( P_j(A),P_l(A)\right)  \ = \ 
		\frac{\mathbb{P}\left(K_{(1,1)} = 1\right)}
		{\sqrt{\mathbb{P}\left(K_{j,(2)} = 1\right)
				\mathbb{P}\left(K_{l,(2)} = 1\right)} },
	\end{split}
\end{equation}
where $\mathbb{P}\left(K_{j,(2)} = 1\right)$ is the prior probability of having a single local cluster out of two observations in group $j$. 
The expression in Equation \eqref{eqn:corr} does not depend on the choice of the set $A$. Thus, it may be considered an overall measure of dependence between the two random probability measures. As a consequence, Equation \eqref{eqn:corr} will be used in  \Cref{section:HMFM_properties} to clarify
the sharing of information mechanism and select the hyperparameters in the noteworthy example of vectors of finite Dirichlet processes.
We also derive a more general formulation of Equation \eqref{eqn:corr}, when the random probability measures are evaluated on different sets, see \Cref{app:correlation}. Furthermore, we present in \Cref{app:coskewness} the analytical formula for the coskewness between $P_j(A)$ and $P_l(A)$.

We now aim at giving a posterior characterization for a vector $\left(P_1,\dots,P_d\right)$ distributed as in Equation \eqref{eqn:joint_Pjs}. Since $\left(P_1,\dots,P_d\right)$ is obtained via normalization of $\left(\mu_1,\dots,\mu_d\right)$, it is sufficient to provide a posterior characterization for the latter vector. 
In order to do this, we follow the same approach of
\citet{jlp2009}, \citet{camerlenghi2019distribution} and \citet{argiento2022annals}.
Thus, we introduce a vector of auxiliary variables $\bmU_n = \left(U_1,\dots,U_d\right)$ such that $U_j\mid T_j\ind \operatorname{gamma}\left(n_j,T_j\right)$, where $T_j = \mu_j(\X)$. 
This is possible since the marginal distribution of $\bmU_n$ does exist, see Section \ref{proof:posterior}.
Hence, conditionally to $\bmU_n$ and to $\left(\bm \theta_1,\dots, \bm \theta_d\right)$, $\left(\mu_1,\dots,\mu_d\right)$ is a superposition of two independent processes, one driving the non-allocated components and the other one driving the allocated components. 
\begin{theorem}[Posterior representation]\label{thm:posterior}
Let $\left(\bm \theta_1,\dots,\bm\theta_d\right)$  be a sample from the statistical model in Equation \eqref{eqn:latent_model}.
Then, the posterior distribution of $\left(\mu_1,\dots,\mu_d\right)$ 
is characterized as  the superposition of two independent 
processes on $(\R^+)^d \times \Theta$:
\[
\left(\mu_1,\dots,\mu_d\right) \mid \bm{\theta}_1, \ldots , \bm \theta_d , \bm{U}_n \myequal{d} \left(\mu^{(a)}_1,\dots,\mu^{(a)}_d\right) + \left(\mu^{(na)}_1,\dots,\mu^{(na)}_d\right),
\text{  \it where:}\]

\begin{itemize}
	\item[(i)] the process of allocated components $\left(\mu^{(a)}_1,\dots,\mu^{(a)}_d\right)$ equals
	\[
	\mu^{(a)}_j = \sum_{k=1}^K S^{(a)}_{j,k}\delta_{\theta^{**}_k}, \text{ as } j=1, \ldots , d,
	\]
	where the random variables $S^{(a)}_{j,k}$, for $j  \in \{ 1,\dots,d \}$ and $k\in\{1,\dots,K\}$, are independent with densities on $\R^+$ given by
	\[
	f_{S^{(a)}_{j,k}}(s) \propto e^{-u_j s}s^{n_{j,k}}h_j(s); 
	\]
	\item[(ii)] the process of non-allocated components $\left(\mu^{(na)}_1,\dots,\mu^{(na)}_d\right)$ is a $\operatorname{Vec-IFPP}\left(q^*_M,H^*,P_0\right)$
	, with
	$H^*({\rm d} \bms) =  H^*_1({\rm d}s_1) \times \dots \times H^*_d({\rm d}s_d)$ and
	\begin{equation*}
		H^*_j({\rm d} s) \propto e^{-u_j s}h_j(s){\rm d} s,  \qquad \quad
		q^*_M(m)\propto q_M(m+K)\frac{(m+K)!}{m!}\prod_{j=1}^d\left( \psi_j\left(u_j\right) \right)^m.
	\end{equation*}
\end{itemize}
\end{theorem}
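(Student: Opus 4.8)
The plan is to compute the joint distribution of the augmented vector $\left(\bmtheta,\bmU_n,\mu_1,\dots,\mu_d\right)$ and then read off the conditional law of $\left(\mu_1,\dots,\mu_d\right)$ given $\left(\bmtheta,\bmU_n\right)$, following the de-normalization strategy of \citet{jlp2009}, \citet{camerlenghi2019distribution} and \citet{argiento2022annals}. First I would exploit the latent-variable representation in Equation~\eqref{eqn:mixture_latent}: conditionally on the measures, the likelihood of a sample exhibiting $K$ distinct values with counts $n_{j,k}$ is $\prod_{j=1}^d\prod_{k=1}^K\left(S_{j,m(k)}/T_j\right)^{n_{j,k}}$, where $m(k)$ is the shared component label carrying the $k$-th distinct value. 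Since $\sum_k n_{j,k}=n_j$, this equals $T_j^{-n_j}\prod_k S_{j,m(k)}^{n_{j,k}}$, and the normalization $T_j^{-n_j}$ is precisely what the auxiliary variables are designed to remove.

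Next I would introduce $\bmU_n$ through the gamma identity $T_j^{-n_j}=\Gamma(n_j)^{-1}\int_0^\infty u_j^{n_j-1}e^{-u_j T_j}\,\mydiff u_j$, so that conditioning on $U_j$ replaces $T_j^{-n_j}$ by $u_j^{n_j-1}e^{-u_j T_j}/\Gamma(n_j)$. The crucial gain is that $e^{-u_j T_j}=\prod_{m=1}^M e^{-u_j S_{j,m}}$ factorizes over components; combined with the i.i.d.\ structure of the weights and atoms given $M$, the joint density of $\left(M,\bmtau,\bmS,\bmtheta,\bmU_n\right)$ becomes a product over the $M$ components, tilted by $e^{-u_j S_{j,m}}$ in every group and additionally weighted by $S_{j,m(k)}^{n_{j,k}}$ for the $K$ components that carry data.

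I would then split the $M$ components into the $K$ allocated ones and the $m:=M-K$ non-allocated ones. For an allocated component matched to cluster $k$, its atom is pinned at $\theta^{**}_k$ and its group-$j$ weight has unnormalized density $e^{-u_j s}s^{n_{j,k}}h_j(s)$; independence across $(j,k)$ is inherited from the product structure, which yields part~(i). For the non-allocated components, the atoms are free and, being integrated against the diffuse $P_0$, contribute a factor of $1$ each, while marginalizing their group-$j$ weights produces the Laplace transform $\psi_j(u_j)$ per component and group. Accounting for the $(m+K)!/m!$ ordered ways of designating which components are the allocated ones matched to $\theta^{**}_1,\dots,\theta^{**}_K$, and summing against the prior $q_M(m+K)$, I recover the tilted count law $q^*_M(m)\propto q_M(m+K)\,(m+K)!/m!\,\prod_{j=1}^d\psi_j(u_j)^m$ together with the tilted weight laws $H^*_j$, so that the non-allocated part is a $\operatorname{Vec-IFPP}(q^*_M,H^*,P_0)$; this is part~(ii). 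A consistency check is that integrating the non-allocated weights out entirely collapses the normalizer of $q^*_M$ into the function $\Psi(K,\bmu)$ of \Cref{thm:peppf}, confirming that the posterior integrand factors exactly as the pEPPF integrand.

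The main obstacle I anticipate is the combinatorial bookkeeping in the last step: the components are exchangeable and unlabeled, so I must carefully count the ways of assigning labels to allocated versus non-allocated components without double counting, and verify that marginalizing the non-allocated weights deposits the factor $\prod_j\psi_j(u_j)^m$ into $q^*_M$ rather than into the conditional weight law --- that is, that the split between ``how many non-allocated points'' and ``what their weights are'' matches the Vec-IFPP definition with the stated normalizations. Establishing the conditional independence of the allocated and non-allocated blocks then reduces to observing that, given $(K,m)$ and the allocation, the joint density is a clean product of an allocated factor and a non-allocated factor.
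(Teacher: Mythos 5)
Your proposal is correct, but it takes a genuinely different route from the paper. The paper proves the theorem at the level of random measures: it writes the conditional Laplace functional $\E[\exp\{-\sum_j\int f_j\,\mydiff\mu_j\}\mid\bmtheta,\bmU_n]$ as a ratio of two expectations, evaluates the numerator with the higher-order Campbell--Little--Mecke formula and the reduced Palm kernel of the IFPP (Theorem S\ref{thm:palm}, which is where the factor $q_M(m+K)(m+K)!/m!$ enters), and then recognizes the resulting product as the Laplace functional of the superposition in (i) and (ii). You instead work with the finite-dimensional parametrization $(M,\bmtau,\bmS)$ directly: augment by $\bmU_n$, factorize $e^{-u_jT_j}$ over components, split the $M$ labels into $K$ allocated and $m$ non-allocated ones with the explicit count $(m+K)!/m!$, and read off the tilted laws. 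Your route is more elementary and makes the combinatorial origin of $q^*_M$ transparent; the paper's route characterizes the posterior law via Laplace functionals (which uniquely determine it) and reuses machinery (Palm calculus, Lemma S\ref{lemma:Laplace2}) already needed for the pEPPF and the moments, so the three theorems share one toolkit. The only point you should make explicit is the disintegration step: since $\bmtheta$ has atoms at the random locations $\tau_{m(k)}$, "conditioning on $\bmtheta$" must be formalized by evaluating the joint law on infinitesimally disjoint neighbourhoods $\mydiff\theta^{**}_k$ and taking ratios, exactly as the paper does; with that in place your allocated/non-allocated factorization and the identification of the normalizer with $\Psi(K,\bmu)$ are sound.
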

Note that for the process of non-allocated component $q^*_M(0)>0$ even if $q_M(0)=0$. Hence, it is possible to have zero non-allocated components.

\section{Properties of the HMFM model and fitting details}
\label{section:HMFM_properties}

In the following, we specialize the theorems presented in \Cref{subsection:distributional} to the case of the $\operatorname{Vec-FDP}$ introduced in Section \ref{section:HMFM}. 
Note that, for $S_{j,m} \sim \operatorname{Gamma}\left(\gamma_j,1\right)$, the expressions of $\kappa_j\left(u_j,n_{j,k}\right)$ and $\psi_j\left(u_j\right)$ defined in \Cref{thm:peppf} are given by
\begin{equation}
\label{eqn:kappa_psi_def_VecFDP}
\psi_j\left(u_j\right) = \frac{1}{\left(1+u_j\right)^{\gamma_j}},
\quad
\kappa_j\left(u_j,n_{j,k}\right) = \frac{1}{\left(1+u_j\right)^{n_{j,k}+\gamma_j}} \frac{\Gamma\left(n_{j,k} + \gamma_j\right)}{\Gamma\left(\gamma_j\right)}.
\end{equation}
Moreover, if $q_M$ is a 1-shifted Poisson distribution, $\Psi(K,\bmu)$ defined in \Cref{thm:peppf} admits the following tractable analytical form,
\begin{equation}
\Psi(K,\bmu) = \Lambda^{K-1} 
\left(K+\Lambda \bmpsi(\bmu)\right) 
e^{-\Lambda\left(1-\bmpsi(\bmu)\right)},
\label{eqn:Psi_def_VecFDP}
\end{equation}
where $\bmpsi(\bmu) = \prod_{j=1}^d \psi_j\left(u_j\right)$.
Equation \eqref{eqn:Psi_def_VecFDP} is derived in \Cref{proof:Psi} of the Supplementary materials.

\subsection{Marginal and posterior representation}
\label{subsection:theorems_for_VecFDP}
Equations \eqref{eqn:kappa_psi_def_VecFDP} and \eqref{eqn:Psi_def_VecFDP}, plugged in \Cref{thm:peppf}, allow to derive the pEPPF induced by the $\operatorname{Vec-FDP}$ prior, that is
\begin{equation}
\Pi_K^{(n)}(\bm{n}_1,\dots,\bm{n}_d) = 
\text{V}(K;\bmgamma,\Lambda) \prod_{j=1}^d\prod_{k=1}^K
\frac{\Gamma\left(n_{j,k} + \gamma_j\right)}{\Gamma\left(\gamma_j\right)},
\label{eqn:peppf_FDP}
\end{equation}
where $\text{V}(K;\bmgamma,\Lambda)$ equals
\begin{equation*} 
\label{eqn:integralV_FDP}
\begin{split}      
	\text{V}(K;\bmgamma,\Lambda) =
	\bigintssss_{(\R^+)^d}
	\Psi(K,\bmu)
	\prod_{j=1}^d \frac{u_j^{n_j-1}}{\Gamma\left(n_j\right)} \frac{1}{\left(1 + u_j\right)^{n_j+K \gamma_j}}
	{\rm d}u_1\dots {\rm d}u_d.
\end{split}
\end{equation*}
Furthermore, this setting is particularly convenient for computing the prior distribution of the global number of clusters.
The following theorem states the result for two groups having cardinalities $n_1,n_2$, respectively. 
\begin{theorem}[Prior distribution of global number of clusters]\label{thm:priorK}
Consider $d=2$ groups of observations with size $n_1$ and $n_2$, respectively. Then, the prior distribution for the global number of clusters $K_{(n_1,n_2)}$ is
\begin{equation*}
	\label{eqn:Kprior_d2}
	\begin{split}
		&\P\left(K_{(n_1, n_2)} = K \right) \\
		& \qquad =
		V(K;\bmgamma,\Lambda)
		\sum_{r_1=0}^K  \sum_{r_2=0}^{K-r_1} 
		\binom{K-r_1}{r_2}\frac{(K-r_2)!}{r_1!}
		\prod_{j=1}^2  \left| C(n_j  , K-r_j ; -\gamma_j ) \right|
	\end{split}
\end{equation*}
where for any non-negative integers $n\geq 0$ and $0\leq K \leq n$,  $C(n, K; -\gamma_j )$ denotes the central generalized factorial coefficients.
See \citep{chara2002}.
\end{theorem}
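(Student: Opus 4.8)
The plan is to obtain $\P(K_{(n_1,n_2)}=K)$ by summing the pEPPF of Equation \eqref{eqn:peppf_FDP} over all admissible cluster-count configurations. Writing $\bmn_j=(n_{j,1},\dots,n_{j,K})$, a global partition with $K$ (unlabelled) clusters and count signature $(n_{j,k})$ is realised in $\prod_{j=1}^2 n_j!/\prod_{k=1}^K n_{j,k}!$ ways, one multinomial factor per group since observations are allocated independently across groups. Because the pEPPF is invariant under relabelling of the $K$ global clusters (the distinct values $\theta^{**}_k$ are i.i.d.\ from $P_0$), I would start from
\[
\P(K_{(n_1,n_2)}=K)=\frac{1}{K!}\sum_{(\bmn_1,\bmn_2)}\ \prod_{j=1}^2\frac{n_j!}{\prod_{k=1}^K n_{j,k}!}\ \Pi_K^{(n)}(\bmn_1,\bmn_2),
\]
where the sum runs over $n_{j,k}\ge 0$ with $\sum_k n_{j,k}=n_j$ for each $j$, subject to the global non-emptiness constraint $n_{1,k}+n_{2,k}\ge 1$ for every $k$, cf.\ Equation \eqref{eqn:cluster_vincoli}. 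Substituting the factorised form \eqref{eqn:peppf_FDP} and noting that empty cells contribute trivially (since $\Gamma(0+\gamma_j)/\Gamma(\gamma_j)=1$), the factor $\mathrm V(K;\bmgamma,\Lambda)$ pulls out of the sum, leaving a purely combinatorial sum over the configurations.

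The delicate part, and the main obstacle, is this combinatorial sum, because the coupling constraint $n_{1,k}+n_{2,k}\ge 1$ prevents a direct factorisation across groups and across $k$. I would handle it by classifying clusters according to their occupation pattern: let $\mathcal O_j\subseteq\{1,\dots,K\}$ be the set of clusters occupied by group $j$ (those with $n_{j,k}\ge 1$), so that the constraint reads exactly $\mathcal O_1\cup\mathcal O_2=\{1,\dots,K\}$. Setting $K_j=|\mathcal O_j|$ and $r_j=K-K_j$, for fixed $\mathcal O_1,\mathcal O_2$ the sum over the strictly positive counts factorises group-by-group and is evaluated through the identity for central generalized factorial coefficients,
\[
\sum_{\substack{m_1+\dots+m_\ell=n\\ m_i\ge 1}}\frac{n!}{\prod_i m_i!}\prod_{i=1}^\ell\frac{\Gamma(m_i+\gamma)}{\Gamma(\gamma)}=\ell!\,\bigl|C(n,\ell;-\gamma)\bigr|,
\]
which follows from the generating function $\bigl((1-x)^{-\gamma}-1\bigr)^\ell=\ell!\sum_n|C(n,\ell;-\gamma)|\,x^n/n!$, see \cite{chara2002}. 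This yields $K_j!\,|C(n_j,K_j;-\gamma_j)|$ for each group and depends on $\mathcal O_1,\mathcal O_2$ only through their sizes.

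What remains is to count the pairs $(\mathcal O_1,\mathcal O_2)$ with $|\mathcal O_j|=K-r_j$ and $\mathcal O_1\cup\mathcal O_2=\{1,\dots,K\}$; equivalently, to choose the two disjoint complements $\mathcal O_1^{c},\mathcal O_2^{c}$ of sizes $r_1,r_2$, which can be done in $\binom{K}{r_1}\binom{K-r_1}{r_2}=K!/\bigl(r_1!\,r_2!\,(K-r_1-r_2)!\bigr)$ ways, forcing $r_2\le K-r_1$. Collecting factors, the leading $1/K!$ cancels against this count while the $(K-r_j)!=K_j!$ terms survive, producing precisely the double sum $\sum_{r_1=0}^K\sum_{r_2=0}^{K-r_1}\frac{(K-r_1)!(K-r_2)!}{r_1!\,r_2!\,(K-r_1-r_2)!}\prod_{j=1}^2|C(n_j,K-r_j;-\gamma_j)|$ appearing in the statement.

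Finally I would identify the prefactor $\mathrm V(K;\bmgamma,\Lambda)$ with the $m$-series. Inserting $\Psi(K,\bmu)=\sum_{m\ge 0}q_M(m+K)\frac{(m+K)!}{m!}\prod_j(1+u_j)^{-\gamma_j m}$ from \Cref{thm:peppf} (with $\psi_j(u_j)=(1+u_j)^{-\gamma_j}$) into the integral defining $\mathrm V(K;\bmgamma,\Lambda)$, interchanging sum and integral (justified by nonnegativity via Tonelli), the integral factorises across $j$, and each one-dimensional piece is a Beta integral, $\int_0^\infty u^{n_j-1}(1+u)^{-(n_j+\gamma_j(m+K))}\,{\rm d}u=\Gamma(n_j)\Gamma(\gamma_j(m+K))/\Gamma(n_j+\gamma_j(m+K))$, which cancels the $1/\Gamma(n_j)$ prefactor and gives $\mathrm V(K;\bmgamma,\Lambda)=\sum_{m\ge 0}\frac{(m+K)!}{m!}q_M(m+K)\prod_{j}\Gamma(\gamma_j(m+K))/\Gamma(\gamma_j(m+K)+n_j)$. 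Multiplying this $m$-series by the combinatorial double sum reproduces the claimed formula. The generating-function identity and the Beta integral are routine; the genuine work is the bookkeeping of the global non-emptiness constraint, which for $d=2$ reduces neatly to the three occupation types encoded by $r_1,r_2$ (a general $d$ would instead require an inclusion--exclusion over all $2^d-1$ patterns).
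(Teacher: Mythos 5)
Your proposal is correct and follows essentially the same route as the paper: express $\P(K_{(n_1,n_2)}=K)$ as a sum of the pEPPF over admissible count configurations weighted by multinomial coefficients and $1/K!$, classify clusters by which groups occupy them (the paper phrases this as counting the zeros $r_1,r_2$ in the two count vectors, giving the same $\binom{K}{r_1}\binom{K-r_1}{r_2}$ factor), and collapse the sums over strictly positive counts via the defining identity of the central generalized factorial coefficients. Your explicit Beta-integral evaluation of $\mathrm V(K;\bmgamma,\Lambda)$ into the $m$-series is a step the paper leaves implicit but is the same computation.
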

Being able to derive an explicit expression for the prior distribution of the number of clusters allows us to evaluate the mixed moment formula given in Equation \eqref{eqn:mixedmom2}. Moreover, for a Vec-FDP,
we leverage on the explicit formulation for the pEPPF given in Equation \eqref{eqn:peppf_FDP} to further develop the computation of 
the correlation between $P_j(A)$ and $P_l(A)$ given in Equation \eqref{eqn:corr}. Then, the latter is equal to
\begin{equation}
\label{eqn:corr_Dir}
\text{corr}\left(P_j(A),P_l(A)\right) \ = \ 
\frac{1 - e^{-\Lambda}}
{\Lambda \left(\gamma_j+1\right)\left(\gamma_l+1\right)I\left(\gamma_j,\Lambda\right)I\left(\gamma_l,\Lambda\right)},
\end{equation}
where $I\left(\gamma_j,\Lambda\right) = \int_0^1\left(1+\Lambda x\right)
e^{-\Lambda(1-x)}(1-x^{1/\gamma_j}) {\rm d} x$. 
The limits of Equation \eqref{eqn:corr_Dir} when both $\gamma_j$ and $\gamma_l$ goes to $0$ and $+\infty$ equal to
\begin{equation}
\label{eqn:corr_limits}
\lim_{\gamma_j,\gamma_l \rightarrow 0} 
\text{corr}\left(P_j(A),P_l(A)\right) \ = \ \frac{1 - e^{-\Lambda}}{\Lambda},\quad
\lim_{\gamma_j,\gamma_l \rightarrow \infty} 
\text{corr}\left(P_j(A),P_l(A)\right) \ = \ 1.
\end{equation}
These limits are interesting because we see that, given $\Lambda$, decreasing $\gamma_j$ and $\gamma_l$, the correlation does not go to $0$ but reaches a lower bound that depends on $\Lambda$, which, in turn, goes to $0$ if $\Lambda$ increases.  
On the other hand, increasing values of $\gamma_j$ and $\gamma_l$ lead correlation equal to $1$, regardless of the value of $\Lambda$. See the left panel in \Cref{fig:cor_main}.
We refer to \Cref{app:corr_limiting} of the Supplementary materials for the proofs of Equations \eqref{eqn:corr_Dir} and \eqref{eqn:corr_limits}. Finally, \Cref{cor:posterior_FDP} provides the particular case of the posterior representation given in \Cref{thm:posterior}. 

\begin{figure}
\centering
\centering
\includegraphics[width=0.49\textwidth]{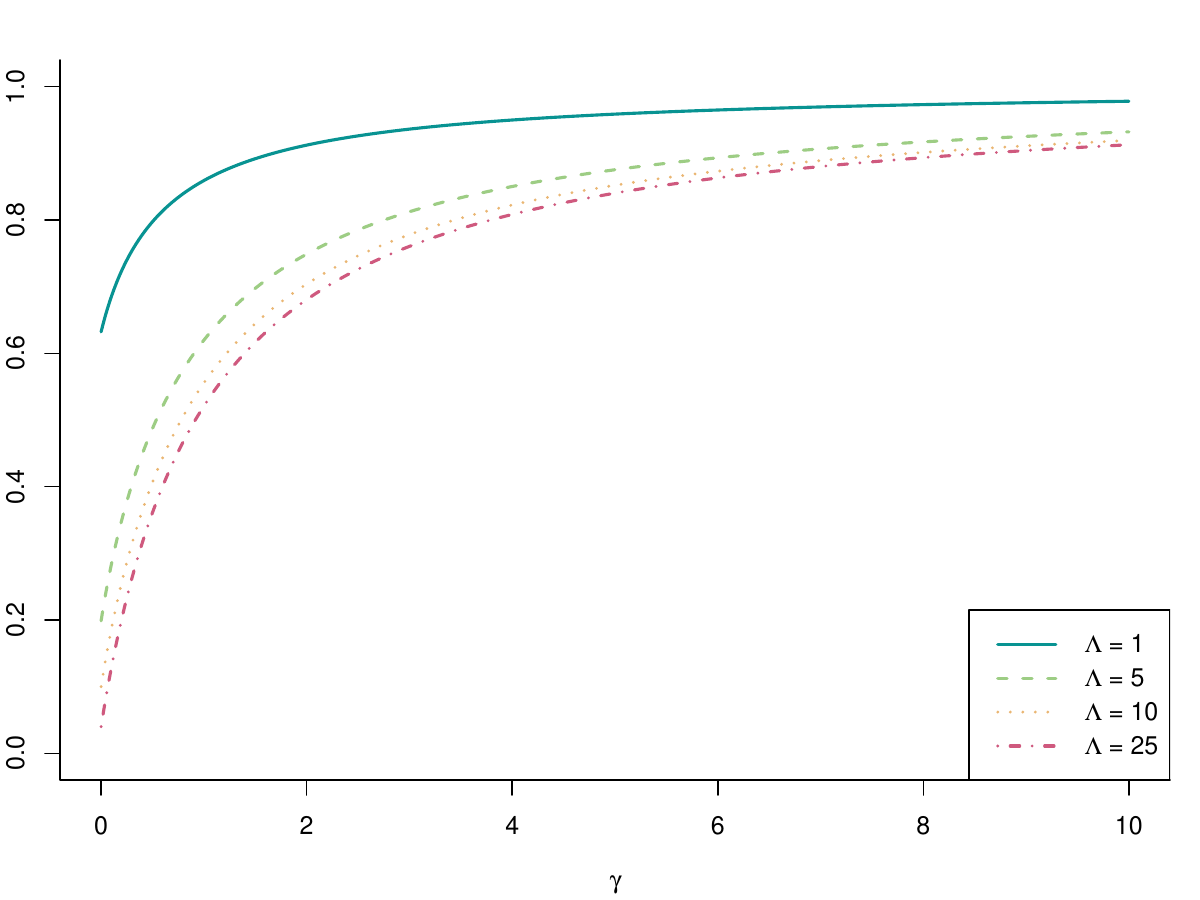}
\hfill
\includegraphics[width=0.49\textwidth]{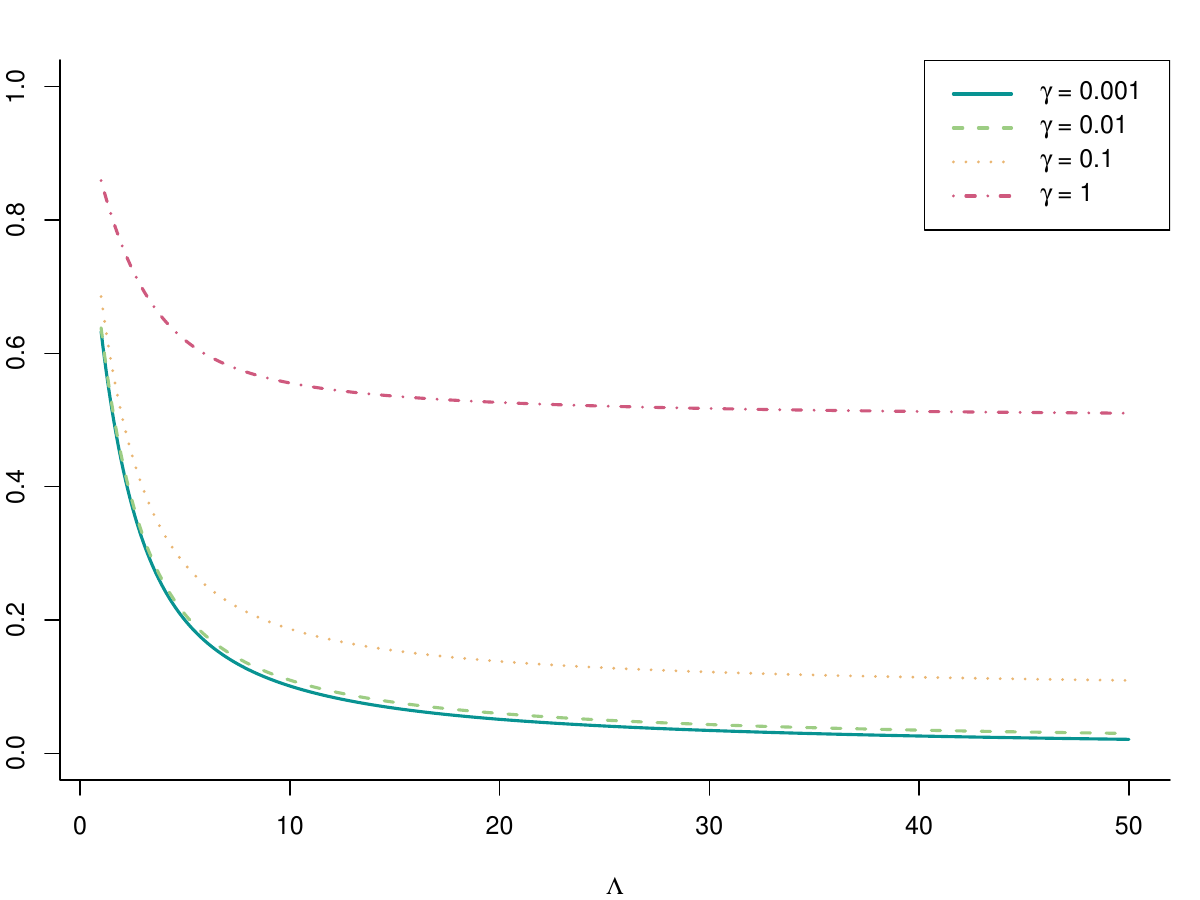}
\caption{Left panel: correlation for $\gamma_1 = \gamma_2$ varying over a grid of values. Each curve is obtained by fixing $\Lambda$ to the values reported in the legend. 
	Right panel: correlation for $\Lambda$ varying over a grid of values. Each curve is obtained by fixing $\gamma_1 = \gamma_2$ to the values reported in the legend. }
\label{fig:cor_main}
\end{figure}

\begin{corollary}
\label{cor:posterior_FDP}
If $(P_1,\dots,P_d) \sim \operatorname{Vec-FDP}\left(\Lambda,\bmgamma,P_0\right)$ the representation of \Cref{thm:posterior} holds true with the following specifications:  
\begin{itemize}
	\item[(i)] 
	the random variables 
	$S^{(a)}_{j,k} \ind \operatorname{Gamma}(n_{j,k} + \gamma_j, u_j + 1)$, for each $j \in \{1,\dots,d\}$ and $k \in \{ 1,\dots,K\}$;
	\item[(ii)] 
	the random variables 
	$S^{(na)}_{j,1},\dots,S^{(na)}_{j,M^{*}} \mid M^{*} \iid  \operatorname{Gamma}(\gamma_j, u_j + 1)$, for $j \in\{ 1,\dots,d\}$, 
	\item[(iii)] 
	$q^*_{M}( \, \cdot \,)$ is a distribution over non-negative integers $\{0,1,2,\dots\}$ given by a mixture of Poisson distributions, namely, $q^*_{M}(m) $ equals
	\begin{equation*}
		\label{eqn:qstar}
		\begin{split}  
			(1- w_K (\bm{u}))\operatorname{Pois}_1\left(\Lambda \prod_{j=1}^d \psi_j\left(u_j\right)\right) 
			+
			w_K (\bm{u}) \operatorname{Pois}\left(\Lambda \prod_{j=1}^d \psi_j\left(u_j\right)\right)
		\end{split}
	\end{equation*}
	where we have set $w_K (\bm{u}):= K/(\Lambda \prod_{j=1}^d \psi_j\left(u_j\right) + K )$.
\end{itemize}
\end{corollary}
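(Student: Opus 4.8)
The plan is to specialize \Cref{thm:posterior} to the Vec-FDP by substituting the Gamma weight densities $h_j(s) = s^{\gamma_j-1}e^{-s}/\Gamma(\gamma_j)$ and the $1$-shifted Poisson law $q_M$ into the three ingredients of the general posterior representation, and then to simplify each resulting expression until the claimed closed forms in (i)--(iii) emerge. Parts (i) and (ii) are immediate kernel-matching computations, whereas part (iii) requires a genuine simplification and is where the only real work lies.

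For part (i), I would start from the allocated-weight density $f_{S^{(a)}_{j,k}}(s) \propto e^{-u_j s} s^{n_{j,k}} h_j(s)$ supplied by \Cref{thm:posterior}(i). Inserting the Gamma density gives $f_{S^{(a)}_{j,k}}(s) \propto s^{n_{j,k}+\gamma_j-1} e^{-(u_j+1)s}$, which is recognized at once as the kernel of a $\operatorname{Gamma}(n_{j,k}+\gamma_j,\, u_j+1)$ law, while the independence across $(j,k)$ is inherited verbatim from the theorem. For part (ii), the same substitution into the tilted base measure $H^*_j(\mathrm{d}s) \propto e^{-u_j s} h_j(s)\,\mathrm{d}s$ yields $H^*_j(\mathrm{d}s) \propto s^{\gamma_j-1} e^{-(u_j+1)s}\,\mathrm{d}s$, the kernel of $\operatorname{Gamma}(\gamma_j,\, u_j+1)$, so conditionally on $M^{*}$ the non-allocated weights are i.i.d.\ from that law.

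Part (iii) is the main step. Substituting $q_M(m+K) = e^{-\Lambda}\Lambda^{m+K-1}/(m+K-1)!$ into $q^*_M(m) \propto q_M(m+K)\frac{(m+K)!}{m!}\prod_{j=1}^d\psi_j(u_j)^m$ and writing $\lambda^{*} := \Lambda \prod_{j=1}^d\psi_j(u_j)$, I would cancel factorials and absorb the constant factor $\Lambda^{K-1}$ into the normalization to obtain $q^*_M(m) \propto (m+K)\,(\lambda^{*})^m/m!$. The decisive move is to split the factor $m+K$ into its two summands and handle them separately: the $m$-piece becomes $\lambda^{*}(\lambda^{*})^{m-1}/(m-1)!$, proportional to a $\operatorname{Pois}_1(\lambda^{*})$ mass (vanishing at $m=0$), while the $K$-piece is proportional to an ordinary $\operatorname{Pois}(\lambda^{*})$ mass. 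Computing the normalizing constant via $\sum_m m(\lambda^{*})^m/m! = \lambda^{*} e^{\lambda^{*}}$ and $\sum_m (\lambda^{*})^m/m! = e^{\lambda^{*}}$ gives $(\lambda^{*}+K)e^{\lambda^{*}}$, and dividing each piece by this constant produces exactly the two mixing weights $\lambda^{*}/(\lambda^{*}+K)$ and $K/(\lambda^{*}+K)$ asserted in the statement.

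I expect no serious obstacle, only two bookkeeping points that must be handled carefully: tracking the shifted index so that the $m=0$ term feeds the ordinary Poisson component but contributes nothing to the shifted one, and verifying that the $\Lambda^{K-1}$ prefactor is legitimately dropped into the normalizing constant. Once the two elementary Poisson sums above are in place, the mixture representation of $q^*_M$ follows directly, completing the specialization.
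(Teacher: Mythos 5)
Your proposal is correct and matches the paper's (implicit) derivation: parts (i) and (ii) are the same direct kernel-matching substitutions into \Cref{thm:posterior}, and your part (iii) — splitting $(m+K)(\lambda^*)^m/m!$ into a shifted-Poisson piece and an ordinary-Poisson piece with total mass $(\lambda^*+K)e^{\lambda^*}$ — is exactly the series manipulation the paper carries out when deriving Equation \eqref{eqn:Psi_def_VecFDP} in \Cref{proof:Psi}. No gaps.
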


\subsection{Predictive distribution and franchise metaphor}
\label{subsection:CRFP}
Further intuitions on the cluster mechanism described in \Cref{subsection:clustering} are available when considering the predictive distributions. 
Consider a realization $\left(\bmtheta_1,\dots,\bmtheta_d\right)$ with $K$ distinct values $\left(\theta^{**}_1,\dots,\theta^{**}_K\right)$ and a partition $\rho = \left\{C_1,\dots,C_K\right\}$ with counts $\left(\bmn_1,\dots,\bmn_d\right)$ satisfying the constraints in Equation \eqref{eqn:cluster_vincoli}.
Following the approach of \citet{jlp2009}, \citet{favaroteh2013} and \citet{argiento2022annals} we work conditionally to $\bmU_n = \bmu$. Then, for each group, say $j$, we have 
\begin{equation}
\label{eqn:predittiva_FDP}
\begin{split}     
	&\mathbb{P}\left(\theta_{j,n_j+1} \in \cdot \mid \bmtheta_1,\dots,\bmtheta_d, \bmu, \bmgamma, \Lambda\right) \\
	& \qquad \qquad  \propto 
	\sum_{k=1}^K \left(n_{j,k} + \gamma_j\right)\delta_{\theta^{**}_k}(\cdot) 
	+  
	\bmpsi(\bmu)
	\gamma_j\Lambda
	\frac{K+1+\Lambda\bmpsi(\bmu)}{K+\Lambda\bmpsi(\bmu)}
	P_0(\cdot) .
\end{split}
\end{equation}
Such a predictive distribution can be interpreted in terms of a restaurant franchise metaphor. Consider a franchise of $d$ Chinese restaurants each with possibly infinitely many tables. Here  $\theta_{j,i}$ represents the dish served to  customer $i$ in restaurant $j$, and each $\theta^{**}_k$ represents a dish. 
All customers sitting at the same table must eat the same dish.
The same dish can not be served at different tables in the same restaurant, but it can be served across different restaurants.  According to the predictive law, 
the first customer entering the first restaurant sits at the first table eating dish $\theta_{1,1}=\theta^{**}_1$, which is drawn from $P_0$. At the same time, an empty table serving dish $\theta^{**}_1$ must be prepared in all the other restaurants:  this step corresponds to the first cluster allocation, i.e.,  $C_1$. Then, the second customer of the first restaurant arrives and can either: (i) sit at the same table as the first customer, with probability proportional to $1+\gamma_1$ or (ii) sit at a new table with probability proportional to $\bmpsi(\bmu)\gamma_1\Lambda\frac{2+\Lambda\bmpsi(\bmu)}{1+\Lambda\bmpsi(\bmu)}$. In the latter case, the customer chooses a new dish $\theta^{**}_2$, drawn from $P_0$, and the number of clusters $K$ is increased by $1$; moreover, an empty table serving dish $\theta^{**}_2$ must be prepared in all the other restaurants of the franchise. Then, the process evolves according to Equation \eqref{eqn:predittiva_FDP}.
\Cref{fig:tavoli} in the Supplementary materials displays a graphical representation of the process.

Interestingly, our model is more parsimonious than the HDP by \citet{teh2006hierarchical} in sharing information across restaurants. 
While the HDP relies on the popularity of a dish throughout the entire franchise to influence a new customer's choice, 
in our model, such probability depends on the sample only through the dish's popularity within the specific restaurant the customer enters.
This distinctive feature proves to be appealing as it mitigates the excessive borrowing of information across groups that is induced by hierarchical processes. Subsequent numerical experiments highlight the advantage of our model by showing that the HDP can lead to misleading results in posterior inference.
An additional advantage of our model with respect to the HDP is that
different tables within the same restaurant cannot serve the same dish. This simplifies the local clustering structure, and it improves the computational efficiency in posterior sampling, as discussed in \Cref{subsection:computational}.

\subsection{Hyperpriors} 
\label{subsection:hyperpriorsHMFM}
We consider the following hyperpriors for the process hyperparameters $(\Lambda,\bmgamma)$, 
\begin{equation}
\pi(\Lambda,\bmgamma) = \pi(\bmgamma\mid\Lambda)\pi(\Lambda) = \prod_{j=1}^d
\operatorname{Gamma}(a_\gamma,\Lambda b_\gamma)\times\operatorname{Gamma}(a_\Lambda,b_\Lambda).
\label{eqn:prior_Lambda_gamma}
\end{equation}
The prior distribution in Equation \eqref{eqn:prior_Lambda_gamma} 
extends, to our setting, the prior choice introduced by \citet{here2inf} 
to encourage sparsity in the mixture, whose advantages have been studied both theoretically  \citep{Rousseau2011,Rousseau2015} and empirically \citep{sparsemix2016,mixofmix2017}. 
Furthermore, this prior formulation assumes the $\gamma_j$s to be conditionally independent given $\Lambda$, so tuning the sharing of information between groups,  
see also \Cref{fig:cor_main}. 
In particular, note that $\Lambda\mid \bmgamma$ is still $\operatorname{Gamma}$ distributed, i.e.,
$
\Lambda\mid \bmgamma \sim \operatorname{Gamma}\left(a_\Lambda + d a_\gamma, b_\Lambda + b_\gamma\sum_{j=1}^d \gamma_j\right),
$
which yields tractable posterior inference. 
We provide practical guidelines for setting the values of hyperparameters $(a_\gamma,b_\gamma,a_\Lambda,b_\Lambda)$ 
by exploiting the equivalent sample principle \citep{diaconis1979}. To this end, we  design a suitable reparametrization of the prior based on three quantities,
$\Lambda_0$, $\text{V}_\Lambda$ and $\gamma_0$. The first two quantities represent the prior expected value and variance of $\Lambda$, respectively, while $\gamma_0$ represent a common prior guess on $\gamma_j$. 
Thus, the new specification relies on hyperparameters that are easy to interpret and allows to elicit prior knowledge, when available.
Further details are presented in \Cref{app:hyperparam} of the Supplementary materials.

\subsection{Computational methods and simulation}
\label{subsection:computational}
As usual in hierarchical modelling, we 
introduce latent allocation vectors 
$\bmc_j = \left(c_{j,1},\dots,c_{j,n_j}\right)$ 
whose element $c_{j,i}\in\{1,\dots,M\}$ denotes to which component observation $y_{j,i}$ is assigned, for each $j = 1,\dots,d$. 
Setting $\theta_{j,i} = \tau_{c_{j,i}}$, we are able to link the mixture parameters and the observation-specific parameters. 
We suggest two MCMC strategies to carry out posterior inference for mixture modelling. The first one is a conditional algorithm
that provides full Bayesian inference on both the mixing parameters $(P_1,\dots,P_d)$ and the clustering structure $\rho$. 
Namely, we draw a  sample of the vector of random probability measures $\left(P_1,\dots,P_d\right)$ from its posterior distribution given in \Cref{cor:posterior_FDP} by sampling  from the joint posterior distribution of 
$\left(\bmS_1,\dots,\bmS_d,\bmtau,\bmc_1,\dots,\bmc_d,M \right)$.
Note that the global number of clusters $K$ is automatically deduced from the cluster allocation vectors $\left(\bmc_1,\dots,\bmc_d\right)$. 
To do so, we resort to auxiliary variables $\bmU_n$ and the hyperparameters $\left(\Lambda,\bmgamma\right)$. 
For the sake of brevity, we denote $\bmS = (\bmS_1,\dots,\bmS_d)$ and $\bmc = (\bmc_1, \dots, \bmc_d)$. We adopt a blocked Gibbs sampling strategy. In particular, let 
$\Delta = \left(\bmS,\bmtau,\bmc,M,\bmU,\bmgamma,\Lambda\right)$ be a vector collecting all the parameters and let $\bmy$ be the collection of all variables $y_{j,i}$, for each group $j$ and for each individual $i$. We partition $\Delta$ in two blocks $\Delta_1 = \left(\bmS,\bmtau,\bmc,\Lambda\right)$ and $\Delta_2 = \left(\bmU,M,\bmgamma\right)$.
The sampling scheme proceeds by iterating two steps: (i) sampling $\Delta_1$ conditionally to $\Delta_2$ and $\bmy$;  (ii) sampling $\Delta_2$ conditionally to $\Delta_1$ and $\bmy$. We refer to \Cref{app:MCMC_conditional} for a step-by-step description of the algorithm, where all full conditional distributions are detailed.

The second algorithm is a marginal sampler that simplifies the computation by integrating the mixture parameters while providing inference on the sole clustering structure. 
The algorithm is derived from the predictive distributions detailed in \Cref{subsection:CRFP}, and the full conditional distribution of $\bmU_n$, which is obtained as a byproduct of \Cref{thm:posterior}, see \Cref{app:full_U}. A detailed description of the marginal algorithm is given in \Cref{app:MCMC_marginal}.

In \Cref{app:simstudy} we present an extensive simulation study comparing the HMFM in Equation \eqref{eqn:HMFM_definition} with the HDP \citep{teh2006hierarchical} mixture model as well as with the MFM model, that is assumed independently for each group and fitted using the algorithm by \citet{argiento2022annals}.
Regarding the HMFM, we investigate the performance of both the conditional (HMFM-cond) and the marginal (HMFM-marg) sampler. 
In summary, the simulation study consists of three experiments: the first one clearly shows the advantage of the joint modelling approach against independent, group-specific analyses. 
The second experiment is an illustrative example that highlights the limitations of the HDP  in situations where borrowing information from other groups can lead to misleading conclusions. Rather, this issue is mitigated by the HMFM that borrows 
less information across the groups relative to the HDP. The third experiment generates data from $d=15$ groups and further evidences how the HMFM outperforms the HDP.
The results are given in \Cref{app:simstudy}.

Finally, we show that our model outperforms the HDP in terms of computational efficiency. Indeed, given a fixed number of clusters and groups, a single iteration of the direct allocation scheme for the HDP takes $O(n^2)$ time, limiting, eventually, its feasibility in practice.
In contrast, the HMFM offers comparable results at a reduced computational burden as it achieves linear scaling, taking $O(n(M+K))$ and $O(n(K+1))$ time for one iteration of the conditional and marginal sampler, respectively.
A theoretical derivation of the computational times and their empirical validation are deferred to \Cref{app:MCMC} and \Cref{app:computational_time}, respectively.
The R code implementing both MCMCM algorithms is available at \url{https://anonymous.4open.science/r/GDFMM-F23C/README.md}, along with the simulation study.

\section{Analysis of shot put data} 
\label{section:application}

Shot put is a track and field event in which athletes throw a heavy spherical ball, known as the shot, as far as possible. Our dataset comprises measurements, specifically the throw lengths or marks, recorded during professional shot put competitions from 1996 to 2016, for a total of $ 35,637$ measurements on $403$ athletes.
Each athlete's record includes the mark achieved, competition details, and personal information, namely, age, gender, and whether the event took place indoors or outdoors. The analyzed data are publicly available (\url{www.tilastopaja.eu}).
Our objective is to model the seasonal performance for each shot putter, interpreted as the mean and variance of his/her seasonal marks. In particular, the season number assigned to each observation corresponds to the number of seasons the athlete has participated in, excluding seasons where he/she did not compete. For example, season 1 represents the athlete's first active season. This grouping of observations into seasons reflects the athletes' years of experience.
\Cref{fig:application_traiettorie} in the Supplementary materials visually illustrates the performance evolution throughout the career of four randomly selected shot putters from the dataset. Each athlete has different participation in competitions, and the length and trajectory of their performance careers vary. 
While performance is expected to vary over the athlete's career, the figure evidences that the performance remains relatively consistent within each season. 
We characterized the seasonal performances as arising primarily from random fluctuations around a mean value. The values of this mean and the associated variability are unknown and are inferred from the data. Although it is a simplified representation, this captures the essential characteristics of athletes' careers.

In a previous study, \citet{montagnaSMA} employed a GARCH model to account for the volatility clustering of athletes' results over time. Rather, in this work, we frame the data into a hierarchical structure 
where each season represents a different group. Hence, we assume the HMFM model described in \Cref{section:HMFM} for analyzing the athletes' performance; the proposed model allows us to capture the variability among different seasons and clustering the performances both within the seasons and across them.

Let $n_{j}$ be the number of athletes competing in season $j$, with $j = 1,\ldots,d$. The longest career consists of $15$ seasons, which is then the total number of groups $d=15$. 
Each active athlete $i$ in season $j$, with $i = 1,\ldots,n_{j}$, takes part in $N_{j,i}$ events. At each event, indexed by $h = 1,\ldots,N_{j,i}$, the athlete's mark $y_{j,i,h}$ is measured. 
Moreover, $r$ event-specific covariates are available,
$\bmx_{j,i,h} \in \R^r$, and collected in 
the design matrices $X_{j,i}\in \R^{N_{j,i}\times r}$.

Assuming that observations are noisy measurements of an underlying athlete-specific function, the model we employ for these data is
$
y_{j,i,h} = \mu_{j,i} + X_{j,i}\bmbeta_j + \varepsilon_{j,i,h}, 
$
with $\varepsilon_{j,i,h} \iid \text{N}\left(0, \sigma^2_{j,i} \right)$, where 
$\mu_{j,i}$ is a season-specific random intercept,  $\bmbeta_j$ is a $r$-dimensional vector of regression parameters, shared among all the athletes in season $j$, and $\sigma^2_{j,i}$ denotes the error variance.  
Therefore, within each season $j$, the
athlete's observations $\bmy_{j,i} = (y_{j,i,1},\dots,y_{j,i,N_{j,i}})$ are distributed as 
\begin{equation}
\label{eqn:shotput_likelihood}
\bmy_{j,i} \mid \mu_{j,i}, \sigma^2_{j,i}, \bmbeta_j, X_{i,j} \ind \text{N}_{N_{j,i}}\left(\mu_{j,i}\mathbf{1}_{N_{j,i}} + X_{j,i} \bmbeta_j, \sigma^2_{j,i}\mathbf{I}_{N_{j,i}}\right),
\end{equation}
where $\text{N}_{N_{j,i}}$  denotes the $N_{j,i}$-dimensional normal distribution, $\mathbf{1}_{N_{j,i}}$ is a vector of length $N_{j,i}$ with all entries equal to $1$ and $\mathbf{I}_{N_{j,i}}$ is the identity matrix of size $N_{j,i}$. To ensure  identifiability, observations $\bmy_{j,i}$ have been centered within each season, i.e., $\sum_{i=1}^{n_j}\sum_{h=1}^{N_{j,i}} y_{j,i,h} = 0$ for each $j$. 

Letting $\theta_{j,i} = \left(\mu_{j,i}, \sigma^2_{j,i}\right)$, we place a $\operatorname{Vec-FDP}$ prior for $\theta_{j,i}$ so that 
a clustering of athletes' performances both within and across different seasons is achieved. 
We assume a multivariate normal prior distribution for the regression coefficients, whose prior mean is denoted by $\bmbeta_0$ and covariance matrix $\Sigma_0$.
We define $\bmy$ as the collection of all observations across seasons $j$ and athletes $i$.
Based on evidence from a previous analysis \citep{montagnaSMA} and for ease of interpretation, we use only gender as a covariate in our analysis. In particular, we use male athletes as reference baseline and set $\bmbeta_0 = -2\mathbf{1}_d$ and $\Sigma_0 = \mathbf{I}_d$ expecting male athletes to throw longer than females. 


We set the base probability measure $P_0$ for $\theta=\left(\mu,\sigma^2\right)$ to be a $\operatorname{Normal-InvGamma}$, exploiting the conjugacy with the likelihood in Equation \eqref{eqn:shotput_likelihood}. In particular, the $\operatorname{Normal-InvGamma}$ distribution is parametrized as is \citet{hoff}, i.e.,
\[
\left(\mu,\sigma^2\right) \ \sim \ 
\operatorname{InvGamma}(\mu_0,k_0,\nu_0,\sigma^2_0) = 
\text{N}\left(\mu_0,\frac{\sigma^2}{k_0}\right)\times
\operatorname{Gamma}\left(\frac{\nu_0}{2},\frac{\nu_0}{2}\sigma^2_0\right).
\]
Following the approach of \citet{Richardson1997} and \citet{reinforcementBNP2007}, we set 
$\mu_0 = 0$ and $k_0 = \frac{1}{\text{range}(\bmy)^2}$.
Then, we set $\nu_0 = 4$ and and $\sigma^2_0 = 10$ to have a vague $\operatorname{InvGamma}$ with infinite variance. 
For the process hyperparameters $\Lambda$ and $\bmgamma$, we set the hyperprior in Equation \eqref{eqn:prior_Lambda_gamma}.
To achieve sparsity in the mixture, we follow the approach in \Cref{app:hyperparam}, and set $\Lambda_0 = 25$, $\text{V}_\Lambda = 3$ and $\gamma_0 = 1/\sum_{j=1}^d n_j = 0.00027$, leading to $a_\gamma = 13.89$, $b_\gamma=2007.78$, $a_\Lambda=208.33$, $b_\Lambda=8.33$. The complete formulation of the hierarchical model can be found in \Cref{app:shotput} of the Supplementary materials.
The burn-in period has been set equal to $50,000$, then $200,000$ additional iterations were run with a thinning of $10$. The initial partition has been set using the k-means on the pooled dataset with 20 centres. Posterior analysis is not sensible for such a choice. 

Figure \ref{fig:betas} shows the posterior 95\% credible intervals of 
the regression coefficients. The posterior distribution is concentrated on negative values, meaning that the athletes' marks are, on average, higher for males than females.  
Also, the effect of gender on athletes' performance is significantly different across seasons, e.g., it is more evident in the first years and it reduces over career years, with the exception of the final seasons.

\begin{figure}[ht]
\centering
\includegraphics[width=0.66\textwidth]{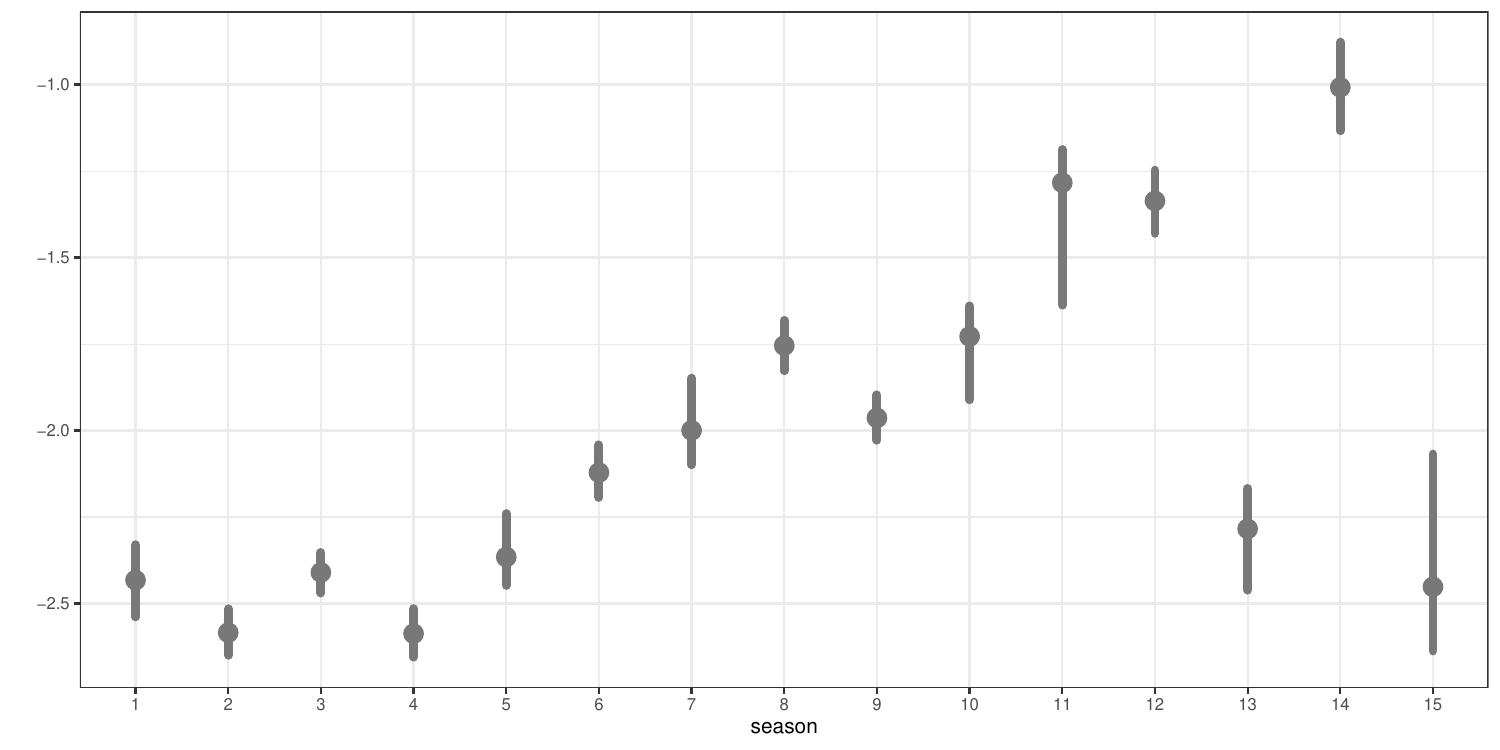}
\caption{$95\%$ posterior credible intervals 
of season-specific coefficients      $\bmbeta$'s.
}
\label{fig:betas}
\end{figure}

The final clustering has been obtained through minimization of the variation of information \citep{wade2018, dahl2022} loss function and it consists of $15$ clusters, which is also the posterior mode of the number of clusters $K$. Among these, we identify $12$ main clusters since three of them capture noisy observations and cannot be interpreted.
The estimated clusters have been relabeled so that the first cluster is the one with the highest mean value among the males’ observations. A remarkable finding is that the cluster interpretation does not depend on gender, whose effect has been filtered out by the season-specific parameter $\beta$ . 
In other words, our clustering does not trivially distinguish between males and females, but it models the athletes' performance regardless of their gender.
This claim is supported by the fact that, when ordering the clusters according to males' average performance, then females' average performances are ordered, too. Moreover, \Cref{fig:application_clust_all} reports the athletes' marks, coloured according to their cluster membership, for male and female players, respectively. The two plots are similar, highlighting that the cluster interpretation is gender-free.

\begin{figure}[!ht]
\centering
\includegraphics[width=0.49\textwidth]{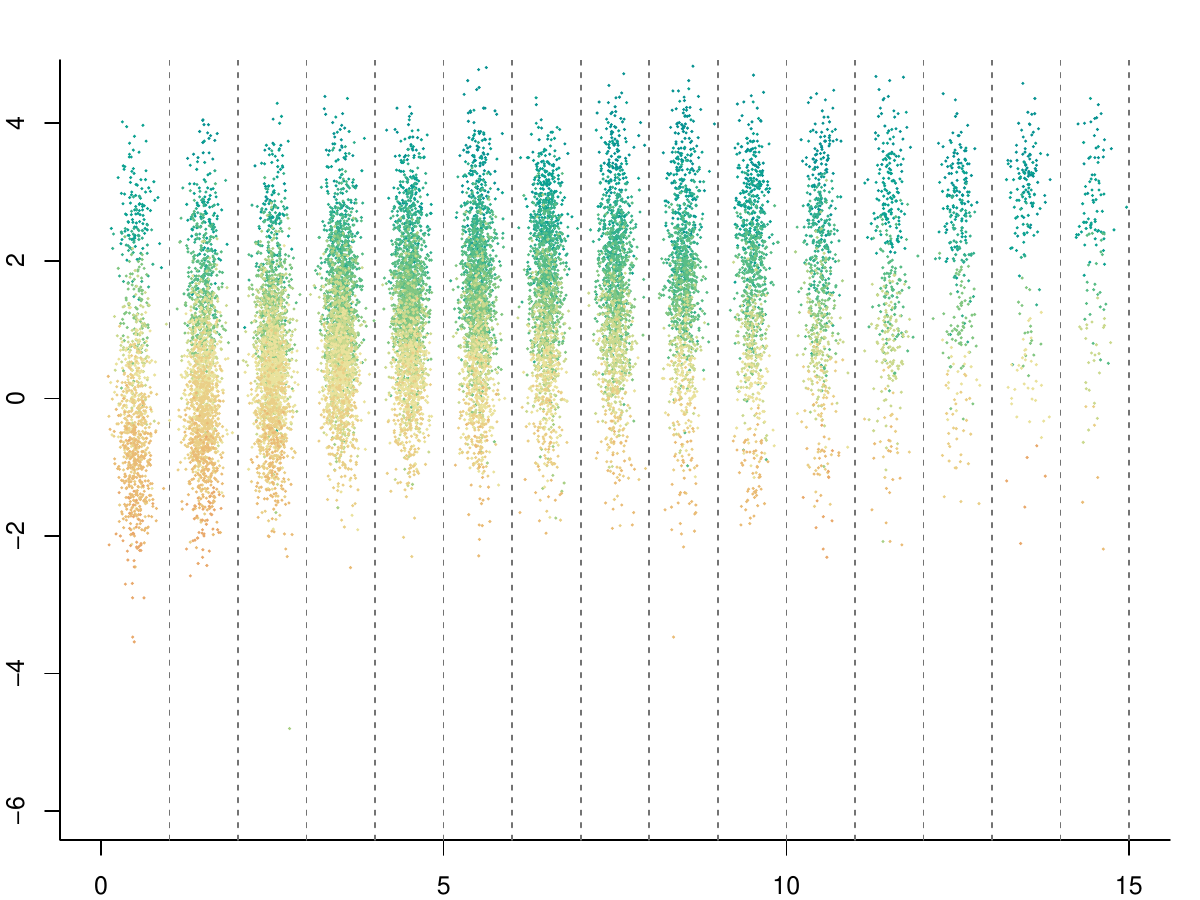}
\hfill
\includegraphics[width=0.49\textwidth]{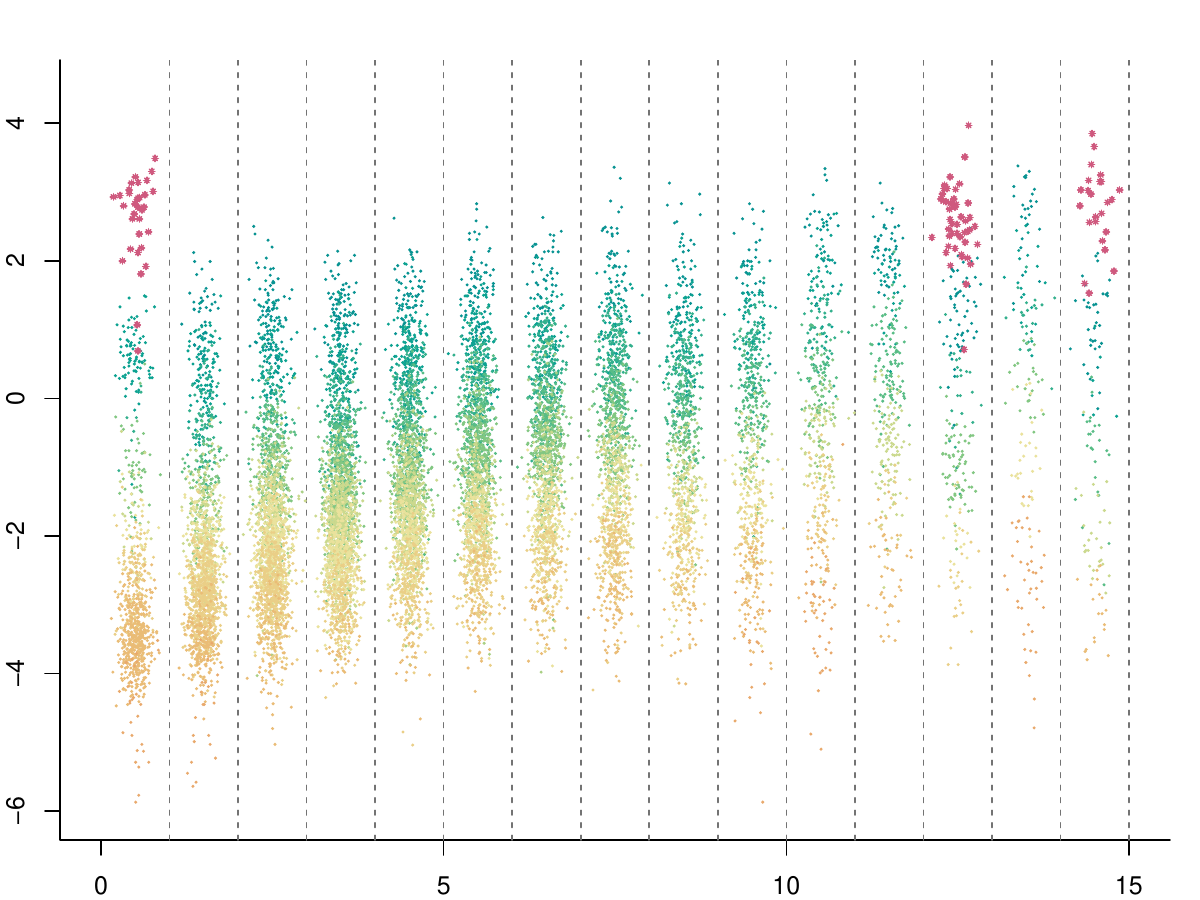}
\caption{Shot put marks for male athletes (left panel) and female athletes (right panel). 
Vertical dotted lines delimit seasons.
Dots are coloured according to the cluster membership.}
\label{fig:application_clust_all}

\end{figure}

Nevertheless, we are able to identify the presence of a particular cluster (points highlighted in the right panel of \Cref{fig:application_clust_all}) including six exceptional women performances, which are much above the average mean throw for female athletes. No man belongs to such a cluster, meaning that no one has ever been able to outperform competitors in such a neat way.
In particular, in this cluster, we find Astrid Kumbernuss, who is a three-champion and one-time Olympic champion; Christina Schwanitz, a one-time World champion; and Valerie Adams, who, during her outstanding career, won two Olympic Games and four World Championships.

We refer to \Cref{app:shotput} for tables reporting all season-specific cluster sizes and cluster summaries.
In summary, at the beginning of their careers, almost half of the athletes are grouped into cluster $10$, which is a low-ranked cluster. This finding aligns with the expectation that rookies tend to exhibit similar performances.  
However, around $6\%$ of the sample, 
immediately join clusters $2$ and $3$ (high-ranked clusters) in their very first season. Notably, these athletes have an average age of approximately $25$ years, which contrasts with the overall entrance age of less than $19$ years.
Further examination reveals that athletes typically reach the peak of their careers around seasons $7$ to $10$ when their average age is approximately $26$ years or slightly older. Analyzing cluster summaries provides additional insights. Male athletes in clusters $1$ and $2$ have average ages of $27.70$ and $27.52$, respectively, while female athletes in the same clusters have average ages of $26.08$ and $24.91$. This indicates that women tend to reach their peak performance at a younger age compared to men. A possible explanation for this disparity is that female bodies develop earlier than male bodies; this is supported by 
the fact that women begin their careers at an average age of $18.16$, while men start at an average age of $19.44$.
Cluster $12$ stands out from this trend as it comprises female athletes with a mean age of $28.05$, which is considerably older than the average peak age.
Lastly, it is worth noting that top-level clusters specifically clusters $1$ and $2$, have a higher proportion of female athletes compared to male athletes. This is despite the overall number of observations for women being smaller than that for men. This suggests that male competitions may be more balanced, making it more challenging for athletes to distinguish themselves from the average level.

\begin{figure}[h]
\centering
\includegraphics[width=0.48\linewidth]{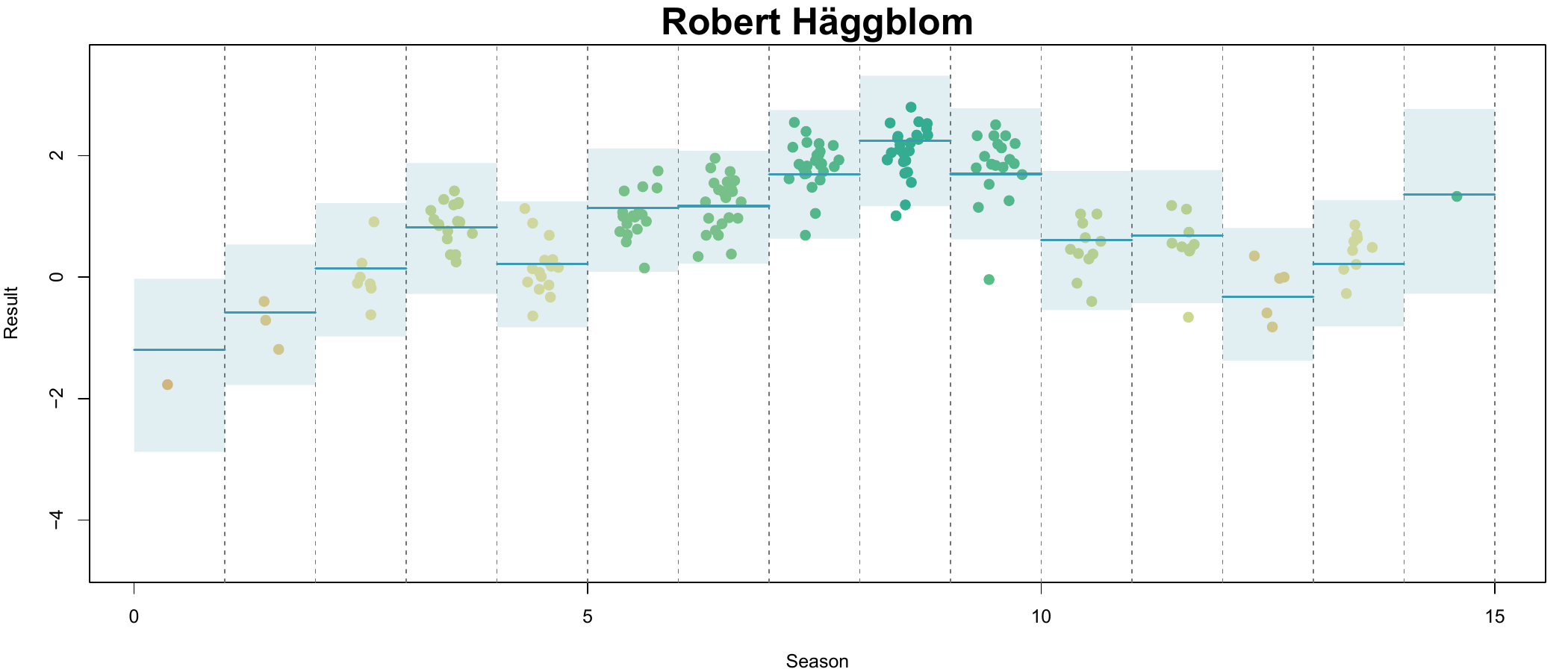}
\hfill
\includegraphics[width=0.48\linewidth]{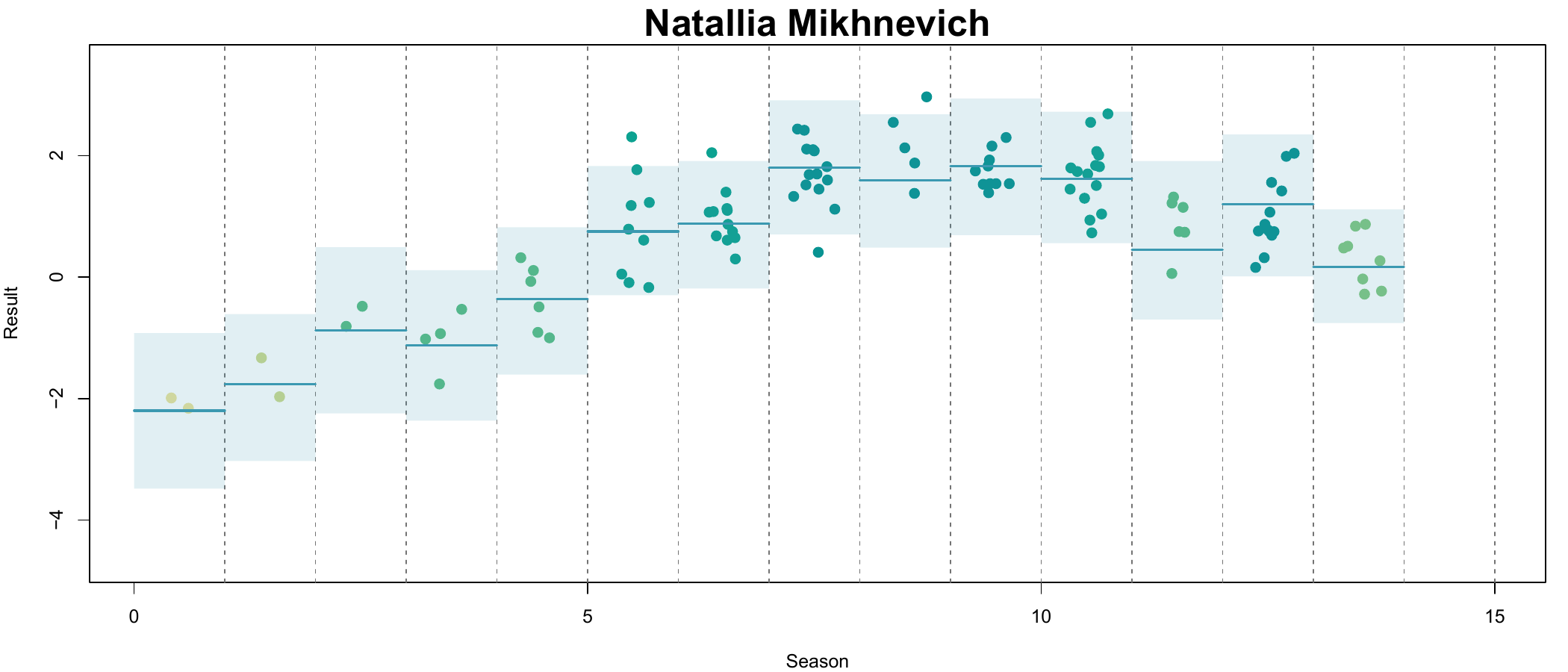}\\
\includegraphics[width=0.48\linewidth]{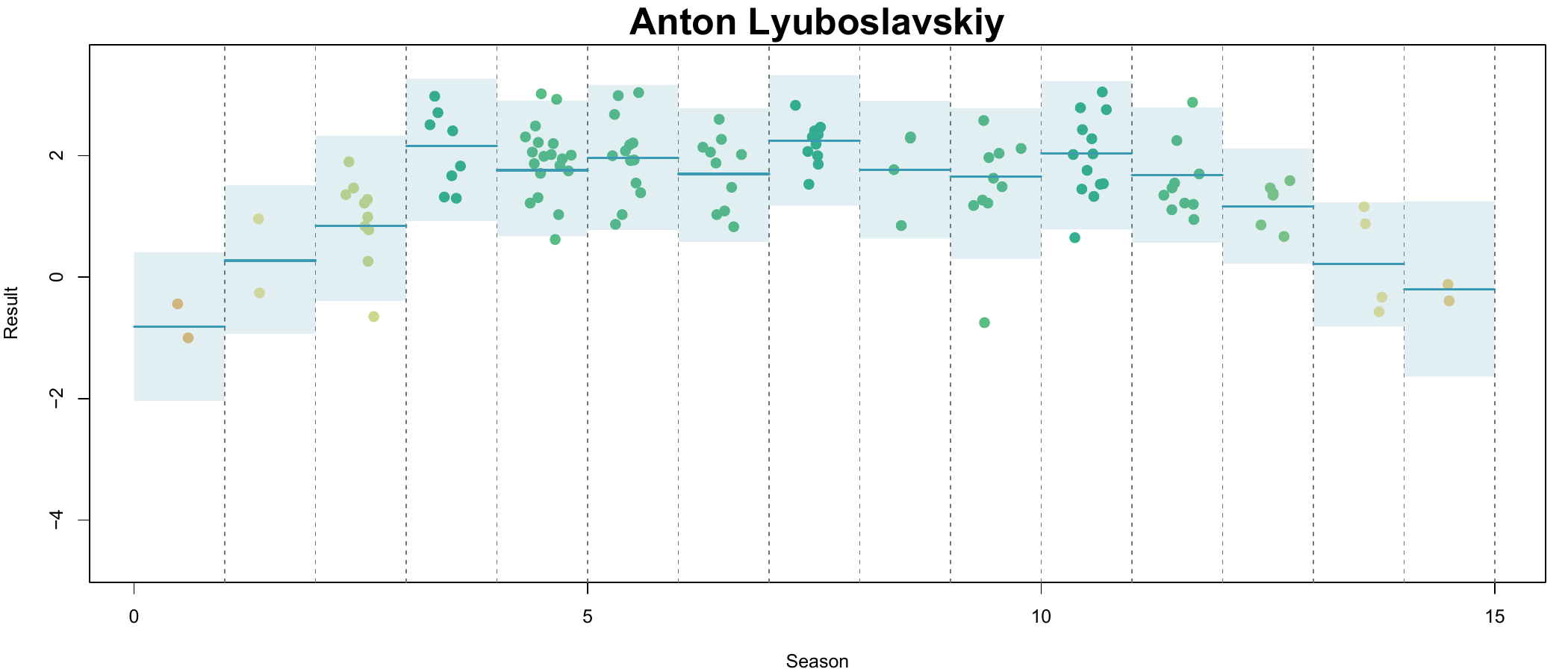}
\hfill
\includegraphics[width=0.48\linewidth]{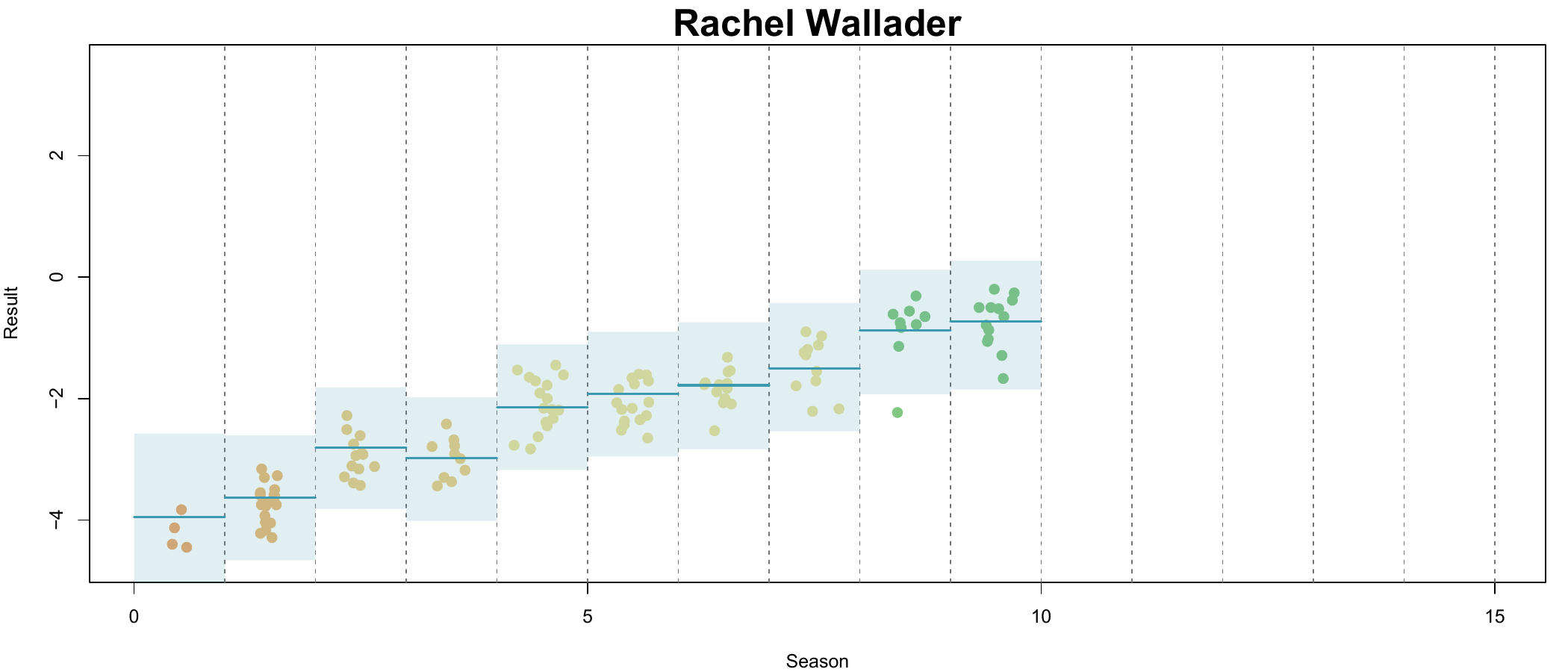}
\caption{ Shot put measurement for four randomly selected athletes. Points are coloured according to the cluster membership of the corresponding performance. Solid means represent the estimated cluster means. Shaded areas represent the 95\% credible bands.
}
\label{fig:application_traiettorie_post}
\end{figure}


A key feature of our analysis is the possibility of studying the evolution of season-specific cluster membership of each player.
This is exemplified in \Cref{fig:application_traiettorie_post}, which showcases the trajectories of four players (two men and two women). In the plot, marks representing each season are colour-coded according to their cluster memberships, while solid lines represent the estimated seasonal mean performances, and the shaded areas represent $95\%$ credible bands.
One notable pattern emerges from the trajectory of Robert H\"{a}ggblom, 
where a drop occurred immediately after the peak of his career, where he even participated in the 2008 Olympic games. A back injury conditioned the final part of its career. 
This sudden change would have been challenging to capture by a time-smoothing model. 
In contrast, Rachel Wallader 
demonstrates significant improvements throughout her career, starting from cluster $10$, which we recall being the predominant cluster among rookies, and eventually reaching the higher-performing cluster before retiring, also managing to win the British title.
Anton Lyuboslavskiy and Natallia Mikhnevich 
share remarkable career paths, showcasing exceptional performance not only for a single season but for extended periods of time, both around $9$ years. Indeed, both are Olympic level players.
Unlike Rachel Wallader, Anton Lyuboslavskiy continued to compete beyond his prime, maintaining high levels of performance but eventually transitioning to intermediate cluster levels.
Lastly, we highlight that Robert H\"{a}ggblom and Natallia Mikhnevich achieved comparable marks, but the performances of the second athlete, a woman, are assigned to higher-ranked clusters. This demonstrates our model's ability to recognize top players, regardless of their gender.
Indeed, Natallia Mikhnevich's career has been richer in success, as she won both gold and silver at the European Championships.

\section{Discussion}
\label{section:discussion}


We have introduced an innovative Bayesian nonparametric model for the analysis of grouped data, leveraging the normalization of finite dependent point processes as its foundation. Furthermore, we provided a comprehensive Bayesian analysis of this novel model class, delving into the examination of the pEPPF, posterior distributions and predictive distributions.
A special emphasis has been placed on vectors of finite Dirichlet processes,  which stand out as a noteworthy example in this context. Besides, we have also defined the HMFM as a natural extension of the work by \cite{millerharrison}. Based on our theory, marginal and conditional algorithms have been developed.
One significant benefit of HMFM is its ability to effectively capture dependence across groups.
Moreover, we have empirically shown that HMFM better calibrates the borrowing of information across groups than a traditional HDP and also performs better in terms of computational time.
In other words, the HMFM is a harmonious balance between HDP and independent analyses.
Finally, the analysis of the shot put data illuminated the flexibility of the model employed to infer athlete career trajectories and group them into clusters with a meaningful interpretation.

We now pinpoint several open problems related to our work, which are left for future research.
First, note that our construction allows a local 
and a global clustering. However, in numerous applications, one is also interested in clustering the different groups of observations. 
Nested structures, introduced by \citet{nested}, have gained popularity as valuable Bayesian tools for accomplishing this task.  Recent developments also include \citet{CAM2021} and \citet{dangelo2022}. We intend to explore the use of NIFPP to simplify the complexity of nested models, and to alleviate the computational burden associated with traditional nested structures.

Other future directions of research aim to enhance between-group 
dependence based on 
our approach. An intriguing extension we plan to investigate is to adopt a construction akin to compound random measures \citep{compound2017}.
Our idea is to replace  $S_{j,m}$, the unnormalized weight referring to group $j$ and atom $m$, with a product between a shared component across groups, i.e., depending only on $m$, and an idiosyncratic component, which depends on both group $j$ and the specific atom $m$. This modification would break the conditional independence, given $M$, of the unnormalized weights. However, it could be advantageous in situations where additional information sharing is desired,
as for the overlapping community in modular graphs, see \citep{Caron2023}.

Throughout the paper, we assumed that the atoms $\tau_1, \dots, \tau_M$ are i.i.d. according to a diffuse base measure $P_0$. However, recent research lines have explored the use of repulsive point processes as priors for mixture parameters \citep{PeRaDu12,marioJGCS2022} and applied them to model-based clustering for high-dimensional data \citep{lorenzo2023}. Furthermore, the new general theory presented by \cite{nRM2023} unifies various types of dependence on location, including independence, repulsiveness, and attractiveness. Palm calculus, which was a fundamental tool in our analysis, provides a mathematical framework for analysing these models. As a consequence, it is a promising avenue to extend our model to incorporate more sophisticated forms of dependence 
across the atoms $\tau_1, \ldots , \tau_M$.

The use of vectors of NIFPP is not limited to the mixture framework. Indeed, they can be helpful to face extrapolation problems when multiple populations of species are available.
To have a glimpse of this, consider two populations of animals composed of different species with unknown proportions. Given samples from the first and the second populations, extrapolation problems refer to
out-of-sample prediction. For instance, a typical question is: how many new and distinct species are shared across two additional samples from the populations? The seminal work of
\cite{Lij(07)} faced extrapolation problems in the simplified framework of a single population, but no results are available in the presence of multiple groups. We think that the use of vectors of NIFPP can help to face the multiple-sample setting, still unexplored in the Bayesian framework.




\newpage
\setcounter{page}{1}
\setcounter{equation}{0}
\setcounter{section}{0}
\setcounter{table}{0}
\setcounter{figure}{0}
\renewcommand\thesection{S\arabic{section}}
\renewcommand\thetable{S\arabic{table}}
\renewcommand\thefigure{S\arabic{figure}}
\renewcommand\theequation{S\arabic{equation}}

\begin{center}
	\LARGE Supplementary materials for:\\
	"Hierarchical Mixture of Finite Mixtures"
\end{center}

\section{Point Processes}
\label{app:appA}

In the present section, we remind the definition of independent finite point processes and some basic notions of Palm calculus, which is a fundamental tool for all our proofs.

\subsection{Independent Finite Point Processes}
\label{app:IFPP}
Consider a probability space 
$\left(\Omega, \mathcal{F}, \mathbb{P}\right)$, and let us denote by $\mathbb{G}$ a Polish space endowed with its corresponding Borel $\sigma$-algebra $\mathcal{G}$. Moreover, let $\mathbb{M}(\mathbb{G})$ be the space of locally finite measures on $\mathbb{G}$, and $\mathcal{M}(\mathbb{G})$ represents the corresponding  Borel $\sigma$-algebra.  
A point process $\Phi$ on $\mathbb{G}$ is a measurable map $\Phi: \mathbb{G}\rightarrow \mathbb{M}(\mathbb{G})$, defined as
\[
\Phi(B) = \sum_{m\geq 1}\delta_{\xi_m}(B), \quad B \in \mathcal{G}.
\] 
having denoted by $\delta_{\xi_m}$ the delta-Dirac mass at $\xi_m$. The sequence $(\xi_m)_{m\geq 1}$ is a random countable subset of $\mathbb{G}$ and its elements are called atoms. 

By \textit{Independent Finite Point Process} (IFPP) on $\mathbb{G}$ we mean a point process $\Phi$ whose atoms are $(\xi_m)_{m=1,\dots,M}$ where $M$ is a positive and integer-valued random variable distributed as $q_M$. The points $\xi_1,\dots,\xi_M$ are, conditionally to $M$, independent and identically distributed (i.i.d.) according to $\nu$, a probability measure on $(\mathbb{G}, \mathcal{G})$.
We write $\Phi \sim \operatorname{IFPP}(q_M,H)$.
Some texts refer to this formulation of IFPP as the \textit{mixed Binomial point process}. See \citet[Cp. 4.3]{baccelli2020} for generalization beyond i.i.d. atoms. 
Following the approach of \citet{argiento2022annals}, in order to define random probability measures, we consider an IFPP process defined on $\mathbb{G}=\R^+ \times \mathbb{X}$, for some Polish space $\mathbb{X}$ endowed with its Borel $\sigma$-algebra $\mathcal{X}$. We denote by 
$\xi = (s,\tau)$ a point of $\mathbb{G}$, and by $\xi_m =(S_m, \tau_m)$ a point of $\Phi$.
In particular, we only restrict our attention to processes such that $\nu({\rm d}s,{\rm d}x)=H({\rm d}s)\times P_0({\rm d}x)$, where $H$ is a probability distribution on $\R^+$, $P_0$ is a probability distribution on $\mathbb{X}$ and $\times$ is measure theoretical product. In other words, conditionally to $M$, the unnormalized weights $S_m$ are i.i.d. distributed according to $H$, the atoms $\tau_m$ are i.i.d. distributed according to $P_0$ with $q_M$, $H$ and $P_0$ that are independent.
We write $\Phi \sim \operatorname{IFPP}(q_M,H,P_0)$.

In this paper, we consider \textit{Vectors of Independent Finite Point Processes} (Vec-IFPPs), thus we work on $\mathbb{G}=(\R^+)^d \times \mathbb{X}$. A Vec-IFPP is a  point process whose unnormalized weights are elements of $(\R^+)^d$, namely $\bmS_m = \left(S_{1,m},\dots,S_{d,m}\right)$. 
The atoms $\tau_m$ are still i.i.d. distributed according to $P_0$, leading to the following definition of the point process $\Phi$,
\begin{equation}
\label{eqn:def_pp_VecIFPP}
\Phi(A\times B) = \sum_{m=1}^M \delta_{(\bmS_m,\tau_m)}(A\times B), 
\end{equation}
where $A$ is a Borel set of $(\R^+)^d$ and $B\in\mathcal{X}$.
The unnormalized weights $\bmS_m$ are still i.i.d. distributed according to $H$, that is now a probability distribution on $(\R^{+})^d$. For the sake of simplicity, we only resort to the case when this is defined as the measure theoretic product of probability distributions $H_j$ defined on $\R^+$, 
i.e., $H({\rm d}\bms) = H_1({\rm d}s_1)\times\dots\times H_d({\rm d}s_d)$. In addition, we also assume all $H_j$ to belong to the same parametric family having density $h(\cdot\mid\gamma_j)$ with respect to the Lebesgue measure on $\R^+$, i.e $H_j({\rm d}s) = h(s\mid\gamma_j) {\rm d}s$, where $\gamma_j\in\R^+$ are component-specific parameters. We write $\Phi \sim \operatorname{Vec-IFPP}(q_M,H,P_0)$.


\subsection{Normalization}
\label{app:norm}
Given $\Phi \sim \operatorname{Vec-IFPP}(q_M,H,P_0)$ on $\G=(\R^+)^d\times\X$, we define a vector of length $d$ of unnormalized random measures $\left(\mu_1,\dots,\mu_d\right)$ on $\X$ as follows,
\begin{equation}
\mu_j(B) \ = \ \int_0^\infty \int_B s_j \Phi({\rm d}\bms,{\rm d}\tau) = \sum_{m=1}^M S_{j,m}\delta_{\tau_m}(B), \quad B \in \mathcal{X}.
\label{eqn:app_random_measure_def}
\end{equation}
Then, the vector of random probability measures $\left(P_1,\dots,P_d\right)$ is defined through normalization as
\begin{equation}
P_j(\cdot) = \frac{\mu_j(\, \cdot \, )}{\mu_j(\mathbb{X})}.
\label{eqn:app_normalized_def}
\end{equation}
If $q_M(0)=0$, then \eqref{eqn:app_normalized_def} is well defined since $0<\mu_j(\mathbb{X})<\infty$ almost surely. 
We write $(\mu_1,\dots,\mu_d) \sim \operatorname{Vec-IFPP}(q_M,H,P_0)$ and
$\left(P_1,\dots,P_d\right) \sim \operatorname{Vec-NIFPP}(q_M,H,P_0)$, which stands for vector of Normalized Independent Finite Point Process.

\subsection{Background on Palm calculus}
\label{app:palm_background}
To establish the validity of our results, we employ Palm calculus, a fundamental tool in the analysis of point processes. Palm calculus also allows to extend  Fubini's theorem, and to enable the interchange of the expectation and integral operators when both pertain to a point process. However, there is a subtle distinction: the expectation is now taken with respect to the law of another point process, specifically, the reduced Palm version derived from the original process.
In the case of Poisson processes, which encompasses completely random measures, the reduced Palm version coincides with the law of the original Poisson process. This property characterizes Poisson processes and is known as the Slivnyak-Mecke theorem \cite[Theorem 3.2.4]{baccelli2020}.
To effectively exploit this technique, it is essential to derive the reduced Palm distribution for any IFPP. This crucial result is presented in \Cref{thm:palm}. 

In this section, we provide a concise overview of Palm calculus, focusing on the fundamental concepts required to derive our findings. For a more comprehensive understanding of the topic, we refer the reader to \citep{baccelli2020,Dal(08),Kal(17)}. We start by recalling basic, yet fundamental, definitions in the theory of point processes that are the mean measure and its generalizations, the $n$-th power and the $n$-factorial power measure.

Consider a point process $\Phi$ on $\G$, its mean measure $M_\Phi$ is a measure on $(\G,\mathcal{G})$ defined by
\begin{equation*}
\label{eqn:app_meanmeasure_def}
M_\Phi(B) = \E[\Phi(B)], \quad B\in\mathcal{G}.
\end{equation*}
Let $\Phi^n$ be the $n$-power of $\Phi$ to be defined as the $n$-th measure theoretic product, i.e., $\Phi^n(B_1\times\dots\times B_n) = \Phi(B_1)\dots\Phi(B_n)$. For any point process $\Phi$, $\Phi^n$ can be expressed as
\begin{equation}
\label{eqn:app_powermeasure_def}
\Phi^n = \sum_{\bmi \in \Delta^n} \delta_{(\xi_{i_1},\dots,\xi_{i_n})},
\end{equation}
where $\Delta^n = \bigl\{\bmi = (i_1,\dots,i_n)\ : \ i_k\in\{1,\dots,n\}, \ \forall k = 1,\dots,n \bigr\}$, hence elements of multi-index $\bmi\in\Delta^n$ may repeat and therefore atoms $(\xi_{i_1},\dots,\xi_{i_n})$ do not need to be distinct.
It can be proved that $\Phi^n$ is still a well defined point process, hence it is straightforward to define the $n$-th moment measure $M_{\Phi^n}$ of $\Phi$ as the mean measure of $\Phi^n$.
Finally, let $\Phi^{(n)}$ be the object obtained taking the summation in Equation \eqref{eqn:app_powermeasure_def} only over distinct elements, 
\begin{equation*}
\label{eqn:app_factorialmeasure_def}
\Phi^{(n)} = \sum_{\bmi \in \Delta^{(n)}} \delta_{(\xi_{i_1},\dots,\xi_{i_n})},
\end{equation*}
where $\Delta^{(n)} = \bigl\{\bmi = (i_1,\dots,i_n)\ : \ i_k\in\{1,\dots,n\}, \text{ and } i_k\neq i_l \text{ for any } k \not = l\bigr\}$.
$\Phi^{(n)}$ is called $n$-th factorial power of $\Phi$, it is a point process itself and its mean measure, $M_{\Phi^{(n)}}$, is called $n$-th factorial mean measure of $\Phi$.
To conclude, it is crucial to observe that $\Phi^n$ and $\Phi^{(n)}$ coincide when evaluated over pairwise disjoint sets, which is a crucial assumption since it is often convenient to work in terms of factorial moment measure.

Using the result in \cite[Example 4.3.11]{baccelli2020}, it follows that the mean measure and the $n$-th factorial mean measure of a $\operatorname{Vec-IFPP}$ defined as in Equation \eqref{eqn:def_pp_VecIFPP} equal to
\begin{align}
\label{eqn:app_meanmeasure_vecIFPP}
M_\Phi({\rm d}\bms,{\rm d}x ) &\ = \ 
\E[M]P_0({\rm d}x)\prod_{j=1}^d H_j({\rm d}s_{j}), \\
\label{eqn:app_factmeasure_vecIFPP}
M_\Phi^{(n)}\left(
({\rm d}\bms_1,{\rm d}x_1)
\times \dots \times
({\rm d}\bms_n,{\rm d}x_n)
\right) & \ = \ 
\E\left[M^{(n)}\right]\prod_{i=1}^nP_0({\rm d}x_i)
\prod_{i=1}^n\prod_{j=1}^d H_j({\rm d}s_{j,i}),
\end{align}
where $\E[M^{(n)}] = \E[M(M-1)\dots (M-n+1)]$.
The technique employed in our computations involves the disintegration of the Campbell measure of a point process $\Phi$ with respect to its mean measure $M_{\Phi}$, usually called Palm kernel or family of Palm distributions of $\Phi$. To be precise, we define the Campbell measure $\mathcal{C}_\Phi$ on $\G \times \mathbb{M}(\G)$ as follows
\begin{equation*}
\label{eqn:app_campbell_def}
\mathcal{C}_{\Phi}(B \times L) = \mathbb{E}\left[\Phi(B) \mathbb{I}_L(\Phi)\right], \quad B \in \mathcal{G}, L \in \mathbb{M}(\mathcal{G}).
\end{equation*}
By virtue of the Radon-Nikodym theorem, it follows that for any fixed $L$, there exists a unique disintegration probability kernel $\{\mathbf{P}^x_{\Phi}(\cdot)\}_{x \in \mathcal{G}}$ of $\mathcal{C}_\Phi$ with respect to $M_{\Phi}$, which satisfies the following integral equation
\begin{equation*}
\label{eqn:app_palm_def}
\mathcal{C}_{\Phi}(B \times L) = \int_B \mathbf{P}_{\Phi}^x(L) M_{\Phi}(\mathrm{d}x), \quad B \in \mathcal{G}, L \in \mathbb{M}(\mathcal{G}).
\end{equation*}
Note that, for any $x \in \G$, $\mathbf{P}_\Phi^x$ is the distribution of a point process $\Phi_x$ on $\G$, such that $\mathbf{P}_{\Phi}^x(L)=\P\left(\Phi_x \in L\right)$. See \cite[Theorem 31.1]{kallenberg2021}.
Notably, \cite[Proposition 3.1.12]{baccelli2020} establishes that the point process $\Phi_x$ possesses an atom located at $x$, with probability one. This property enables us to interpret $\mathbf{P}_\Phi^x$ as the probability distribution of the point process $\Phi$ given that it contains an atom at $x$. $\Phi_x$ is commonly referred to as the \textit{Palm version} of $\Phi$ at $x$.
Finally, we define $\mathbf{P}^x_{\Phi^!}$ to be the distribution of the point process
$$
\Phi_x^{!}=\Phi_x-\delta_x,
$$
which is obtained by removing the point $x$ from the Palm version $\Phi_x$. The derived point process, $\Phi_x^{!}$, is known as \textit{reduced Palm version} of $\Phi$ in $x$ and its law, $\mathbf{P}^x_{\Phi^!}$, is called \textit{reduced Palm kernel}.
Hence, given a reduced Palm kernel, we can construct the non-reduced one by considering the distribution of $\Phi_x+\delta_x$.

The subsequent theorem is known as the Campbell-Little-Mecke (CLM) formula which extends Fubini's formula and allows the exchange of an integral and an expected value, taken with respect to the law of a point process.
\begin{theorem}[Campbell-Little-Mecke formula. Theorem 3.1.9 in \cite{baccelli2020}]
\label{thm:CLMformula}
Let $\Phi$ be a point process on $\G$ such that $M_{\Phi}$ is $\sigma$-finite, and denote by $\mathbf{P}_{\Phi}(\cdot)$ the distribution of 
$\Phi$. Let $\left\{\mathbf{P}_{\Phi}^x(\cdot)\right\}_{x \in \G}$ be a family of Palm distributions of $\Phi$. Then, for all measurable $g: \G \times \mathbb{M}(\G) \rightarrow \mathbb{R}^{+}$, one has
\begin{equation}
\label{eqn:app_CLMformula}
\E\left[\int_\G g(x, \Phi) \Phi(\mathrm{d} x)\right] 
\ = \ 
\int_\G  \E\left[g(x, \Phi_x)\right]  M_\Phi(\mathrm{d} x),
\end{equation}
where the expected value on the right-hand side is taken with respect to $\mathbf{P}_\Phi^x$, i.e., with respect to the law of the point process $\Phi_x$.
\end{theorem}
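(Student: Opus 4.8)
The plan is to prove the Campbell--Little--Mecke identity by the \emph{standard machine}: verify it first for indicators of measurable rectangles, where it reduces to the very definition of the Palm kernel as a disintegration of the Campbell measure, and then lift it to all nonnegative measurable $g$ by linearity and monotone approximation. Since the existence of the probability kernel $\{\mathbf{P}_\Phi^x(\cdot)\}_{x\in\G}$ as the Radon--Nikodym disintegration of $\mathcal{C}_\Phi$ with respect to $M_\Phi$ has already been recalled before the statement, I may take that structure as given.

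First I would take $g(x,\varphi)=\mathbb{I}_B(x)\,\mathbb{I}_L(\varphi)$ for fixed $B\in\mathcal{G}$ and $L\in\mathbb{M}(\mathcal{G})$. On the left-hand side,
\[
\E\left[\int_\G \mathbb{I}_B(x)\,\mathbb{I}_L(\Phi)\,\Phi(\mathrm{d}x)\right]
= \E\left[\Phi(B)\,\mathbb{I}_L(\Phi)\right]
= \mathcal{C}_\Phi(B\times L),
\]
which is exactly the Campbell measure on the rectangle $B\times L$. By the defining integral equation of the Palm kernel this equals $\int_B \mathbf{P}_\Phi^x(L)\,M_\Phi(\mathrm{d}x)$. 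On the right-hand side, using $\mathbf{P}_\Phi^x(L)=\P(\Phi_x\in L)=\E[\mathbb{I}_L(\Phi_x)]$, one gets $\int_\G \E[\mathbb{I}_B(x)\,\mathbb{I}_L(\Phi_x)]\,M_\Phi(\mathrm{d}x)=\int_B \mathbf{P}_\Phi^x(L)\,M_\Phi(\mathrm{d}x)$. Hence both sides agree on rectangles.

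Next I would propagate the identity. Both sides are linear in $g$, so the equality extends immediately to finite nonnegative linear combinations of rectangle indicators, i.e.\ to simple functions built from the generating class. Since the rectangles $B\times L$ form a $\pi$-system generating the product $\sigma$-algebra on $\G\times\mathbb{M}(\G)$, the functional monotone class theorem applies: I would approximate an arbitrary nonnegative measurable $g$ from below by an increasing sequence of such simple functions and invoke the monotone convergence theorem on each side, which yields the formula in full generality. The $\sigma$-finiteness of $M_\Phi$ ensures the integrals and the interchange of limit and integral are well defined.

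The hard part will not be the approximation step, which is routine, but securing the measurability prerequisites: one needs $x\mapsto\mathbf{P}_\Phi^x(L)$ to be measurable for every $L$ (so that the disintegration is a genuine probability kernel) and joint measurability of $(x,\varphi)\mapsto g(x,\varphi)$ so that $x\mapsto\E[g(x,\Phi_x)]$ is measurable and the outer integral is meaningful. These rest on the existence of a \emph{regular} version of the Palm kernel, which the Polish-space setting together with the Radon--Nikodym theorem provides; once this regular disintegration is in hand, the rest of the argument is exactly the standard machine sketched above.
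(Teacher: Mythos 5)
Your proof is correct: the identity on rectangle indicators is exactly the defining disintegration equation of the Palm kernel with respect to $M_\Phi$, and the lift to general nonnegative measurable $g$ via the $\pi$-system of rectangles, Dynkin/functional monotone class, and monotone convergence (with $\sigma$-finiteness of $M_\Phi$ guaranteeing the measures involved are $\sigma$-finite and the regular disintegration on a Polish space supplying the kernel measurability) is the standard argument. Note that the paper itself offers no proof of this statement --- it is quoted verbatim as Theorem 3.1.9 of the cited reference and used as a black box --- so there is no in-paper argument to compare against; your derivation is the textbook one and fills that role adequately.
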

Sometimes, it is useful to state Equation \eqref{eqn:app_CLMformula} in terms of the reduced Palm kernel $\mathbf{P}_{\Phi^!}^{x}$, i.e.,
\begin{equation*}
\label{eqn:app_reducedCLMformula}
\E\left[\int_{\G} g\left(x, \Phi-\delta_x\right) \Phi(\mathrm{d} x)\right] \ = \ 
\int_{\G } \E\left[g(x,\Phi^!_x)\right] M_{\Phi}(\mathrm{d} x),
\end{equation*}
where the expected value on the right-hand side is taken with respect to $\mathbf{P}^x_{\Phi^!}$.

We conclude this section by introducing the multivariate extension of Equation \eqref{eqn:app_CLMformula}, known as the higher order CLM formula.
This requires the extension to the multivariate case of all previous definitions. Therefore, given a point process $\Phi$ define the $n$-th Campbell measure
$$
\mathcal{C}_{\Phi}^n(B \times L)
\ = \ 
\E\left[\int_B \indic_L(\Phi) \Phi^n(\mathrm{d} \bm{x})\right], \quad B \in \mathcal{G}^{n}, L \in \mathbb{M}(\G)
$$
where $\mathrm{d} \bm{x}=\left(\mathrm{d} x_1 \dots \mathrm{d} x_n\right)$ and $\Phi^n(\mathrm{d} \bm{x})=\prod_{i=1}^n \Phi\left(\mathrm{d} x_i\right)$. 
Let $M_{\Phi}^n$ be the mean measure of $\Phi^n$. 
Then, the $n$-th Palm distribution $\{\mathbf{P}_\Phi^{\bmx} \}_{\bm{x} \in \G^n}$ is defined as the disintegration kernel of $\mathcal{C}_{\Phi}^n$ with respect to $M_{\Phi^n}$, that is
$$
\mathcal{C}_{\Phi}^n(B \times L)=\int_B \mathbf{P}_{\Phi}^x(L) M_{\Phi^n}(\mathrm{d} \bm{x}), \quad B \in \G^{n}, L \in \mathbb{M}(\G).
$$
The \textit{higher order Palm version} $\Phi_{\bmx}$ is defined as the point process whose distribution is given by $\mathbf{P}_\Phi^{\bmx}$. Consequently, the \textit{higher order reduced Palm version} is the point process
$$
\Phi_{\bmx}^{!}=\Phi_{\bmx}-\sum_{i=1}^n\delta_{x_i},
$$
whose probability law, $\mathbf{P}_{\Phi^!}^{\bmx}$, is known as higher order reduced Palm kernel.
We are finally ready to state the multivariate extension of \Cref{thm:CLMformula} and its reduced form formulation. 
\begin{theorem}[Higher order CLM formula. Theorem 3.3.2 in \cite{baccelli2020}]
\label{thm:highorder_CLMformula}
Let $\Phi$ a point process on $\G$ such that $M_{\Phi^n}$ is $\sigma$-finite. Let $\left\{\mathbf{P}_{\Phi}^x(\cdot)\right\}_{\bm{x} \in \G^k}$ be a family of $n$-th Palm distributions of $\Phi$. Then, for all measurable $g: \R^n \times \mathbb{M}(\G) \rightarrow \R^{+}$
\begin{equation}
\label{eqn:highorder_CLMformula}
\E\left[\int_{\G^n} g(\bm{x}, \Phi) \Phi^n(\mathrm{d} \bm{x})\right]
\ = \ 
\int_{\G^n }
\E\left[g(\bm{x}, \Phi_{\bmx})\right]
M_{\Phi^n}(\mathrm{d} \bm{x}) . 
\end{equation}
\end{theorem}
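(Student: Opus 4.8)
The plan is to prove \eqref{eqn:highorder_CLMformula} by the standard measure-theoretic induction (the ``standard machine''): establish the identity first for product indicators, extend it by linearity to nonnegative simple functions, and finally pass to arbitrary nonnegative measurable $g$ by monotone approximation. The hypotheses do the preparatory work for us. The family $\{\mathbf{P}_\Phi^{\bmx}\}_{\bmx\in\G^n}$ of $n$-th Palm distributions is assumed to exist, and $M_{\Phi^n}$ is assumed $\sigma$-finite, so the disintegration identity
\[
\mathcal{C}_{\Phi}^n(B \times L)=\int_B \mathbf{P}_{\Phi}^{\bmx}(L)\, M_{\Phi^n}(\mathrm{d} \bmx)
\]
defining the Palm kernel is available, and $\bmx\mapsto \mathbf{P}_\Phi^{\bmx}(L)$ is measurable for each fixed $L$. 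The whole argument parallels the first-order case (\Cref{thm:CLMformula}), with $\Phi$, $M_\Phi$ replaced by $\Phi^n$, $M_{\Phi^n}$.

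For the base case I would take $g(\bmx,\phi)=\indic_B(\bmx)\,\indic_L(\phi)$ with $B\in\mathcal{G}^{n}$ and $L\in\mathcal{M}(\G)$. The left-hand side of \eqref{eqn:highorder_CLMformula} then equals $\E[\indic_L(\Phi)\,\Phi^n(B)]=\mathcal{C}_\Phi^n(B\times L)$, directly by the definition of the $n$-th Campbell measure. On the right-hand side, since $\mathbf{P}_\Phi^{\bmx}(L)=\P(\Phi_{\bmx}\in L)=\E[\indic_L(\Phi_{\bmx})]$, the integrand is $\indic_B(\bmx)\,\E[g(\bmx,\Phi_{\bmx})]=\indic_B(\bmx)\,\mathbf{P}_\Phi^{\bmx}(L)$, so the integral reduces to $\int_B \mathbf{P}_\Phi^{\bmx}(L)\,M_{\Phi^n}(\mathrm{d}\bmx)$. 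The two sides coincide precisely by the disintegration identity above, which settles the base case. Linearity of both the integral against $\Phi^n$ and of the expectation then extends the identity to every nonnegative measurable simple function, that is, to every finite nonnegative linear combination of such product indicators.

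To reach general $g$ I would invoke the functional monotone class theorem. The product sets $B\times L$ form a $\pi$-system generating the product $\sigma$-algebra $\mathcal{G}^{\otimes n}\otimes\mathcal{M}(\G)$, and the collection of nonnegative measurable $g$ for which \eqref{eqn:highorder_CLMformula} holds is closed under nonnegative increasing pointwise limits: given $g_k\uparrow g$, monotone convergence applies to the left-hand side (an expectation of an integral against the random measure $\Phi^n$) and, after a further application inside the expectation $\E[g(\bmx,\Phi_{\bmx})]$ and then against $M_{\Phi^n}$, also to the right-hand side. Nonnegativity makes all these monotone limits legitimate, and $\sigma$-finiteness of $M_{\Phi^n}$ ensures the iterated integrals are well defined and that Tonelli's theorem may be used freely; no integrability hypothesis is needed since every quantity is $[0,+\infty]$-valued.

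The main obstacle is purely technical rather than conceptual. One must justify the interchange of monotone limits and integration under only $\sigma$-finiteness of $M_{\Phi^n}$, which is handled by exhausting $\G^n$ with an increasing sequence of sets of finite $M_{\Phi^n}$-measure, and one must check that $\bmx\mapsto\E[g(\bmx,\Phi_{\bmx})]$ is measurable so that the right-hand integral is meaningful; this holds because the Palm family is a probability kernel, giving measurability for indicators and hence, by the same monotone class argument, for all nonnegative measurable $g$. Had the existence of the Palm kernel itself not been granted, the genuinely delicate point would be the disintegration theorem, which requires $\G$ to be Polish; but under the stated hypotheses that existence is already assumed, so the proof collapses to the approximation argument above.
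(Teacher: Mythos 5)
Your proof is correct: the base case is exactly the defining disintegration identity of the $n$-th Palm kernel against the $n$-th Campbell measure, and the extension to general nonnegative measurable $g$ by linearity, the monotone class theorem over the $\pi$-system of rectangles $B\times L$, and monotone convergence (using $\sigma$-finiteness of $M_{\Phi^n}$ to exhaust $\G^n$) is the standard argument. Note that the paper itself offers no proof of this statement — it is quoted verbatim as Theorem 3.3.2 of the cited reference — and your argument is essentially the one given there, so there is nothing to compare beyond observing that you have correctly reconstructed the textbook proof.
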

When $\left(x_1, \ldots, x_n\right)$ are distinct, we have that the $n$-th moment measure $M_{\Phi^n}$ coincides with the $n$-th factorial moment measure $M_{\Phi^{(n)}}$. In this setting, we state the CLM formula we are going to use most frequently in our computations \citep[Theorem 3.3.6]{baccelli2020}. 
This is when we consider the reduced Palm version of $\Phi$ in $\bmx$ and the integral is taken with respect to the $n$-th factorial power of $\Phi$, namely,
\begin{equation}
\label{eqn:app_reduced_highorder_CLMformula}
\E\left[\int_{\G^n} 
g\left(\bmx, \Phi - \sum_{i=1}^n\delta_{x_i} \right) 
\Phi^{(n)}(\mathrm{d}\bmx)\right] \ = \ 
\int_{\G^n} 
\E\left[g(\bmx,\Phi^!_{\bmx})\right] M_{\Phi^{(n)}}(\mathrm{d}\bmx).
\end{equation}

\subsection{Properties of Vec-IFPP}
\label{app:properties}
In this section we derive the main properties of Vec-IFPP, in particular we characterize the higher order reduced Palm distribution associated to any IFPP defined on some Polish space $\mathbb{G}$ as well as its Laplace functional.
\begin{theorem}
\label{thm:palm}
Let $\Phi\sim\operatorname{IFPP}(q_M,\nu)$ defined on a Polish space $\mathbb{G}$, where $q_M$ is a discrete distribution over $\left\{0,1,2,\dots\right\}$ and $\nu$ is probability distribution on $\G$, independent of $q_M$. The higher order reduced Palm version $\Phi^!_{\bmxi}$ associated to $\Phi$ at $\bmxi = (\bmxi_1,\dots\bmxi_k)$ is a point process 
\[
\Phi^!_{\bmxi} = \sum_{m=1}^M \delta_{\bmxi_m}
\]
such that $\Phi^!_{\bmxi} \sim \operatorname{IFPP}(q^!_M,H)$. In particular, $q^!_M$ is defined as
\begin{equation}
q^!_M(m) = \frac{1}{\E[M^{(k)}]} q_M(m+k)\frac{(m+k)!}{m!},
\label{eqn:q!M}
\end{equation}
where 
\[
\E[M^{(k)}] := \E[M(M-1)\dots (M-k+1)] = \sum_{m=0}^\infty q_M(m+k)\frac{(m+k)!}{m!}
\]
is the normalizing constant of $q^!_M$ 
and the atoms $\xi_m$ are still i.i.d.  according to $H$.
\end{theorem}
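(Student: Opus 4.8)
The plan is to identify the reduced Palm kernel by inserting the IFPP structure into the reduced higher-order Campbell–Little–Mecke formula \eqref{eqn:app_reduced_highorder_CLMformula}, specialized to $n=k$, and then to read off the law of $\Phi^!_{\bmxi}$ by matching both sides for a law-determining class of test functions. Recall that \eqref{eqn:app_reduced_highorder_CLMformula} characterizes the family $\{\mathbf{P}^{\bmx}_{\Phi^!}\}$ as the (essentially unique) disintegration of the $k$-th factorial Campbell measure with respect to the factorial mean measure $M_{\Phi^{(k)}}$; hence it suffices to compute both $M_{\Phi^{(k)}}$ and the left-hand integral explicitly and compare.

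First I would compute the $k$-th factorial mean measure. Writing $\Phi=\sum_{m=1}^M\delta_{\xi_m}$, the factorial power is $\Phi^{(k)}=\sum_{\bmi\in\Delta^{(k)}}\delta_{(\xi_{i_1},\dots,\xi_{i_k})}$, the sum running over ordered $k$-tuples of distinct indices. Conditioning on $M=m_0$ there are $m_0^{(k)}=m_0(m_0-1)\cdots(m_0-k+1)$ such tuples, each contributing $\nu^{\otimes k}$ to the mean by the i.i.d.\ assumption on the atoms, so that $M_{\Phi^{(k)}}(\mathrm{d}x_1,\dots,\mathrm{d}x_k)=\E[M^{(k)}]\prod_{i=1}^k\nu(\mathrm{d}x_i)$, paralleling \eqref{eqn:app_factmeasure_vecIFPP}.

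Then I would expand the left-hand side of \eqref{eqn:app_reduced_highorder_CLMformula} in the same manner. Removing a distinct tuple $(\xi_{i_1},\dots,\xi_{i_k})$ from $\Phi$ leaves $\sum_{m\notin\{i_1,\dots,i_k\}}\delta_{\xi_m}$, which, conditionally on $M=m_0$, is an IFPP with exactly $m_0-k$ atoms i.i.d.\ $\nu$. The crucial structural fact is that, since the atoms are i.i.d.\ and mutually independent, the removed atoms $(\xi_{i_1},\dots,\xi_{i_k})$ are independent of this residual process. Conditioning on $M=m_0$, using exchangeability to replace the sum over distinct tuples by $m_0^{(k)}$ times a single representative term, and then taking the expectation over $M$, the left-hand side becomes
\begin{equation*}
\sum_{m_0=k}^\infty q_M(m_0)\, m_0^{(k)} \int_{\mathbb{G}^k} \E\!\left[g\big(\bmx,\Phi^{(m_0-k)}\big)\right]\prod_{i=1}^k\nu(\mathrm{d}x_i),
\end{equation*}
where $\Phi^{(m_0-k)}$ denotes an IFPP with a deterministic number $m_0-k$ of i.i.d.\ $\nu$ atoms and the inner expectation is over that process alone.

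Finally I would reparametrize with $m=m_0-k$, so that $m_0^{(k)}=(m+k)!/m!$, and compare with the right-hand side $\E[M^{(k)}]\int_{\mathbb{G}^k}\E[g(\bmx,\Phi^!_{\bmx})]\prod_i\nu(\mathrm{d}x_i)$ coming from the factorial mean measure. Matching the integrands forces
\begin{equation*}
\E[g(\bmx,\Phi^!_{\bmx})] = \frac{1}{\E[M^{(k)}]}\sum_{m=0}^\infty q_M(m+k)\frac{(m+k)!}{m!}\,\E\!\left[g(\bmx,\Phi^{(m)})\right],
\end{equation*}
which is precisely the statement that $\Phi^!_{\bmx}\sim\operatorname{IFPP}(q^!_M,\nu)$ with $q^!_M$ as claimed; since the right-hand side does not depend on $\bmx$, the reduced Palm version has the same law at every $\bmxi$. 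That $q^!_M$ is a probability mass function follows from $\sum_m q_M(m+k)(m+k)!/m!=\E[M^{(k)}]$. I expect the main obstacle to be the rigorous justification of the matching step: the CLM formula yields only an almost-everywhere identity of kernels, so some care is needed in selecting a countable determining class of test functions, for instance Laplace-type functionals $g(\bmx,\varphi)=h(\bmx)e^{-\varphi(f)}$ with $f\ge 0$, in order to conclude equality of the full point-process laws rather than merely of the tested functionals.
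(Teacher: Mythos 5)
Your proposal is correct and follows essentially the same route as the paper: both expand the left-hand side of the reduced higher-order CLM formula using the IFPP structure, exploit exchangeability of the atoms to reduce the sum over distinct $k$-tuples to a factorial count $m!/(m-k)!$, reindex the series by $m\mapsto m+k$, and identify $q^!_M$ against the factorial mean measure $\E[M^{(k)}]\prod_i\nu(\mathrm{d}x_i)$. Your closing remark about needing a law-determining class of test functions to upgrade the a.e.\ identification of kernels to an identification of laws is a welcome refinement of the paper's terse ``follows by identification,'' but it does not change the argument.
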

\begin{proof}
We write the higher order reduced Campbell-Little-Mecke (CLM) formula \eqref{eqn:app_reduced_highorder_CLMformula} in our case:
\begin{equation}
\E\left[
\bigintssss_{\mathbb{G}^k} f\left(\xi_1,\dots,\xi_k, \Phi - \sum_{j=1}^k\delta_{\xi_k}\right)\Phi^{(k)}({\rm d}\bmxi) 
\right] 
=
\bigintssss_{\mathbb{G}^k} \E\left[
f\left(\xi_1,\dots,\xi_k, \Phi^!_{\bmxi}\right)
\right]
\text{M}_{\Phi^{(k)}}({\rm d}\bmxi),
\label{eqn:horCLM}
\end{equation}
where ${\rm d}\bmxi = ({\rm d}\xi_1,\dots,{\rm d}\xi_k)$ and $\text{M}_{\Phi^{(k)}}$ is the $k$-th factorial moment measure of $\Phi$. 
We now focus on the left-hand side of \eqref{eqn:horCLM}. Using the definition of $k$-th factorial process, the  integral inside the expected value is
\begin{equation*}
\begin{split}
\bigintsss_{\mathbb{G}^k} f\left(\xi_1,\dots,\xi_k, \Phi - \sum_{j=1}^k\delta_{\xi_k}\right)\Phi^{(k)}({\rm d}\bmxi)  & =
\sum_{\bmi\in\Delta^{(k)}}f\left(\xi_{i_1},\dots,\xi_{i_k}, 
\sum_{m=1}^M \delta_{\xi_m} - \sum_{j=1}^k\delta_{\xi_{i_j}}\right)  \\ 
&  = 
\sum_{\bmi\in\Delta^{(k)}}f\left(\xi_{i_1},\dots,\xi_{i_k}, 
\sum_{j=1}^{M-k}\delta_{\xi_{j}}\right),
\end{split}
\end{equation*}
where the set of $k$-tuple $\Delta^{(k)}$ is defined as 
$\Delta^{(k)} := \{ \bmi = (i_1,\dots,i_k) \ : \ i_j\in\{1,\dots,k\} \text{ and } i_h\neq i_j \text{ for all }h\neq j \}$. In the equation above, the equality from the first to the second line holds, given that $k\leq M$, since $k$ out of $M$ are subtracted from the process. The order of the atoms does not matter since they are exchangeable, hence it is fine to remove the first $k$ ones.
Then, using the result above for the inner integral and by definition of IFPP, the expected value on the left-hand side of \eqref{eqn:horCLM} 
equals
\begin{equation}
\label{eqn:palmproof_1}
\begin{split}
\E\Biggl[ & \int_{\mathbb{G}^k}f\left(\xi_1,\dots,\xi_k, \Phi - \sum_{j=1}^k\delta_{\xi_k}\right)  \Phi^{(k)}({\rm d}\bmxi) 
\Biggr]  \\
& \qquad = \sum_{m=k}^\infty q_M(m)
\sum_{\bmi \in \Delta^{(k)}}
\E\left[
f\left(\xi_{i_1},\dots,\xi_{i_k}, 
\sum_{j=1}^{m-k}\delta_{\xi_{j}}\right)
\right]
\end{split}
\end{equation}
where the outer sum is constrained to values $m\geq k$; otherwise, the inner sum is not even defined. Now, we focus on the expected value on the right-hand side. Note that this is an integral on $\mathbb{G}^m$ with respect to $(\xi_1,\dots,\xi_m)$. Using Fubini's theorem, such integral can be written in two iterative integrals, one on the first $k$ atoms and one on the last $m-k$ ones:
\begin{equation}
\label{eqn:palmproof_2}
\begin{split}
& \sum_{\bmi \in \Delta^{(k)}}
\E\left[
f\left(\xi_{i_1},\dots,\xi_{i_k}, 
\sum_{j=1}^{m-k}\delta_{\xi_{j}}\right)
\right] = 
\sum_{\bmi \in \Delta^{(k)}}
\E\left[
\bigintssss_{\mathbb{G}^k}
f\left(x_{1},\dots,x_{k}, 
\sum_{j=1}^{m-k}\delta_{\xi_{j}}\right)
P_0^k({\rm d}\bmx) 
\right]  \\
& \qquad =
\frac{m!}{(m-k)!}
\bigintssss_{\mathbb{G}^k}
\E\left[
f\left(x_{1},\dots,x_{k}, 
\sum_{j=1}^{m-k}\delta_{\xi_{j}}\right)
\right]
\prod_{j=1}^kP_0({\rm d}x_j) 
\end{split}
\end{equation}
where ${\rm d}\bmx = ({\rm d}x_1,\dots,{\rm d}x_k)$ and the expected value is now taken with respect to the law of the final $m-k$ atoms. Then, the final equality follows using Fubini's theorem once again and  by noticing  that the function in the sum does not depend  on the index $\bmi$; in addition we have observed that the cardinality of set $\Delta^{(k)}$ equals  $\binom{m}{k}/k! = m!/(m-k)!$. 
Then, we complete our manipulation of the left-hand side of 
\eqref{eqn:horCLM} by
plugging \eqref{eqn:palmproof_2} into \eqref{eqn:palmproof_1}, and we exploit Fubini's theorem to exchange the integral and the series. Therefore, renaming $\xi = x$, \eqref{eqn:horCLM} is equivalent to 
\begin{equation}
\label{eqn:palmproof_3}
\begin{split}
& \sum_{m=k}^\infty
q_M(m)
\frac{m!}{(m-k)!}
\bigintssss_{\mathbb{G}^k}
\E\left[
f\left(\xi_{1},\dots,\xi_{k}, 
\sum_{j=1}^{m-k}\delta_{\xi_{j}}\right)
\right]
\prod_{j=1}^kP_0({\rm d}\xi_j) 
\\
& \qquad \qquad = 
\int_{\mathbb{G}^k} \E\left[
f\left(\xi_1,\dots,\xi_k, \Phi^!_{\bmxi}\right)
\right]
\text{M}_{\Phi^{(k)}}({\rm d}\bmxi).
\end{split}
\end{equation}
We conclude using identity \eqref{eqn:palmproof_3} to identify $\Phi^!_{\bmxi}$. To this aim, we recall that $\text{M}_{\Phi^{(k)}}({\rm d}\bmxi) = \E[M^{(k)}]\prod_{j=1}^k P_0(\xi_j)$ \citep{baccelli2020} and, by  changing the index of the summation, we get
\begin{equation*}
\label{eqn:palmproof_4}
\begin{split}
& \sum_{m=0}^\infty
q_M(m+k)
\frac{(m+k)!}{m!}
\frac{1}{\E[M^{(k)}]}
\bigintssss_{\mathbb{G}^k}
\E\left[
f\left(\xi_{1},\dots,\xi_{k}, 
\sum_{h=1}^{m}\delta_{\xi_h}\right)
\right]
\text{M}_{\Phi^{(k)}}({\rm d}\bmxi)
\\
& \qquad \qquad = 
\int_{\mathbb{G}^k} \E\left[
f\left(\xi_1,\dots,\xi_k, \Phi^!_{\bmxi}\right)
\right]
\text{M}_{\Phi^{(k)}}({\rm d}\bmxi).
\end{split}
\end{equation*}
Then, the statement of the theorem  simply follows by identification.
\end{proof}
We now state the main results we use in subsequent proofs, these just trivially follow from \Cref{thm:palm}.
\begin{corollary}
\label{cor:palm4VIFPP}
Let $\Phi\sim\operatorname{Vec-IFPP}(q_M,H,P_0)$ defined on $(\R^+)^d\times\mathbb{X}$, where $q_M$ is a discrete distribution over $\left\{1,2,3,\dots\right\}$, $H$ and $P_0$ are probability distributions over $(\R^+)^d$ and $\X$, respectively, such that $q_M$, $H$ and $P_0$ are independent.
Then, the Palm distribution $\{P_\Phi^{\bms,x}\}_{(\bms,x)\in(\R^+)^d\times\mathbb{X}}$ is the distribution of the point process $\delta_{(\bms,x)} + \Phi^!_{\bms,x}$, where $\Phi^!_{\bms,x} \sim \operatorname{Vec-IFPP}(q^!_M,H,P_0)$ and $q^!_M$ is given in \eqref{eqn:q!M} by setting $k=1$.

\noindent Similarly, let $(\bms,\bmx) = (\bms_1,\dots,\bms_k,x_1,\dots,x_k)$, then the higher order Palm distribution 
$\{P_\Phi^{\bms,\bmx}\}_{(\bms,x)\in(\R^+)^{dk}\times\mathbb{X}^k}$ is the distribution of the point process 
$\sum_{i=1}^k\delta_{(\bms_i,x_i)} + \Phi^!_{\bms,\bmx}$, 
where $\Phi^!_{\bms,\bmx} \sim \operatorname{Vec-IFPP}(q^!_M,H,P_0)$ and $q^!_M$ is given in \eqref{eqn:q!M}.
\end{corollary}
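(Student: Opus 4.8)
The plan is to recognize that the statement is an immediate specialization of \Cref{thm:palm}, obtained by identifying the Vec-IFPP as a particular IFPP on a suitable product space. Concretely, I would first observe that $\G = (\R^+)^d \times \X$ is a Polish space, being a finite product of Polish spaces, and that the product measure $\nu(\mathrm{d}\bms, \mathrm{d}x) = H(\mathrm{d}\bms) \times P_0(\mathrm{d}x)$ is a probability measure on $\G$, independent of $q_M$ by assumption. Thus, by the very definition of $\operatorname{Vec-IFPP}$ given in \Cref{app:IFPP}, a process $\Phi \sim \operatorname{Vec-IFPP}(q_M, H, P_0)$ is nothing but an $\operatorname{IFPP}(q_M, \nu)$ with atoms $\xi_m = (\bms_m, \tau_m)$ drawn i.i.d. from $\nu$; the product structure of $\nu$ is exactly what makes the weight $\bms_m \sim H$ and the location $\tau_m \sim P_0$ conditionally independent. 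Hence \Cref{thm:palm} applies verbatim on $\G$.

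For the first-order statement, I would apply \Cref{thm:palm} with $k=1$ at the point $\bmxi = (\bms, x)$. The theorem yields that the reduced Palm version satisfies $\Phi^!_{\bms,x} \sim \operatorname{IFPP}(q^!_M, \nu)$, with $q^!_M$ as in \eqref{eqn:q!M} evaluated at $k=1$; and since $\nu = H \times P_0$, this is precisely $\operatorname{Vec-IFPP}(q^!_M, H, P_0)$. To pass from the reduced to the non-reduced Palm version, I would invoke the definition $\Phi^!_x = \Phi_x - \delta_x$ from the Palm calculus background, which gives $\Phi_x = \Phi^!_x + \delta_x$; therefore the Palm distribution $P_\Phi^{\bms,x}$ is the law of $\delta_{(\bms,x)} + \Phi^!_{\bms,x}$, as claimed. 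The higher-order statement follows the same route, applying \Cref{thm:palm} at $\bmxi = ((\bms_1,x_1),\dots,(\bms_k,x_k))$ for general $k$ and using $\Phi_{\bmxi} = \Phi^!_{\bmxi} + \sum_{i=1}^k \delta_{(\bms_i,x_i)}$ to recover the non-reduced higher-order Palm version.

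I expect no genuine obstacle here, since the result is a direct translation. The only points worth making explicit are bookkeeping checks: that the product space is Polish and the product base measure is a legitimate probability measure independent of $q_M$ (so that the hypotheses of \Cref{thm:palm} are met), and that the support of $q_M$ being $\{1,2,3,\dots\}$ poses no issue, since the resulting $q^!_M$ may place mass at $0$, consistent with the remark following \Cref{thm:posterior}. Beyond these verifications, the corollary is purely a matter of identification, which is why it can be presented as following trivially from \Cref{thm:palm}.
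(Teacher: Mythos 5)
Your proposal is correct and matches the paper's intent exactly: the paper presents this corollary as following trivially from \Cref{thm:palm}, and your argument — identifying a $\operatorname{Vec-IFPP}(q_M,H,P_0)$ on the Polish product space $(\R^+)^d\times\X$ as an $\operatorname{IFPP}(q_M,\nu)$ with $\nu=H\times P_0$, applying the theorem, and recovering the non-reduced Palm version via $\Phi_{\bmxi}=\Phi^!_{\bmxi}+\sum_i\delta_{(\bms_i,x_i)}$ — is precisely the identification the paper relies on. Your bookkeeping remarks (Polishness of the product space, and the fact that $q^!_M$ may charge $0$ even though $q_M$ does not) are consistent with the paper's own observations.
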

In particular, note that $\Phi^!_{\bms,\bmx}$ in \Cref{cor:palm4VIFPP}
depends on $(\bms,\bmx)$ only through the length $k$. 
Moreover, see that $q^!_M(0) > 0$ even if $q_M(0) = 0$.\\

The following result provides the Laplace functional of a $\operatorname{Vec-IFPP}(q_M,H,P_0)$ distributed point process.

\begin{lemma}
\label{lemma:laplace}
Let $\Phi \sim \operatorname{Vec-IFPP}(q_M,H,P_0)$ on $\mathbb{G}=(\R^+)^d\times\mathbb{X}$, where $q_M$ is a discrete distribution over $\{0,1,2,\dots\}$, $H$ is probability distribution on $(\R^+)^d$, independent of $q_M$ and $P_0$ is a probability measure on $\mathbb{X}$, independent of $H$ and $q_M$.
Then, for any $f:(\R^+)^d \times \mathbb{X} \rightarrow \R^+$ the  Laplace functional of $\Phi$ is
\begin{equation*}
\label{eqn:laplace}
\begin{split}
\operatorname{L}_{\Phi}[f] :&= 
\E\left[\exp \left\{-\int_{(\R^+)^d \times \mathbb{X}} f(\bms,x)\Phi({\rm d}\bms,{\rm d}x)\right\}\right]
\\
&  = 
\sum_{m=0}^\infty q_M(m)
\left( 
\int_{\mathbb{G}} e^{-f(\bms,x)} H({\rm d}\bms)P_0({\rm d}x)
\right)^m .
\end{split}
\end{equation*}
\end{lemma}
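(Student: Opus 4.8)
The plan is to exploit the explicit mixed-binomial representation of an IFPP and reduce the Laplace functional to an elementary conditioning argument. By definition $\Phi = \sum_{m=1}^M \delta_{\xi_m}$, where $\xi_m = (\bms_m, x_m)$ and, conditionally on $M$, the atoms $\xi_1,\dots,\xi_M$ are i.i.d.\ with law $\nu({\rm d}\bms,{\rm d}x) = H({\rm d}\bms)\,P_0({\rm d}x)$, with $M \sim q_M$ independent of the atoms. First I would rewrite the integral in the exponent as a finite sum over the atoms, $\int_{\G} f(\bms,x)\,\Phi({\rm d}\bms,{\rm d}x) = \sum_{m=1}^M f(\xi_m)$, so that the exponential factorizes, $\exp\{-\int_{\G} f\,{\rm d}\Phi\} = \prod_{m=1}^M e^{-f(\xi_m)}$, with the usual convention that the empty product (the event $M=0$) equals $1$, consistent with $\Phi$ being the null measure in that case.

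Next I would condition on $M$ and use the tower property of conditional expectation. Given $M=m$, the conditional independence and identical distribution of the atoms yield
$$
\E\!\left[\prod_{i=1}^m e^{-f(\xi_i)}\;\Big|\;M=m\right] = \prod_{i=1}^m \E\!\left[e^{-f(\xi_i)}\right] = \left(\int_{\G} e^{-f(\bms,x)}\,H({\rm d}\bms)\,P_0({\rm d}x)\right)^{m},
$$
where the single-atom expectation is evaluated against the product measure $\nu = H \times P_0$. Averaging over the law of $M$ then reproduces exactly the right-hand side of the claimed identity, $\sum_{m=0}^\infty q_M(m)\left(\int_{\G} e^{-f(\bms,x)}\,H({\rm d}\bms)\,P_0({\rm d}x)\right)^{m}$.

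Since $f \geq 0$, every factor satisfies $e^{-f(\xi_i)} \in (0,1]$, so the product is bounded between $0$ and $1$ and all integrands are nonnegative; hence Tonelli's theorem justifies both the factorization of the conditional expectation into a product of integrals and the interchange of the outer expectation with the countable series over $m$, with no integrability hypothesis on $f$ required. For this reason I do not expect any genuine obstacle: the only points demanding care are the bookkeeping of the $m=0$ term, handled by the empty-product convention, and the appeal to conditional independence, both of which are immediate from the mixed-binomial construction of the Vec-IFPP recalled in \Cref{app:IFPP}.
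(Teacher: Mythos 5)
Your proposal is correct and follows essentially the same route as the paper: both condition on the number of atoms $M$, exploit the conditional i.i.d.\ structure of the atoms under $\nu = H \times P_0$ to factorize the expectation into an $m$-th power of a single-atom integral, and then average over $q_M$. The only cosmetic difference is that the paper phrases the first step via the generating-functional identity of \citet{baccelli2020}, whereas you obtain the product form directly from the mixed-binomial representation; the substance is identical.
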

\begin{proof}

First recall the identity $\text{L}_{\Phi}[f] = \text{G}_\Phi(e^{-f})$ between the Laplace functional and the generating functional. Then, we use \cite[Lemma 4.3.14]{baccelli2020} to write
\begin{equation*}
\begin{split}
\text{L}_{\Phi}[f] &= 
\E\left[\prod_{(\bms,x) \in \Phi} \exp\{-f(\bms,x)\}\right] = 
\sum_{m=0}^\infty
q_M(m)\E\left[\prod_{(\bms,x) \in \Phi} \exp\{-f(\bms,x)\}
\mid\Phi(\mathbb{G})=m\right]  \\
&  = q_M(0) + \sum_{m=1}^\infty
q_M(m) \int_{\mathbb{G}^m}
\left(\prod_{i=1}^m \exp\{-f(\bms_i,x_i)\} \right)H^m(d\bms_1,\dots,\bms_m)P_0^m({\rm d}x_1,\dots,{\rm d}x_m)  .
\end{split}    
\end{equation*}
We now leverage the mutual independence between unnormalized weights and the fact that atoms are i.i.d. to obtain
\begin{equation*}
\begin{split}
\text{L}_{\Phi}[f] \ = \  
\sum_{m=0}^\infty q_M(m)
\left( \int_{\mathbb{G}} \exp\{-f(\bms,x)\}
H({\rm d}\bms)P_0({\rm d}x)\right)^m,
\end{split}    
\end{equation*}
which concludes the proof.
\end{proof}

In particular, \Cref{lemma:laplace} can be used to compute the Laplace functional of the random measures $\left(\mu_1,\dots,\mu_d\right)$ defined in \eqref{eqn:app_random_measure_def}. 
\begin{lemma}
\label{lemma:laplace4mu}
Let $\Phi\sim\operatorname{Vec-IFPP}(q_M,H,P_0)$ be defined as in \Cref{lemma:laplace}. Let $(\mu_1,\dots,\mu_d)$ be random measures on $\X$ defined through $\Phi$ as in \eqref{eqn:app_random_measure_def}. 
Then, for any set of measurable functions $f_1,\dots,f_d$ such that each $f_j:\X\rightarrow\R^+$, we have
\begin{equation*}
\label{eqn:laplace4mu}
\begin{split}
\E\left[
\exp \left\{
-\sum_{j=1}^d \int_{\X} f_j(x)\mu_j({\rm d}x)
\right\}
\right]
= 
\sum_{m=0}^\infty
q_M(m)
\left(
\int_{\G}
\exp \left\{
-\sum_{j=1}^d f_j(x)s_j
\right\}
H({\rm d}\bms)P_0({\rm d}x)
\right)^m .
\end{split}
\end{equation*}
\end{lemma}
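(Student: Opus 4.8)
The plan is to reduce the computation to a direct application of \Cref{lemma:laplace}. The key observation is that the linear functional appearing in the exponent can be rewritten as a single integral against the underlying point process $\Phi$. Indeed, using the definition of $\mu_j$ in \eqref{eqn:app_random_measure_def}, for each $j$ we have $\int_\X f_j(x)\mu_j({\rm d}x) = \sum_{m=1}^M S_{j,m}f_j(\tau_m)$, so that summing over $j$ and exchanging the finite order of summation yields
\begin{equation*}
\sum_{j=1}^d \int_\X f_j(x)\mu_j({\rm d}x) = \sum_{m=1}^M \sum_{j=1}^d S_{j,m}f_j(\tau_m) = \int_{\G} g(\bms,x)\,\Phi({\rm d}\bms,{\rm d}x),
\end{equation*}
where I define $g:(\R^+)^d\times\X\to\R^+$ by $g(\bms,x) = \sum_{j=1}^d f_j(x)s_j$.

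First I would check that $g$ is an admissible test function for \Cref{lemma:laplace}: since each $f_j$ is nonnegative and the weights $s_j\geq 0$, the map $g$ is nonnegative and measurable, so the hypotheses of that lemma are met. With this in hand, the left-hand side of the claimed identity is by construction precisely the Laplace functional $\operatorname{L}_\Phi[g]$ of $\Phi$ evaluated at $g$.

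Finally, I would invoke \Cref{lemma:laplace} with $f$ replaced by $g$, which gives
\begin{equation*}
\operatorname{L}_\Phi[g] = \sum_{m=0}^\infty q_M(m)\left(\int_\G e^{-g(\bms,x)}H({\rm d}\bms)P_0({\rm d}x)\right)^m,
\end{equation*}
and substituting back $g(\bms,x)=\sum_{j=1}^d f_j(x)s_j$ produces exactly the asserted formula. There is essentially no obstacle beyond the bookkeeping of this reduction; the substantive content—handling the random number of atoms $M$ and the i.i.d.\ structure of the weights and locations—has already been discharged in \Cref{lemma:laplace}, and the present statement is simply its specialization to the functions induced by evaluating the weighted measures $\mu_1,\dots,\mu_d$.
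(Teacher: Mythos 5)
Your proposal is correct and follows essentially the same route as the paper's own proof: both rewrite $\sum_{j}\int_\X f_j\,{\rm d}\mu_j$ as the integral of $g(\bms,x)=\sum_{j=1}^d f_j(x)s_j$ against $\Phi$ and then invoke \Cref{lemma:laplace} with $f=g$. Nothing is missing.
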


\begin{proof}
To prove the result, it is enough to show that this is a special case of \Cref{lemma:laplace}. Indeed, relying on Equation \eqref{eqn:app_random_measure_def}, we can express $\mu_j(\mydiff x)$ as $\mu_j(\mydiff x) = \int_\G \delta_{\tau_m}(\mydiff x) s_j \phi(\mydiff \bms, \mydiff \tau)$, then
\begin{equation*}
\begin{split}
\E\left[
\exp \left\{
-\sum_{j=1}^d \int_{\X} f_j(x)\mu_j({\rm d}x)
\right\}
\right] &= 
\E\left[
\exp \left\{
-\sum_{j=1}^d \int_{\X} f_j(x) 
\int_\G \delta_y({\rm d}x)s_j \Phi({\rm d}\bms,{\rm d}y)
\right\}
\right]  \\
&  =  
\E\left[
\exp \left\{
-\sum_{j=1}^d \int_{\G} f_j(x) s_j \Phi({\rm d}\bms,{\rm d}x)
\right\}
\right] = 
\text{L}_{\Phi}[g],
\end{split}
\end{equation*}
where $g(\bms,x)=\sum_{j=1}^d f_j(x)s_j$.
The statement now follows by an application of \Cref{lemma:laplace} with $f(\bms, x) = g(\bms,x)$.
\end{proof}

The following lemma is extensively used throughout our proofs. It consists of the Laplace transform of the reduced Palm version of $\Phi$ at $(\bms,\bmx)$, evaluated at a special choice of $f$ function.

\begin{lemma}
\label{lemma:Laplace2}
Let $\Phi$ be defined as in \Cref{lemma:laplace}. 
Let $(\bms,\bmx) = (\bms_1,\dots,\bms_k,x_1,\dots,x_k)$ and let $\Phi^!_{\bms,\bmx}$ be the higher order reduced Palm version of $\Phi$ at $(\bms,\bmx)$. 
Then, for any $\left(u_1,\dots,u_d\right)\in(\R^+)^d$, the following holds
\begin{equation*}
\E\left[
\exp\left\{
- \sum_{j=1}^d \int_{\mathbb{G}}
u_j w_j \Phi^!_{\bms,\bmx}({\rm d}\bmw,{\rm d}y)
\right\}
\right] \ = \
\frac{1}{\E[M^{(k)}]}\Psi(k,\bmu)
\label{eqn:reduced_laptrans}
\end{equation*}
where $\Psi(k,\bmu)$ is given by
\begin{equation}
\label{eqn:Psi}
\Psi(k,\bmu) = \sum_{m=0}^\infty \ q_M(m+k)\frac{(m+k)!}{m!}
\prod_{j=1}^d\left( \psi_j(u_j) \right)^m,
\end{equation}
and $\psi_j(u_j)$ is the Laplace transform of a random variable $S_j\sim H_j$, namely
\begin{equation*}
\label{lemma:psi}
\psi_j(u_j) = \E[e^{-u_j S_j}] = \int_{0}^\infty e^{-u_j s} H_j({\rm d}s).
\end{equation*}
\end{lemma}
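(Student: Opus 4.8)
The plan is to reduce the statement to a direct application of two results already established earlier, namely the identification of the reduced Palm version of an IFPP in \Cref{cor:palm4VIFPP} and the Laplace functional formula of \Cref{lemma:laplace}. The key observation is that the quantity to be computed is nothing but the Laplace functional of $\Phi^!_{\bms,\bmx}$ evaluated at the particular test function $f(\bmw,y)=\sum_{j=1}^d u_j w_j$.

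First I would invoke \Cref{cor:palm4VIFPP} to recognize that the higher order reduced Palm version $\Phi^!_{\bms,\bmx}$ is itself distributed as $\operatorname{Vec-IFPP}(q^!_M,H,P_0)$, where the modified count distribution $q^!_M$ is given in Equation \eqref{eqn:q!M}, namely $q^!_M(m)=\E[M^{(k)}]^{-1}q_M(m+k)(m+k)!/m!$. In particular, the dependence on $(\bms,\bmx)$ enters only through the index $k$, so the law of $\Phi^!_{\bms,\bmx}$ carries the same base measures $H$ and $P_0$ as $\Phi$, and \Cref{lemma:laplace} becomes directly applicable.

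Next I would apply \Cref{lemma:laplace} to this point process with the choice $f(\bmw,y)=\sum_{j=1}^d u_j w_j$, obtaining
\[
\E\left[\exp\left\{-\sum_{j=1}^d \int_{\mathbb{G}} u_j w_j\, \Phi^!_{\bms,\bmx}({\rm d}\bmw,{\rm d}y)\right\}\right] = \sum_{m=0}^\infty q^!_M(m)\left(\int_{\mathbb{G}} e^{-\sum_{j=1}^d u_j w_j} H({\rm d}\bmw) P_0({\rm d}y)\right)^m.
\]
The inner integral factorizes cleanly: since $P_0$ is a probability measure, integrating over $y\in\mathbb{X}$ contributes a factor of $1$, and because $H=H_1\times\cdots\times H_d$, the remaining integral splits across coordinates into $\prod_{j=1}^d \int_0^\infty e^{-u_j w_j} H_j({\rm d}w_j)=\prod_{j=1}^d \psi_j(u_j)$, where $\psi_j$ is the Laplace transform of $S_j\sim H_j$.

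Finally, substituting the explicit form of $q^!_M$ and pulling the normalizing constant $\E[M^{(k)}]^{-1}$ out of the series gives
\[
\sum_{m=0}^\infty q^!_M(m)\prod_{j=1}^d \bigl(\psi_j(u_j)\bigr)^m = \frac{1}{\E[M^{(k)}]}\sum_{m=0}^\infty q_M(m+k)\frac{(m+k)!}{m!}\prod_{j=1}^d \bigl(\psi_j(u_j)\bigr)^m = \frac{1}{\E[M^{(k)}]}\Psi(k,\bmu),
\]
which is exactly the claimed identity, with $\Psi(k,\bmu)$ as in Equation \eqref{eqn:Psi}. The argument is essentially routine once the ingredients are assembled; rather than a genuine obstacle, the only mild subtlety is recognizing that the reduced Palm version retains the $\operatorname{Vec-IFPP}$ structure, so that the Laplace functional formula can be invoked verbatim, and that the product form of $H$ together with $P_0$ being a probability measure collapses the $d$-dimensional inner integral into the one-dimensional Laplace transforms $\psi_j$.
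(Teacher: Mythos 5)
Your proposal is correct and follows exactly the same route as the paper's proof, which likewise combines \Cref{cor:palm4VIFPP} with \Cref{lemma:laplace} applied to $f(\bmw,y)=\sum_{j=1}^d u_j w_j$. You simply spell out the factorization of the inner integral and the cancellation of the normalizing constant $\E[M^{(k)}]$, details the paper leaves implicit.
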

\begin{proof}
The proof follows by combining the reduced Palm characterization given in \Cref{cor:palm4VIFPP} and \Cref{lemma:laplace} with $f(\bmw,y) = \sum_{j=1}^d u_jw_j$.
In particular, note that $f(\bmw,y)$ does not depend on $y$.
\end{proof}

\section{Proofs of the main results}
\label{section:appB}

\subsection{Proof of Theorem \ref{thm:peppf} }
\begin{proof}\label{proof:peppf}
We compute the pEPPF by resorting to the definition given by \cite{camerlenghi2019distribution}:
\begin{equation}
\label{eq:pEPPF_def}
\Pi_K^{(n)}(\bm{n}_1,\dots,\bm{n}_d) = \E\left[ \int_{\X^K} \prod_{j=1}^d\prod_{k=1}^K
P_j({\rm d}\theta^{**}_k)^{n_{j,k}}\right].
\end{equation}
By an application of  Fubini's theorem, we can exchange the integral and the expected value in \eqref{eq:pEPPF_def}. Thus, we focus on the evaluation of the following expected value:
\begin{equation}
\label{eqn:peppfproof_1}
\begin{split}
\E&\left[
\prod_{j=1}^d\prod_{k=1}^K
P_j({\rm d}\theta^{**}_k)^{n_{j,k}}
\right]  =
\E\left[
\prod_{j=1}^d\prod_{k=1}^K
\frac{1}{\mu_j(\mathbb{X})^{n_{j,k}}}\mu_j({\rm d}
\theta^{**}_k)^{n_{j,k}}
\right]  \\
&=
\E\left[
\prod_{j=1}^d\prod_{k=1}^K
\left\{
\int_0^\infty
\frac{u_j^{n_{j,k}-1}}{\Gamma(n_{j,k})}e^{-u_j\mu_j(\mathbb{X})}{\rm d}u_j
\right\}
\mu_j({\rm d}\theta^{**}_k)^{n_{j,k}}
\right]  \\
&=
\int_{[0,\infty]^d}
\E\left[
\prod_{j=1}^d
\frac{u_j^{n_{j}-1}}{\Gamma(n_{j})}e^{-u_j\mu_j(\mathbb{X})}
\prod_{k=1}^K \mu_j({\rm d}\theta^{**}_k)^{n_{j,k}}
\right]
{\rm d}\bmu  \\
&=
\int_{[0,\infty]^d}
\prod_{j=1}^d
\frac{u_j^{n_{j}-1}}{\Gamma(n_{j})}
\E\left[
\exp\left\{-\sum_{j=1}^d u_j\mu_j(\mathbb{X})\right\}
\prod_{k=1}^K \prod_{j=1}^d \mu_j({\rm d}\theta^{**}_k)^{n_{j,k}}
\right]
{\rm d}\bmu,
\end{split}
\end{equation}
where ${\rm d}\bmu=({\rm d}u_1,\dots,{\rm d}u_d)$. The first equality in \eqref{eqn:peppfproof_1} follows by definition of $P_j$, the second one by the identity
$\frac{1}{x^n} = \int_0^\infty \frac{u^{n-1}}{\Gamma(n)}e^{-ux}dx$, the third one follows by Fubini's theorem and recalling that $\sum_{k=1}^K n_{j,k} = n_j$ while the fourth one is just a convenient arrangement of terms.
Observe that the product of the random measures $\mu_j$'s in the last term of 
\eqref{eqn:peppfproof_1} equals

\begin{equation}
\label{eqn:peppfproof_2}
\begin{split}
\prod_{k=1}^K \prod_{j=1}^d \mu_j({\rm d}\theta^{**}_k)^{n_{j,k}} &= 
\prod_{k=1}^K  \int_{\mathbb{G}}
\delta_{x_k}({\rm d}\theta^{**}_k)
\prod_{j=1}^d s_{j,k}^{n_{j,k}}
\Phi({\rm d}\bms_k,{\rm d}x_k) \\
&=
\int_{\mathbb{G}^K} \prod_{k=1}^K
\left( 
\delta_{x_k}({\rm d}\theta^{**}_k)\prod_{j=1}^d s_{j,k}^{n_{j,k}}
\right)
\Phi^K({\rm d}\bms,{\rm d}\bmx),
\end{split}
\end{equation}
where $\bms_k = (s_{1,k},\dots,s_{d,k})$ are $(\R^+)^d$-dimensional integration variables representing the unnormalized weights of the Vec-IFPP with atoms $x_k$. 
Then, $\bmx=(x_1,\dots,x_k)$ is a $k$-dimensional vector while $\bms = (\bms_1,\dots,\bms_k)$ is a $k$-dimensional vector of $(\R^+)^d$-dimensional components. 
Finally, $\Phi^K({\rm d}\bms,{\rm d}\bmx)$ is the $K$-th power point process of $\Phi$, as defined in \Cref{app:palm_background}. See \citet[Ch. 1.3.1]{baccelli2020} for further details. \\
Plugging \eqref{eqn:peppfproof_2} into the last member of \eqref{eqn:peppfproof_1} and exploiting the definition of $\mu_j(\mathbb{X}) = \int_\mathbb{G} w_j\Phi({\rm d}\bmw,{\rm d}z)$, we get
\begin{equation}
\label{eqn:peppfproof_3}
\begin{split}
\E\left[
\prod_{j=1}^d\prod_{k=1}^K
P_j({\rm d}\theta^{**}_k)^{n_{j,k}}
\right] =
& \int_{[0,\infty]^d}
\prod_{j=1}^d
\frac{u_j^{n_{j}-1}}{\Gamma(n_{j})}
\E\Biggl[
\int_{\mathbb{G}^K}
\prod_{k=1}^K
\left( 
\delta_{x_k}({\rm d}\theta^{**}_k)\prod_{j=1}^d s_{j,k}^{n_{j,k}}
\right) \\
& \ \times
\exp\left\{
-\sum_{j=1}^d\int_\mathbb{G} u_j w_j\Phi({\rm d}\bmw,{\rm d}z)
\right\}
\Phi^K({\rm d}\bms,{\rm d}\bmx)
\Biggr]
{\rm d}\bmu.
\end{split}
\end{equation}
Now, we only focus on the expected value on the right-hand side of \eqref{eqn:peppfproof_3}. 
In the sequel,  we assume that $d\theta^{**}_k$ are infinitesimally disjoint sets which do not overlap, say balls centred at
$\theta^{**}_k$ with a sufficiently small radius $\varepsilon \to 0$. 
We now apply the higher order CLM formula, stated in Equation \eqref{eqn:highorder_CLMformula},
with 
\[
f(\bms,\bmx,\Phi) = 
\prod_{k=1}^K
\left( 
\delta_{x_k}({\rm d}\theta^{**}_k)\prod_{j=1}^d s_{j,k}^{n_{j,k}}
\right) 
\exp\left\{
-\sum_{j=1}^d\int_\mathbb{G} u_j w_j\Phi({\rm d}\bmw,{\rm d}z)
\right\}.
\]
As a consequence, the expected value under study boils down to
\begin{equation}
\label{eqn:peppfproof_4}
\begin{split}
&\E\left[
\int_{\mathbb{G}^K}
\prod_{k=1}^K
\left( 
\delta_{x_k}({\rm d}\theta^{**}_k)\prod_{j=1}^d s_{j,k}^{n_{j,k}}
\right) 
\exp\left\{
-\sum_{j=1}^d\int_\mathbb{G} u_j w_j\Phi({\rm d}\bmw,{\rm d}z)
\right\}
\Phi^K({\rm d}\bms,{\rm d}\bmx)
\right] \\
&= \int_{\mathbb{G}^K}
\prod_{k=1}^K
\left( 
\delta_{x_k}({\rm d}\theta^{**}_k)\prod_{j=1}^d s_{j,k}^{n_{j,k}}
\right) \\ 
&  \hspace{1cm} \times
\E\left[
\exp\left\{
-\sum_{j=1}^d\int_\mathbb{G} u_j w_j
\left(\Phi^!_{\bms,\bmx} + \sum_{k=1}^K \delta_{(\bms_k,x_k)}\right)({\rm d}\bmw,{\rm d}z)
\right\}
\right] \ 
\text{M}_{\Phi^K}({\rm d}\bms,{\rm d}\bmx)  \\
& =
\bigintssss_{\mathbb{G}^K}
\prod_{k=1}^K
\left( 
\delta_{x_k}(d\theta^{**}_k)\prod_{j=1}^d s_{j,k}^{n_{j,k}}
\right) \exp\left\{-\sum_{j=1}^d\sum_{k=1}^K u_j s_{j,k}\right\} \\
&  \hspace{1cm} \times
\E\left[
\exp\bigl\{
-\sum_{j=1}^d\int_\mathbb{G} u_j w_j
\Phi^!_{\bms,\bmx}({\rm d}\bmw,{\rm d}z)
\bigr\}
\right] \ 
\E\left[M^{(K)}\right]H^K({\rm d}\bms)P_0^K({\rm d}\bmx) .
\end{split}
\end{equation}
Note that in the second equality of Equation \eqref{eqn:peppfproof_4}  we replaced $\text{M}_{\Phi^K}({\rm d}\bms,{\rm d}\bmx)$ with
\[
\text{M}_{\Phi^{(K)}}({\rm d}\bms,{\rm d}\bmx) = \E\left[M^{(K)}\right]H^K({\rm d}\bms)P_0^K({\rm d}\bms),
\]
because the sets ${\rm d} \theta_k^{**}$ are infinitesimally pairwise disjoint.
The higher order reduced Palm version $\Phi^!_{\bms,\bmx}$ of $\Phi$ at $(\bms,\bmx)$ has been defined in \Cref{cor:palm4VIFPP}. The integral with respect to $\G^K$ is now simple to compute as it factorizes in the product of the base measures evaluated in small sets centred over the distinct values, namely,
\begin{equation}
\label{eqn:peppfproof_4.2}
\begin{split}
&\E\left[
\int_{\mathbb{G}^K}
\prod_{k=1}^K
\left( 
\delta_{x_k}({\rm d}\theta^{**}_k)\prod_{j=1}^d s_{j,k}^{n_{j,k}}
\right) 
\exp\left\{
-\sum_{j=1}^d\int_\mathbb{G} u_j w_j\Phi({\rm d}\bmw,{\rm d}z)
\right\}
\Phi^K({\rm d}\bms,{\rm d}\bmx)
\right] \\
& = \prod_{k=1}^K P_0({\rm d}\theta^{**}_k)
\E\left[M^{(K)}\right] \int_{(\R^+)^{dK}}
\prod_{j=1}^d\prod_{k=1}^K
e^{-u_js_{j,k}}s_{j,k}^{n_{j,k}}  \\
&  \hspace{3cm} \times
\E\left[
\exp\left\{
-\sum_{j=1}^d\int_\mathbb{G} u_j w_j
\Phi^!_{\bms,\bmx}({\rm d}\bmw,{\rm d}z)
\right\}
\right]
\
\prod_{k=1}^K\prod_{j=1}^d H_j({\rm d}s_{j,k}).
\end{split}
\end{equation}
Then, we use \Cref{lemma:Laplace2} to compute the expected value appearing in the final line of Equation \eqref{eqn:peppfproof_4.2}. This allows us to eliminate the normalizing constant $\E\left[M^{(K)}\right]$. Moreover, we note that such a result does not depend on $\bms$.  Hence, plugging it into \eqref{eqn:peppfproof_3}, we get
\begin{equation}
\label{eqn:peppfproof_5}
\begin{split}
&  \E\left[
\prod_{j=1}^d\prod_{k=1}^K
P_j({\rm d}\theta^{**}_k)^{n_{j,k}}
\right] \\
& \quad =
\bigintsss_{[0,\infty]^d}
\prod_{j=1}^d
\left\{
\frac{u_j^{n_{j}-1}}{\Gamma(n_{j})} \ 
\prod_{k=1}^K\kappa_j\left({u_j,n_{j,k}}\right)
\right\}
\Psi(K,\bmu) \ 
{\rm d}\bmu \
\prod_{k=1}^K P_0({\rm d}\theta^{**}_k)
\end{split}
\end{equation}
where $\kappa_j\left({u_j,n_{j,k}}\right) = \int_0^\infty e^{-u_js_j}s_j^{n_{j,k}}H_j({\rm d}s_j)$. We can now exploit \eqref{eqn:peppfproof_5} to evaluate the pEPPF in \eqref{eq:pEPPF_def}, and the result easily follows by integrating out the
$\theta_k^{**}$.
\end{proof}

\subsection{Proof of Theorem \ref{thm:moments}}
\label{proof:moments}
We prove the three parts of the theorem separately.

\begin{proof}
\noindent (i) For any Borel set $A\in\mathcal{X}$, we have
\begin{equation*}
\label{eqn:momproof_i1}
\begin{split}
\E[P_j(A)] &= 
\E\left[\frac{\mu_j(A)}{\mu_j(\mathbb{X})}\right] =
\E\left[
\int_{0}^\infty e^{-u\mu_j(\mathbb{X})} {\rm d}u
\int_{\mathbb{G}}s_j\delta_x(A)\Phi({\rm d}\bms,{\rm d}x)
\right]  \\
&  = 
\int_{0}^\infty 
\E\left[
\int_{\mathbb{G}}
\exp\left\{
-\int_{\mathbb{G}}
uw_j\Phi({\rm d}\bmw,{\rm d}y)
\right\}
s_j\delta_x(A)
\Phi({\rm d}\bms,{\rm d}x)
\right] {\rm d}u.
\end{split}
\end{equation*}
The first equality follows by definition of $P_j(A)$, the second one exploits the definition of $\mu_j$ as well as the identity $1/x = \int_{0}^\infty e^{-ux}{\rm d}x$. For the third equality we use Fubini's theorem. 
Finally use the CLM formula, i.e., Equation \eqref{eqn:app_CLMformula}, with 
\[
f(\bms,x,\Phi) =         
\exp\left\{
-\int_{\mathbb{G}}
uw_j\Phi({\rm d}\bmw,{\rm d}y)
\right\}
s_j\delta_x(A).
\]
As a consequence, the previous expression becomes
\begin{equation}
\label{eqn:momproof_i2}
\begin{split}
\E[P_j(A)] &=\int_{0}^\infty 
\int_{\mathbb{G}}
s_j\delta_x(A)
\E\left[
\exp\left\{
-\int_{\mathbb{G}}
uw_j
\left(\Phi^!_{\bms,x} + \delta_{(\bms,x)}\right)({\rm d}\bmw,{\rm d}y)
\right\}
\right] \ 
\text{M}_{\Phi}({\rm d}\bms,{\rm d}x)
{\rm d}u  \\
&  =
\int_{0}^\infty 
\int_{\mathbb{G}}
e^{-us_j}s_j\delta_x(A)
\E\left[
\exp\left\{
-\int_{\mathbb{G}}
uw_j\Phi^!_{\bms,x}({\rm d}\bmw,{\rm d}y)
\right\}
\right] \ 
\E[M]H({\rm d}\bms)P_0({\rm d}x)
{\rm d}u .
\end{split}
\end{equation}
To conclude, note that $H({\rm d}\bms)=\prod_{l=1}^d H_l({\rm d}s_l)$ with only the $j$-th component that does not integrate to 1. Moreover the expected value in the final line of Equation \eqref{eqn:momproof_i2} can be computed using \Cref{lemma:laplace} with $f(\bmw,y)=uw_j$. Hence,
\begin{equation*}
\label{eqn:momproof_i3}
\begin{split}
\E[P_j(A)]  &= 
P_0(A)
\int_{0}^\infty 
\left(
\int_0^\infty
e^{-us_j}s_jH_j({\rm d}s_j)
\right)
\prod_{l\neq j}\left(
\int_0^\infty
H_l({\rm d}s_l)
\right) \\
& \quad \times 
\left(
\sum_{m=0}^\infty q_M(m+1)(m+1) (\psi_j(u))^m
\right)
{\rm d}u  \\
& = 
P_0(A)
\int_{0}^\infty 
\kappa_j(u,1)
\Psi(1,u)
{\rm d}u = P_0(A).
\end{split}
\end{equation*}
The last equality holds since the final integral is equal to the pEPPF given in \Cref{thm:peppf} when $d=1$, $n=1$ and $k=1$. Hence it is trivially equal to $1$.

\noindent (iii) 
Without loss of generality, we set $j=1$ and $l=2$. 
Leveraging on the integral identity 
$\frac{1}{x^n} = \int_0^\infty \frac{u^{n-1}}{\Gamma(n)}e^{-ux}{\rm d}x$, for any Borel sets $A_1,A_2\in\mathcal{X}$,  we get
\begin{equation}
\label{eqn:momproof_ii1_0}
\begin{split}
&   \E\left[P_1(A_1)^{n_1}P_2(A_2)^{n_2}\right]\\ 
& \qquad\qquad =
\E\left[
\frac{1}{\mu_1(\X)^{n_1}\mu_2(\X)^{n_2}}
\mu_1(A_1)^{n_1}\mu_2(A_2)^{n_2}
\right]  \\
&  \qquad\qquad = 
\E\left[
\int_{[0,\infty]^2}
\frac{u_1^{n_1-1}u_2^{n_2-1}}{\Gamma(n_1)\Gamma(n_2)}
e^{-(u_1\mu_1(\X) + u_2\mu_2(\X))}
\mu_1(A_1)^{n_1}\mu_2(A_2)^{n_2}
\right] .
\end{split}
\end{equation}
As done in previous calculations, we now want to expand the definitions of $\mu_j(A_j)^{n_j}$, for $j=1,2$, using Equation \eqref{eqn:app_random_measure_def}. 
More specifically, we need to express the $\mu_j(A_j)$s as  $n=n_1+n_2$ different integrals, which are explicitly introduced through these expressions
\begin{equation*}
\label{eqn:mom_def_integrationvariables}
\begin{split}
&\mu_1(A_1)^{n_1} = 
\left(\int_\G \delta_{v_1}(A_1)r_{1,1}\Phi(\mydiff \bmr_1, \mydiff v_1)\right)
\cdots
\left(\int_\G \delta_{v_{n_1}}(A_1)r_{n_{1},1}\Phi(\mydiff \bmr_{n_1}, \mydiff v_{n_1})\right),\\
&\mu_2(A_2)^{n_2} = 
\left(\int_\G \delta_{z_1}(A_2)t_{1,2}\Phi(\mydiff \bmt_1, \mydiff z_1)\right)
\cdots
\left(\int_\G \delta_{z_{n_2}}(A_2)t_{n_2,2}\Phi(\mydiff \bmt_{n_2}, \mydiff z_{n_2})\right),
\end{split}
\end{equation*}
where $\bmr_i = \left(r_{i,1},\dots,r_{i,d}\right)$ for $i=1,\dots,n_1$ and 
$\bmt_i = \left(t_{i,1},\dots,t_{i,d}\right)$ for $i=1,\dots,n_2$.
As a consequence, Equation \eqref{eqn:momproof_ii1_0} becomes
\begin{equation*}
\label{eqn:momproof_ii1}
\begin{split}
\E\left[P_1(A_1)^{n_1}P_2(A_2)^{n_2}\right]=&  
\E\Biggl[ \ 
\int_{[0,\infty]^2}
\frac{u_1^{n_1-1}u_2^{n_2-1}}{\Gamma(n_1)\Gamma(n_2)}
e^{-(u_1\mu_1(\X) + u_2\mu_2(\X))} \\
&\times \prod_{i=1}^{n_1}
\int_{\G}r_{i,1}\delta_{v_i}(A_1)
\Phi({\rm d}\bmr_{i},{\rm d}v_{i})  \prod_{l=1}^{n_2}
\int_{\G}t_{l,2}\delta_{z_{l}}(A_2)
\Phi({\rm d}\bmt_{l},{\rm d}z_{l})
\ {\rm d}\bmu,
\Biggr] 
\end{split}
\end{equation*}
where ${\rm d}\bmu = ({\rm d}u_1,{\rm d}u_2)$. 
We then collect all $n$ integration variables by defining
a vector $\bmx = (v_{1},\dots,v_{n_1},z_{1},\dots,z_{n_2})$ and a vector of vectors 
$\bms = (\bmr_{1},\dots,\bmr_{n_1},\bmt_{1},\dots,\bmt_{n_2})$.
Then, the  Fubini's theorem implies
\begin{equation*}
\label{eqn:momproof_ii2}
\begin{split}
\E[P_1&(A_1)^{n_1}P_2(A_2)^{n_2}]  =
\bigintssss_{[0,\infty]^2}
\frac{u_1^{n_1-1}u_2^{n_2-1}}{\Gamma(n_1)\Gamma(n_2)}  \\
& \times \ \E\left[
\bigintssss_{\G^n} 
e^{-(u_1\mu_1(\X) + u_2\mu_2(\X))} \delta_{\bmx}(A_1^{n_1}\times A_2^{n_2})
\left(\prod_{i=1}^{n_1}s_{i,1}\right)
\left(\prod_{i=n_1+1}^{n_1+n_2}s_{i,2}\right)
\Phi^n({\rm d}\bms,{\rm d}\bmx)
\right] 
{\rm d}\bmu.
\end{split}
\end{equation*}
where $A^{n} := A\times\dots\times A$ for $n$ times and $s_{i,j}$ represents the $j$-th component of the vector in the $i$-th position of $\bms$.

Differently from previous computations, it is not convenient to immediately apply the higher order CLM formula. Indeed, integrals are not defined over infinitely small sets but over possibly overlapping sets $A_1^{n_1}\times A_2^{n_2}$. As a consequence, we can not replace the $n$-power $\Phi^n$ with the $n$-factorial power $\Phi^{(n)}$ point process. 
Instead, we must compute the expected value using \cite[Lemma 14.E.4]{baccelli2020}
\begin{equation*}
\label{eqn:momproof_ii3}
\begin{split}
\E[P_1&(A_1)^{n_1}P_2(A_2)^{n_2}] = \bigintssss_{[0,\infty]^2}
\frac{u_1^{n_1-1}u_2^{n_2-1}}{\Gamma(n_1)\Gamma(n_2)}
\sum_{k=1}^n\sum_{(*)_k}\frac{1}{k!}
\prod_{j=1}^2 \binom{n_j}{n_{j,1},\dots,n_{j,k}}  \\
&  \hspace{1.5cm} \times
\E\left[
\bigintssss_{\G^k}
e^{-(u_1\mu_1(\X) + u_2\mu_2(\X))}
\prod_{m=1}^k
\left(
\delta_{y_m}(B_m)
\prod_{j=1}^2
w_{j,m}^{n_{j,m}}
\right)
\Phi^{(k)}({\rm d}\bmw,{\rm d}\bmy)
\right] 
{\rm d}\bmu ,
\end{split}
\end{equation*}
where we defined the set $B_m := \bigcap_{i=1}^{n_{1,m}}A_1 \cap \bigcap_{i=1}^{n_{2,m}}A_2$ and we are summing over all possible partitions of $k$ elements but, since the order of the elements does not matter, we only consider their counts. Namely, the set $(*)_k$ is defined as
\begin{equation*}
\label{eqn:momproof_ii4}
\begin{split}
(*)_k := \Bigl\{ & (n_{1,1}\dots,n_{1,k},n_{2,1}\dots,n_{2,k}) : 
n_{j,m} \geq 0 \text{ for }j=1,2\ m=1,\dots,k, \text{ and } \\
&\sum_{m=1}^k n_{j,m} = n_j \text{ for }j=1,2, \text{ and }
n_{1,m}+n_{2,m}\geq 1 \text{ for } m=1,\dots,k
\Bigr\}.
\end{split}
\end{equation*}
Consequently, note that the integration set $B_m$ is never empty but it is equal to $A_1\cap A_2$ if $n_{j,m}>0$ for both $j=1,2$ or it is equal to $A_1$ when $n_{2,m}=0$ or $A_2$ if $n_{1,m}=0$. 
We now apply the higher order CLM formula \eqref{eqn:app_reduced_highorder_CLMformula}, and we use the expression of the $k$-th factorial moment measure \eqref{eqn:app_factmeasure_vecIFPP} to get
\begin{equation}
\label{eqn:momproof_ii5}
\begin{split}
& \E[P_1(A_1)^{n_1}P_2(A_2)^{n_2}] \\
&\qquad\qquad= \bigintssss_{[0,\infty]^2}
\frac{u_1^{n_1-1}u_2^{n_2-1}}{\Gamma(n_1)\Gamma(n_2)}
\sum_{k=1}^n\sum_{(*)_k}
\frac{1}{k!}\prod_{j=1}^2 \binom{n_j}{n_{j,1},\dots,n_{j,k}} \\
& \qquad\qquad \hspace{1cm} \times \E\left[M^{(k)}\right]
\bigintssss_{\G^k}
\prod_{m=1}^k \delta_{y_m}(B_m)
\prod_{j=1}^2\prod_{m=1}^k
e^{u_jw_{j,m}}w_{j,m}^{n_{j,m}} \\
& \qquad\qquad\hspace{1cm} \times 
\E\left[
\exp\left\{
-\sum_{j=1}^2 \bigintssss_{\mathbb{G}} u_j s_j
\Phi^!_{\bmw,\bmy}({\rm d}\bms,{\rm d}z)
\right\}
\right]
P_0^{k}({\rm d}\bmy)H^k({\rm d}\bmw)
{\rm d}\bmu  \\
&\qquad\qquad = \sum_{k=1}^n
\left\{
\prod_{m=1}^k P_0(B_m)
\sum_{(*)_k} 
\frac{1}{k!}\prod_{j=1}^2 \binom{n_j}{n_{j,1},\dots,n_{j,k}}\right.
\\
& \qquad\qquad\hspace{1cm} \times
\left.\bigintssss_{[0,\infty]^2}
\frac{u_1^{n_1-1}u_2^{n_2-1}}{\Gamma(n_1)\Gamma(n_2)}
\prod_{j=1}^2\prod_{m=1}^k
\kappa_j(u_j,n_{j,m})
\Psi(2,\bmu)
{\rm d}\bmu
\right\} .
\end{split}
\end{equation}
To conclude, we recognize that the final integral is the pEPPF when $d=2$, $K_{(n_1,n_2)}=k$ and counts $\bmn_j=(n_{j,1},\dots,n_{j,k})$, for each $j=1,2$. 
Moreover, note that in the special case $A_1 = A_2 = A \in \mathcal{X}$, we have $B_m=A$, in particular  $B_m$ does not depend on $m$.
In such a case, we have
\begin{equation}
\label{eqn:momproof_ii6}
\begin{split}
\sum_{k=1}^n
P_0(A)^k
\sum_{(*)_k} 
\frac{1}{k!}\prod_{j=1}^2 \binom{n_j}{n_{j,1},\dots,n_{j,k}}
\Pi_k^{(n)}(\bmn_1,\bmn_2) = 
\sum_{k=1}^{n_1+n_2}
P_0(A)^k
\P(K_{(n_1,n_2)}=k),
\end{split}
\end{equation}
thus,  (ii) easily follows from \eqref{eqn:momproof_ii6}.
\end{proof}

\subsection{Correlation}\label{app:correlation}
Let $A,B \in \mathcal{X}$. The correlation between $P_j(A)$ and $P_l(B)$ can be computed as
\begin{equation}
\text{corr}(P_j(A),P_l(B)) \ = \ 
\frac{\E\Bigl[P_j(A)P_l(B)\Bigr] - \E\Bigl[P_j(A)\Bigr]\E\Bigl[P_l(B)\Bigr]}{\sqrt{\text{Var}(P_j(A))\text{Var}(P_l(B))}}.
\label{eqn:corr_def}
\end{equation}
The numerator in Equation \eqref{eqn:corr_def} easily follows from \Cref{thm:moments}. As for the denominator, we use the following result from \cite{cremaschi2020},
\begin{equation*}
\text{Var}(P_j(A)) = \P\left(K_{j,(2)}=1\right)P_0(A)(1-P_0(A)).
\label{eqn:varianza_cremaschi}
\end{equation*}
If we plug the previous expressions in \eqref{eqn:corr_def}, we obtain
\begin{equation}
\label{eqn:corr_general}
\begin{split}
&   \text{corr} \left( P_j(A),P_l(B)\right)  \\
& \qquad =
\frac{\mathbb{P}\left(K_{(1,1)} = 1\right)
\left( \ P_0(A\cap B) - P_0(A)P_0(B) \ \right)}
{\sqrt{\mathbb{P}\left(K_{j,(2)} = 1\right)\mathbb{P}\left(K_{l,(2)} = 1\right)P_0(A)(1-P_0(A))P_0(B)(1 - P_0(B))}}.
\end{split}
\end{equation}
In particular, Equation \eqref{eqn:corr} in the main paper follows from Equation \eqref{eqn:corr_general} when $A=B$.

\subsection{Coskewness} \label{app:coskewness}
Consider two random variables, $X$ and $Y$, both having finite third moments. Let $\mu_X$ denote the mean of $X$, $\sigma_X^2$ its variance, and define the skewness of $X$ as $\operatorname{Sk}(X) = \frac{\mathbb{E}\left[(X-\mu_X)^3\right]}{\sigma_X^3}$. The same definitions and notation can be applied to $Y$.
The coskewness of $X$ over $Y$ is defined as
\begin{equation*}
\begin{split}
\operatorname{CoSk}(X, Y)=
\frac{\mathbb{E}\left(\left(X-\mu_X\right)^2\left(Y-\mu_Y\right)\right)}{\sigma_X^2 \sigma_Y} & =
\frac{\mathbb{E}\left(X^2 Y\right)-2 \mu_X \mathbb{E}(X Y)-\mathbb{E}\left(X^2\right) \mu_Y+2 \mu_X^2 \mu_Y}{\sigma_X^2 \sigma_Y} \\
& =\frac{\operatorname{Cov}\left(X^2, Y\right)-2 \mu_X \operatorname{Cov}(X, Y)}{\sigma_X^2 \sigma_Y}
\end{split}
\end{equation*}
We note that the coskewness operator $\operatorname{CoSk}$ is not symmetric, and the coskewness of $Y$ over $X$ is defined analogously. Additionally, the coskewness can deviate from zero even when $X$ and $Y$ are symmetric, as it depends on the random variable $X+Y$. This mixed third moment serves as an indicator of the joint asymmetry of $X$ and $Y$ (see, for instance, \citet{friend1980, fang1997}). In the field of Econometrics, the coskewness is frequently employed to study the risk associated with financial portfolios.
In our setting, if we let $X=P_j(A)$ and $Y=P_l(A)$.
From \Cref{thm:moments} we have that both $\mu_X$ and $\mu_Y$ equal $P_0(A)$. Then, using Equations \eqref{eqn:mixedmom1} and \eqref{eqn:mixedmom2} in the paper we get
\begin{equation}
\label{eqn:cosk}
\begin{split}
&\text{CoSk}\left(P_j(A),P_l(A)\right) 
= \frac{1}{\left(\mathbb{P}(K_{(2)} = 1)P_0(A)(1-P_0(A))\right)^{3/2}}  \\
& \quad \times \left(
\E[P_0(A)^{K_{(2,1)}}]   - P_0(A)
\left(
\E[P_0(A)^{K_{(2)}}] - 2P_0(A)\left(\mathbb{P}(K_{(2)} = 1) + \mathbb{P}(K_{(1,1)} = 2)
\right)\right)
\right).
\end{split}
\end{equation}
A more general formulation of \eqref{eqn:cosk} for $\text{CoSk}\left(P_j(A),P_l(B)\right)$ could also be computed using Equation \eqref{eqn:momproof_ii5}.

\subsection{Proofs of Equations \eqref{eqn:corr_Dir} and \eqref{eqn:corr_limits}}\label{app:corr_limiting}

First, we  prove Equation \eqref{eqn:corr_Dir}.
Without loss of generality, we set $j=1$ and $l=2$. Then, we recall that when $A=B$, Equation \eqref{eqn:corr_general} boils down to
\begin{equation}
\label{eqn:app_corr_iniziale}
\text{corr} \left( P_1(A),P_2(A)\right)  \ = \  \frac{\P\left(K_{(1,1)} = 1\right)}
{\sqrt{\P\left(K_{1,(2)} = 1\right)\mathbb{P}\left(K_{2,(2)} = 1\right)}},
\end{equation}
where both the numerator and the denominator admit an explicit representation in terms of pEPPF and EPPF, respectively.
We first focus on the numerator in \eqref{eqn:app_corr_iniziale}. Since we are dealing with Vec-FDP, we can exploit the results of \Cref{section:HMFM_properties} to obtain 
\begin{equation}
\label{eqn:peppf_K11}
\begin{split}
\P\left(K_{(1,1)} = 1\right) &= \Pi_1^{(2)}\left(1,1\right)  \\
&= \int_{[0,\infty]^2}\left(1+\frac{\Lambda}{(1+u_1)^{\gamma_1}(1+u_2)^{\gamma_2}}\right)  \\
& \times
\exp\left\{-\Lambda\left(1 - \frac{1}{(1+u_1)^{\gamma_1}(1+u_2)^{\gamma_2}}\right)\right\}
\frac{\gamma_1\gamma_2}{(1+u_1)^{1+\gamma_1}(1+u_2)^{1+\gamma_2}}{\rm d}u_1{\rm d}u_2 .
\end{split}
\end{equation}
Applying the following change of variables $x_j = \frac{1}{(1+u_j)^{\gamma_j}}$, for $j=1,2$, Equation \eqref{eqn:peppf_K11} equals 
\begin{equation*}
\label{eqn:corr_numerator1}
\begin{split}
\P\left(K_{(1,1)} = 1\right) &= 
\int_{[0,1]^2} (1+\Lambda x_1 x_2)
e^{-\Lambda(1-x_1x_2)}{\rm d}x_1{\rm d}x_2 \\
&  = 
e^{-\Lambda} \left(\int_{[0,1]^2} e^{\Lambda x_1x_2}{\rm d}x_1{\rm d}x_2 +
\Lambda \int_{[0,1]^2}x_1x_2 e^{\Lambda x_1x_2}{\rm d}x_1{\rm d}x_2 \right).
\end{split}
\end{equation*}
Integrand functions are positive, hence we use Fubini's theorem to compute the bi-dimensional integral as two iterative integrals. In particular, we first integrate by parts with respect to $x_2$ to get,
\begin{equation}
\label{eqn:corr_numerator2}
\begin{split}
\P\left(K_{(1,1)} = 1\right) &=
e^{-\Lambda} 
\left(\int_0^1 \frac{1}{\Lambda x_1} (e^{\Lambda x_1}-1){\rm d}x_1 +
\int_0^1 e^{\Lambda x_1} {\rm d}x_1 -
\int_0^1 \frac{1}{\Lambda x_1}(e^{\Lambda x_1} - 1)
{\rm d}x_1 \right)  \\
& = 
e^{-\Lambda} \int_0^1 e^{\Lambda x_1}{\rm d}x_1 = 
\frac{1 - e^{-\Lambda}}{\Lambda}.
\end{split}
\end{equation}
One can exploit similar arguments to compute  the denominator $\P\left(K_{1,(2)} = 1\right)$ in \eqref{eqn:app_corr_iniziale}.
Indeed, we note that, marginally, $P_1$ and $P_2$ are Finite Dirichlet Processes of \cite{argiento2022annals}. Then,
\begin{equation}
\label{eqn:peppf_K12}
\begin{split}
\P\left(K_{j,(2)} = 1\right) = \Pi_1^{(2)}\left(2\right) = 
\int_{0}^\infty
\left(1+\frac{\Lambda}{(1+u)^{\gamma_j}}\right)
e^{-\Lambda\left(1 - \frac{1}{(1+u)^{\gamma_j}}\right)}
\frac{u\gamma_j(1+\gamma_j)}{(1+u)^{2+\gamma_j}}{\rm d}u,
\end{split}
\end{equation}
for $j=1,2$. The change of variable, $x_j = \frac{1}{(1+u_j)^{\gamma_j}}$, for $j=1,2$, is applied to Equation \eqref{eqn:peppf_K12} which boils down to
\begin{equation}
\label{eqn:corr_denominator1}
\begin{split}
\P\left(K_{j,(2)} = 1\right) &= 
(\gamma_j + 1)e^{-\Lambda}
\int_0^1
(1 + \Lambda x)e^{\Lambda x}(1 - x^{1/\gamma_j}) {\rm d}x.
\end{split}
\end{equation}
A similar formula holds true for the probability $\P\left(K_{j,(1)} = 1\right)$.
Substituting  the expressions \eqref{eqn:corr_numerator2} and \eqref{eqn:corr_denominator1} in \eqref{eqn:app_corr_iniziale}, we get Equation \eqref{eqn:corr_Dir}.

We now focus on the proof of Equation \eqref{eqn:corr_limits} to obtain the limit of the correlation in \eqref{eqn:corr_Dir} as $\gamma_j$ and $\gamma_l$ go to either $0$ or $\infty$, simultaneously. 
We note  that the integral $I(\gamma_j , \Lambda)$ in \eqref{eqn:corr_Dir}
does not admit an analytical solution, hence we need to study the limiting cases for small and large values of $\gamma_j$.\\
Firstly, when $\gamma_j \rightarrow 0$, we use the dominated convergence theorem to show that 
\begin{equation*}
\label{eqn:corr_limgamma0}
\begin{split}
\lim_{\gamma_j \rightarrow 0 }    I(\gamma_j , \Lambda) & =  \lim_{\gamma_j \rightarrow 0 }  \int_0^1
(1 + \Lambda x)e^{-\Lambda (1-x)}(1 - x^{1/\gamma_j}) {\rm d}x \\
& =
e^{-\Lambda} \int_0^1
(1 + \Lambda x)e^{\Lambda x} {\rm d}x = 1.
\end{split}
\end{equation*}
The final equality has been obtained integrating by parts after noting that 
$\int (1+\Lambda x)e^{\Lambda x}dx = x e^{\Lambda x} + c$. Hence, we can evaluate the limit in \eqref{eqn:corr_Dir} to obtain:
\[
\lim_{\gamma_j,\gamma_l \rightarrow 0 }
\text{corr}(P_j(A), P_l(A)) = \frac{1 - e^{-\Lambda}}{\Lambda}.
\]
The limiting case for $\gamma_j , \gamma_l\rightarrow \infty$ is obtained similarly. Once again, we first exchange the limit and the integral leveraging on the dominated convergence theorem and then we integrate by parts as before to obtain,
\begin{equation}
\label{eqn:corr_limgammainf}
\begin{split}
\lim_{\gamma_j \rightarrow \infty } (\gamma_j + 1)I(\gamma_j , \Lambda)  &= 
\lim_{\gamma_j \rightarrow \infty } (\gamma_j + 1)e^{-\Lambda}
\int_0^1
(1 + \Lambda x)e^{\Lambda x}(1 - x^{1/\gamma_j}) {\rm d}x \\
&=
-e^{-\Lambda} \int_0^1
(1 + \Lambda x)e^{\Lambda x}\log(x) {\rm d}x = 
\frac{1-e^{-\Lambda}}{\Lambda}.
\end{split}
\end{equation}
Hence, the limit of \eqref{eqn:corr_Dir} can be evaluate on the basis of  \eqref{eqn:corr_limgammainf}, to get
\[
\lim_{\gamma_j,\gamma_l \rightarrow \infty }
\text{corr}(P_j(A), P_l(A)) = 1.
\]

\subsection{Graphical representation of the correlation function}

We now provide a graphical visualization of the correlation function. Equation \eqref{eqn:corr_Dir} depends on three parameters, $\gamma_1$, $\gamma_2$ and $\Lambda$. For graphical convenience, we always refer to the case when $\gamma_1$ and $\gamma_2$ are set to a common value, namely $\gamma$. 
In the left panel of \Cref{fig:cor_main} in the main manuscript, we show the correlation function in Equation \eqref{eqn:corr_Dir} evaluated on a grid of values of $\gamma$  and fixed values of $\Lambda$.
Then, in the right panel of the same figure, we do the opposite, evaluating the correlation over a grid of values of $\Lambda$ and some fixed values of $\gamma$. In the first case, curves are monotonically increasing functions of $\gamma$, whose lowest value is given in the first row of Equation \eqref{eqn:corr_limits}. 
They rapidly increase for small values of $\gamma$ and then slowly tend to their limiting value, that is $1$, as shown in the second row of Equation \eqref{eqn:corr_limits}.
The right panel of \Cref{fig:cor_main} in the main manuscript depicts a function that monotonically decreases as $\Lambda$ increases.
In particular, we note that the curve representing $\gamma=1$ is much higher than the other ones, showing how crucial is this parameter in determining the value of the correlation.
To conclude, \Cref{fig:corr_3d} shows the case when both $\Lambda$ and $\gamma$ are free to change.


\begin{figure}[!ht]
\centering
\includegraphics[width=0.5\textwidth]{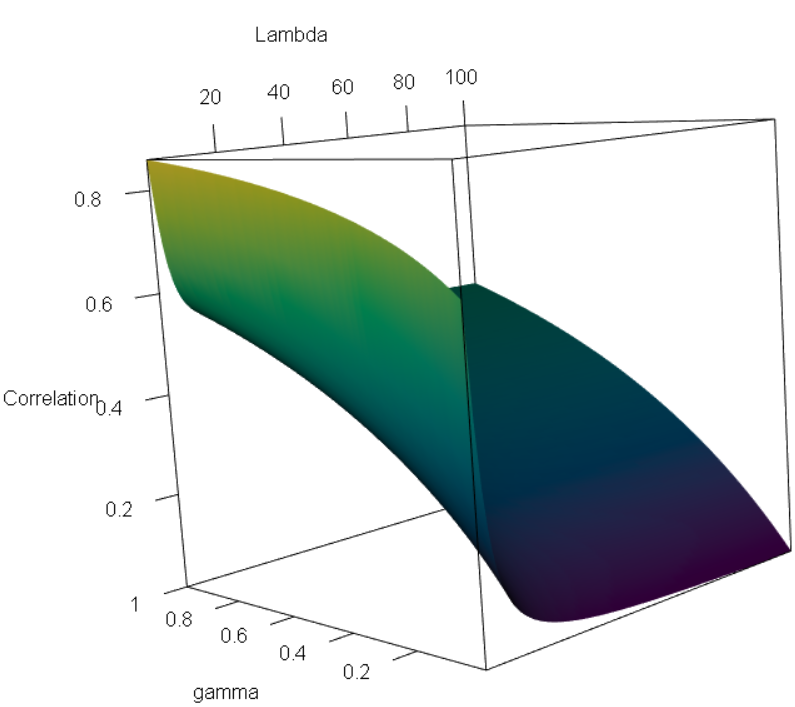}
\caption{ Correlation function for $\gamma_1 = \gamma_2$. 
}
\label{fig:corr_3d}
\end{figure}

\subsection{Proof of Theorem \ref{thm:posterior}}
\label{proof:posterior}

We now introduce useful notation and a vector of additional variables 
$\bmU_n = \left(U_{1},\dots,U_{d}\right)$ which helps us to describe Bayesian inference for our model. 
Let $\bmtheta = \left(\bmtheta_1,\dots,\bmtheta_d\right)$ be the collection of all random variables $\theta_{j,i}$ across all possible groups, for $j=1,\dots,d$ and $i=1,\dots,n_j$.\\
The conditional distribution of $(\bmtheta \mid \mu_1,\dots,\mu_d)$ is given by
\begin{equation}
\mathbb{P}(\bmtheta \in d\bmtheta \mid \mu_1,\dots,\mu_d) = 
\prod_{j=1}^d\frac{1}{T_j^{n_j}} \prod_{i=1}^{n_j} \mu_j\left({\rm d} \theta_{j,i}\right),    
\label{eqn:introU_1}
\end{equation}
where $T_j=\mu_j(\mathbb{X})$. 
Similarly to \citep{jlp2009,argiento2022annals}, 
we introduce a vector of auxiliary variables $\bmU_n=\left(U_{1},\dots,U_{d}\right)$ such that $U_{j} \mid T \ind \operatorname{Gamma}(n_j, T_j)$. 
Leveraging on the integral identity 
$\frac{1}{x^n} = \int_0^\infty \frac{u^{n-1}}{\Gamma(n)}e^{-ux}{\rm d}x$
we introduce a suitable augmentation of the underlying probability space and consider the joint conditional distribution of 
$\left(\bmtheta, \bmU_n \mid \mu_1,\dots,\mu_d\right)$
\[
\mathbb{P}\left(\bmtheta \in {\rm d}\bmtheta, \bmU_n \in {\rm d}\bmu \mid \mu_1,\dots,\mu_d \right) =
\prod_{j=1}^d\frac{u_j^{n_j-1}}{\Gamma(n_j)} e^{-u_jT_j }{\rm d}\bmu
\prod_{i=1}^{n_j} \mu_j\left({\rm d}\theta_{j,i}\right).
\]
As explained in \Cref{subsection:clustering}, since $(\mu_1,\dots,\mu_d)$ are almost surely discrete, with positive probability there will be ties
within the sample $\bmtheta$. For this reason, $\bmtheta$ is equivalently characterized by the couple $(\bmtheta^{**},\rho)$, introduced in section \Cref{subsection:clustering}. Moreover, since the order is not relevant, we recall that $\rho$ is further characterized given its number of clusters $K_{(n_1,\dots,n_d)}=k$ and their counts $(\bmn_1,\dots,\bmn_k)$. As a consequence, we may write
\begin{equation}  
\mathbb{P}\left(\bmtheta \in {\rm d}\bmtheta, \bmU_n \in {\rm d}\bmu \mid \mu_1,\dots,\mu_d \right) =
\prod_{j=1}^d\frac{u_j^{n_j-1}}{\Gamma(n_j)} e^{-u_jT_j }{\rm d}\bmu
\prod_{k=1}^{K} \mu_j\left({\rm d}\theta^{**}_k\right)^{n_{j,k}}.
\label{eqn:introU_2}
\end{equation}
The augmentation of \eqref{eqn:introU_1} through $\bmU_n$ is made possible since the marginal distribution of $\bmU_n$ exists. Indeed, for each component, say $j$, it holds that
\begin{equation*}
\begin{split}
\P\left(U_j\in {\rm d}u_j\right) &= 
\E\left[\P\left(U_j\in {\rm d}u_j\mid T_j\right)\right] = 
\E\left[\frac{u_j^{n_j-1}}{\Gamma(n_j)}T_j^{n_j}e^{-T_ju_j}{\rm d}u_j\right]  \\
& = \frac{u_j^{n_j-1}}{\Gamma(n_j)}{\rm d}u_j\int_0^\infty t^{n_j}e^{-tu_j}f_{T_j}(t){\rm d}t,
\end{split}
\end{equation*}
which is well defined since $T_j$ is, almost surely, a sum of a finite number of positive random variables hence it is finite and admits density $f_{T_j}$. Finally, leveraging on the independence of the components of $\bmU_n$, it follows that it 
has a density, with respect to the Lebesgue measure, that is
\begin{equation}
\label{eqn:marginal_U}
f_{\bmU_n}(\bmu)=\prod_{j=1}^d\frac{u_j^{n_j-1}}{\Gamma(n_j)} \int_0^{+\infty} 
t^{n_j}e^{-t_j u_j} f_{T_j}(t_j) {\rm d}t_j.
\end{equation}

To simplify the notation, we drop subscript $n$ for random variables $\bmU_n$ and simply write $\bmU = (U_1,\dots,U_d)$. Nevertheless, we recall that such a random vector depends on the group sizes $(n_1,\dots,n_d)$. 

\begin{proof}
The strategy to prove \Cref{thm:posterior} consists in describing the Laplace functional of $(\mu_1,\dots,\mu_d)$ given $\bmtheta$ and $\bmU$.
Let $f_j:\X\rightarrow\R^+$ be $\mathcal{X}$ measurable for each $j\in\{1,\dots,d\}$. Then, following \citet{jlp2009,nRM2023}, we have
\begin{equation}
\label{eqn:postproof_1}
\begin{split}
\E\Biggl[ &
\exp\left\{
-\sum_{j=1}^d \int_\X f_j(x)\mu_j({\rm d}x)
\right\}
\mid \bmtheta\in {\rm d}\theta, \bmU\in {\rm d}\bmu 
\Biggr]  \\
& = 
\frac{
\E\left[
\exp\left\{
-\sum_{j=1}^d \int_\X f_j(x)\mu_j({\rm d}x)
\right\}
\P\left(\bmtheta\in {\rm d}\theta, \bmU\in {\rm d}\bmu\mid \mu_1,\dots,\mu_d\right)
\right]
}{
\E\left[
\P\left(\bmtheta\in {\rm d}\theta, \bmU\in {\rm d}\bmu\mid \mu_1,\dots,\mu_d\right)
\right]
}.
\end{split}
\end{equation}
We now focus on the numerator of \eqref{eqn:postproof_1} since the denominator will follow by setting $f_j = 0$ for each $j$. 
We note that Equation \eqref{eqn:introU_2} equals the probability appearing in the expected value. Then, the numerator under consideration equals to
\begin{equation*}
\label{eqn:postproof_2}
\begin{split}
\E\Biggl[ & 
\exp\left\{
-\sum_{j=1}^d \int_\X f_j(x)\mu_j({\rm d}x)
\right\}
\P\left(\bmtheta\in {\rm d}\theta, \bmU\in {\rm d}\bmu\mid \mu_1,\dots,\mu_d\right)
\Biggr] \\
& = \E\left[
\exp\left\{
-\sum_{j=1}^d \int_\X f_j(x)\mu_j({\rm d}x)
\right\}
\prod_{j=1}^d
\left\{
\frac{u_j^{n_j-1}}{\Gamma(n_j)} e^{-u_jT_j }{\rm d}\bmu
\prod_{k=1}^{K} \mu_j\left({\rm d}\theta^{**}_k\right)^{n_{j,k}}
\right\}
\right]  \\
& =       \prod_{j=1}^d\left(\frac{u_j^{n_j-1}}{\Gamma(n_j)}\right){\rm d}\bmu \ 
\E\left[
\exp\left\{
-\sum_{j=1}^d \int_\X \left(u_j+f_j(x)\right)\mu_j({\rm d}x)
\right\}
\prod_{j=1}^d \prod_{k=1}^{K} 
\mu_j\left({\rm d}\theta^{**}_k\right)^{n_{j,k}}
\right].
\end{split}
\end{equation*}
Reasoning as for \eqref{eqn:peppfproof_2}, the previous expression equals
\begin{equation}
\label{eqn:postproof_3}
\begin{split}
\E\Biggl[ & 
\exp\left\{
-\sum_{j=1}^d  \int_\X f_j(x)\mu_j({\rm d}x)
\right\}
\P\left(\bmtheta\in {\rm d}\theta, \bmU\in {\rm d}\bmu\mid \mu_1,\dots,\mu_d\right)
\Biggr] =\prod_{j=1}^d\left(\frac{u_j^{n_j-1}}{\Gamma(n_j)}\right){\rm d}\bmu \ \times\\
& \E\left[
\int_{\G^k}
\prod_{k=1}^K
\left( \delta_{y_k}({\rm d}\theta^{**}_k)\prod_{j=1}^d s_{j,k}^{n_{j,k}} \right)  
\exp\left\{
-\sum_{j=1}^d \int_\G w_j\left(u_j+f_j(x)\right)\Phi({\rm d}\bmw,{\rm d}x)
\right\}
\Phi^K({\rm d}\bms,{\rm d}\bmy)
\right].
\end{split}
\end{equation}
Note that \eqref{eqn:postproof_3} has the same form of \eqref{eqn:peppfproof_3}. Therefore, applying the higher order CLM formula we get
\begin{equation}
\label{eqn:postproof_4}
\begin{split}
\E\Biggl[ & 
\exp\left\{
-\sum_{j=1}^d  \int_\X f_j(x)\mu_j({\rm d}x)
\right\}
\P\left(\bmtheta\in {\rm d}\theta, \bmU\in {\rm d}\bmu\mid \mu_1,\dots,\mu_d\right)
\Biggr] \\
& = \prod_{j=1}^d
\left\{
\left(\frac{u_j^{n_j-1}}{\Gamma(n_j)}\right){\rm d}\bmu 
\prod_{k=1}^K
\int_{0}^\infty
e^{-s_j\left(f_j(\theta^{**}_k)+u_j\right)}s^{n_{j,k}}H_j({\rm d}s_j)
\right\}  \\
& \quad \times
\E\left[
\exp\left\{
-\sum_{j=1}^d \int_\G w_j\left(u_j+f_j(\theta^{**}_k)\right)\Phi^!_{\bms,\bmy}({\rm d}\bmw,{\rm d}x)
\right\}
\right] \
\E\left[M^{(K)}\right]
\prod_{k=1}^K P_0({\rm d}\theta^{**}_k).
\end{split}
\end{equation}
As a consequence, the denominator of \eqref{eqn:postproof_1} equals 
\begin{equation}
\label{eqn:postproof_5}
\begin{split}
\E\Bigl[&
\P(\bmtheta  \in {\rm d}\theta, \bmU\in {\rm d}\bmu\mid \mu_1,\dots,\mu_d )
\Bigr] = \prod_{j=1}^d\left\{
\frac{u_j^{n_j-1}}{\Gamma(n_j)}
\prod_{k=1}^K \kappa_j(u_j,n_{j,k})
\right\}  \\
& \times \quad
\E\left[
\exp\left\{
-\sum_{j=1}^d \int_\G w_j u_j\Phi^!_{\bms,\bmy}({\rm d}\bmw,{\rm d}x)
\right\}
\right]\E\left[M^{(K)}\right]\prod_{k=1}^K P_0({\rm d}\theta^{**}_k).
\end{split}
\end{equation}
Taking the ratio between \eqref{eqn:postproof_4} and \eqref{eqn:postproof_5}, we obtain
\begin{equation}
\label{eqn:postproof_6}
\prod_{j=1}^d
\prod_{k=1}^K \int_0^\infty
\frac{
e^{-s_j f_j(\theta^{**}_k)}e^{-s_j u_j}s_j^{n_{j,k}}
}{
\kappa_j(u_j,n_{j,k})
}H_j({\rm d}s_j)  
\frac{
\E\left[
\exp\left\{
-\sum_{j=1}^d \int_\G w_j\left(u_j+f_j(x)\right)\Phi^!_{\bms,\bmy}({\rm d}\bmw,{\rm d}x)
\right\}
\right]
}{
\E\left[
\exp\left\{
-\sum_{j=1}^d \int_\G w_ju_j\Phi^!_{\bms,\bmy}({\rm d}\bmw,{\rm d}x)
\right\}
\right]
},
\end{equation}
which is the Laplace functional of two independent processes, conditionally to $\bmU$, $K_{(n_1,\dots,n_d)}$ and $\bmtheta^{**}$. 
To conclude the proof of the theorem, it is enough to show that the Laplace functionals of the random measures described in the statement coincide with those in \eqref{eqn:postproof_6}.

\noindent (i) We first show that the first term in Equation \eqref{eqn:postproof_6} is the Laplace functional of the vector of random measures $\left(\mu_1^{(a)},\dots,\mu_d^{(a)}\right)$, where 
$\mu_j^{(a)}=\sum_{k=1}^K S_{j,k}^{(a)}\delta_{\theta^{**}_k}$. We refer to the main manuscript for the definition of random variables $S_{j,k}^{(a)}$, for $j=1,\dots,d$ and $k=1,\dots,K$.
The Laplace transform of $(\mu^{(a)}_1,\dots,\mu^{(a)}_d)$ equals 
\begin{equation}
\label{eqn:postproof_10}
\begin{split}
\E\Biggl[&
\exp \left\{
-\sum_{j=1}^d \int_{\X} f_j(x)\mu^{(a)}_j(dx)
\right\} \mid \bmtheta\in {\rm d}\theta, \bmU\in {\rm d}\bmu
\Biggr]  \\
&  = \E\left[ \prod_{j=1}^d
\exp \left\{
-\sum_{k=1}^K S^{(a)}_{j,k}f_j(\theta^{**}_k)
\right\} \mid \bmtheta\in {\rm d}\theta, \bmU\in {\rm d}\bmu
\right]  \\
&  = \E\left[ 
\prod_{j=1}^d\prod_{k=1}^K
\exp \left\{
- S^{(a)}_{j,k}f_j(\theta^{**}_k)
\right\} \mid \bmtheta\in {\rm d}\theta, \bmU\in {\rm d}\bmu
\right]  \\
&  = 
\prod_{j=1}^d\prod_{k=1}^K \int_0^\infty
\frac{e^{s_j\left(u_j + f_j(\theta^{**}_k)\right)}s_j^{n_{j,k}}}{
\kappa_j(u_j,n_{j,k})
}H_j({\rm d}s_j).
\end{split}
\end{equation}
The first equality exploits the fact that the atoms of each of the $\mu_j^{(a)}$ are non-random. The second equality holds thanks to the independence between the unnormalized jumps and finally for the third equality we computed the Laplace transform of of random variables $S^{(a)}_{j,m}$. The final term of Equation \eqref{eqn:postproof_10} is exactly the same term appearing in Equation \eqref{eqn:postproof_6}, which concludes the proof.

\noindent (ii) We now show that the second term in Equation \eqref{eqn:postproof_6} is the Laplace functional of the vector of random measures $\left(\mu_1^{(na)},\dots,\mu_d^{(na)}\right) \sim \operatorname{Vec-IFPP}(q^*_M,H^*,P_0)$. An explicit formulation of $q^*_M$ and $H^*$ is here provided. They equal 
\begin{equation*}
\label{eqn:qstar_Hstar}
\begin{split}
H^*_j(\mydiff s_j) &\ = \
\frac{e^{-u_js_j}H_j(\mydiff s_j)}{\psi_j(u_j)}, \text{ for }j=1,\dots,d \\
q_M^*(m) &\ = \  \frac{
q_M(m+k)\frac{(m+k)!}{m!}\prod_{j=1}^d\left(\psi_j(u_j)\right)^m
}{
\sum_{h=0}^\infty q_M(h+k)\frac{(h+k)!}{h!}\prod_{j=1}^d\left(\psi_j(u_j)\right)^h},
\end{split}
\end{equation*}
where $\psi_j(u_j)=\int_0^\infty e^{-u_js_j}H_j(\mydiff s_j)$.
To compute the Laplace transform of $\left(\mu_1^{(na)},\dots,\mu_d^{(na)}\right)$, we use \Cref{lemma:laplace4mu} and we get
\begin{equation}
\label{eqn:postproof_7}
\begin{split}
\E\Biggl[ &
\exp \left\{
-\sum_{j=1}^d \int_{\X} f_j(x)\mu^{(na)}_j({\rm d}x)
\right\} \mid \bmtheta\in {\rm d}\theta, \bmU\in {\rm d}\bmu
\Biggr] = \\
&  = 
\sum_{m=0}^\infty
q^*_M(m)
\left(
\int_\G
\exp\left\{ \sum_{j=1}^d -s_jf_j(x)\right\}
P_0({\rm d}x)
\prod_{j=1}^d
H^*_j({\rm d}s_j)
\right)^m\\
& = 
\sum_{m=0}^\infty
q^*_M(m)
\left(
\prod_{j=1}^d
\int_{0}^\infty\int_\X
\frac{
\exp \left\{-s_j\left(f_j(x)+u_j\right)\right\}
}{
\psi_j(u_j)
}
H_j({\rm d}s_j)P_0({\rm d}x)
\right)^m,   
\end{split}
\end{equation}
where the second equality follows by the definition of $H^*_j$.

The Laplace transform derived in Equation \eqref{eqn:postproof_7} must coincide with the second term of  Equation \eqref{eqn:postproof_6}. To show this, we focus on the expected value in the numerator of the second term in Equation \eqref{eqn:postproof_6}, since the denominator will follow  by setting $f_j = 0$.
We use \Cref{lemma:laplace} for $\Phi^!_{\bms,\bmy}$ with $f(\bmw,x)=\sum_{j=1}^d w_j(u_j+f_j(x))$ to get 
\begin{equation}
\label{eqn:postproof_8}
\begin{split}
&\E\Biggl[
\exp\left\{
-\sum_{j=1}^d \int_\G w_j\left(u_j+f_j(x)\right)\Phi^!_{\bms,\bmy}({\rm d}\bmw,{\rm d}x)
\right\}
\Biggr] \\
& \quad =
\sum_{m=0}^\infty
q_M(m+k)\frac{(m+k)!}{m!}\frac{1}{\E\left[M^{(K)}\right]}
\left(
\prod_{j=1}^d \int_0^\infty\int_\X
e^{-w_j(u_j+f_j(x))}H_j({\rm d}w_j)P_0({\rm d}x)
\right)^m.
\end{split}
\end{equation}
Therefore, the denominator equals to
\begin{equation}
\label{eqn:postproof_9}
\begin{split}
\E\left[
\exp\left\{
-\sum_{j=1}^d \int_\G w_ju_j\Phi^!_{\bms,\bmy}({\rm d}\bmw,{\rm d}x)
\right\}
\right] 
= \sum_{m=0}^\infty q_M(m+k)\frac{(m+k)!}{m!}\frac{1}{\E[M^{(K)}]}
\left(
\prod_{j=1}^d \psi_j(u_j)
\right)^m.
\end{split}
\end{equation}
To conclude, we multiply and divide the numerator by $        \prod_{j=1}^d\left(\psi_j(u_j)\right)^m$ and then take the ratio between equations \eqref{eqn:postproof_8} and \eqref{eqn:postproof_9}, simplifying common terms. The identity with Equation \eqref{eqn:postproof_7} follows after recognizing the definition of $q^*_M(m)$, which has been stated above.
\end{proof}

\subsection{Full conditional distribution of $ \mathbf{U}_n$}\label{app:full_U}
As a byproduct of Theorem \ref{thm:posterior}, we also obtain the full conditional distribution of vector $\bmU_n$. 
\begin{corollary}
\label{cor:fullU}
The conditional distribution of $\bmU_n$ given a realization of $(\theta_1,\dots,\theta_d)$ in $K$ distinct values $(\theta^{**}_1,\dots,\theta^{**}_K)$ with counts $(\bmn_1,\dots,\bmn_d)$ depends on the distinct values only through their counts and it has density on $(\R^+)^d$ given by
\begin{equation*}
f_{\bmU_n}(\bmu\mid \bmn_1,\dots,\bmn_d) \propto 
\Psi(K,\bmu)
\prod_{j=1}^d
\left\{
\frac{u_j^{n_j - 1}}{\Gamma(n_j)}
\prod_{k=1}^K \kappa_j(u_j,n_{j,k})
\right\}
\label{eqn:fullU}
\end{equation*}
\end{corollary}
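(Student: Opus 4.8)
The plan is to read the full conditional directly off the marginal joint law of $(\bmtheta,\bmU_n)$, which is already produced as an intermediate quantity in the proof of \Cref{thm:posterior}. First I would recall the joint conditional distribution of $(\bmtheta,\bmU_n)$ given $(\mu_1,\dots,\mu_d)$ in the clustered form of Equation \eqref{eqn:introU_2}, namely
\[
    \P\left(\bmtheta \in {\rm d}\theta, \bmU_n \in {\rm d}\bmu \mid \mu_1,\dots,\mu_d \right) =
    \prod_{j=1}^d\frac{u_j^{n_j-1}}{\Gamma(n_j)} e^{-u_jT_j }{\rm d}\bmu
    \prod_{k=1}^{K} \mu_j\left({\rm d}\theta^{**}_k\right)^{n_{j,k}},
\]
written in terms of the distinct values $\bmtheta^{**}$, the global number of clusters $K$, and the counts $(\bmn_1,\dots,\bmn_d)$. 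The marginal joint law of $(\bmtheta,\bmU_n)$ is then obtained by taking the expectation over $(\mu_1,\dots,\mu_d)$, and this is precisely the computation already carried out for the denominator of the posterior Laplace functional, i.e.\ Equation \eqref{eqn:postproof_5} specialized to $f_j\equiv 0$.

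The second step is to evaluate that expectation via the higher order CLM formula together with \Cref{lemma:Laplace2}. The Laplace transform of the reduced Palm version contributes the factor $\Psi(K,\bmu)/\E[M^{(K)}]$, while the factorial moment measure contributes $\E[M^{(K)}]$; these normalizing constants cancel, leaving
\[
    \E\!\left[\P\left(\bmtheta \in {\rm d}\theta, \bmU_n \in {\rm d}\bmu \mid \mu_1,\dots,\mu_d\right)\right] =
    \Psi(K,\bmu)
    \prod_{j=1}^d\left\{\frac{u_j^{n_j-1}}{\Gamma(n_j)}\prod_{k=1}^K \kappa_j(u_j,n_{j,k})\right\}
    \prod_{k=1}^K P_0({\rm d}\theta^{**}_k).
\]
I would note in passing that, after integrating out $\bmu$, this recovers the pEPPF integrand of \Cref{thm:peppf} (times the $P_0$ atoms), which is a useful consistency check.

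Finally, I would obtain the full conditional by conditioning on $\bmtheta$ and normalizing in $\bmu$ over $(\R^+)^d$. Since the factor $\prod_{k=1}^K P_0({\rm d}\theta^{**}_k)$ does not depend on $\bmu$, it is absorbed into the normalizing constant; this simultaneously shows that the conditional depends on the distinct values only through $K$ and the counts $(\bmn_1,\dots,\bmn_d)$, as claimed. Dividing by the (finite) normalizing integral yields exactly the stated density. There is no genuine obstacle here—the corollary is essentially a bookkeeping consequence of the posterior proof—so the only point requiring care is recognizing that the marginal joint law of $(\bmtheta,\bmU_n)$ coincides with the denominator of \eqref{eqn:postproof_1}, and that the $\bmu$-free $P_0$ factors drop out upon conditioning.
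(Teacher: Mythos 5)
Your proposal is correct and follows exactly the paper's route: the paper's own proof of \Cref{cor:fullU} is precisely ``take Equation \eqref{eqn:postproof_5}, evaluate the Palm expectation as in Equation \eqref{eqn:postproof_9} so that the $\E[M^{(K)}]$ factors cancel against $\Psi(K,\bmu)/\E[M^{(K)}]$, and condition on $\bmtheta$,'' which is what you spell out in more detail. Your consistency check against the pEPPF and the observation that the $\bmu$-free $P_0$ factors are absorbed into the normalizing constant are both accurate.
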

\begin{proof}
The statement follows from Equation \eqref{eqn:postproof_5}, conditioning on $\bmtheta$ and computing the expected value as done in Equation \eqref{eqn:postproof_9}. 
\end{proof}

\subsection{Proof of Equation \eqref{eqn:Psi_def_VecFDP}\label{proof:Psi}} 
We refer to the case when $q_M$ is chosen to be a $1-$shifted Poisson distribution with parameter $\Lambda$, that is
\begin{equation*}
\P\left(M=m\mid \Lambda\right) \ = \ q_M(m) \ = \ 
\frac{e^{-\Lambda}\Lambda^{m-1}}{(m-1)!},
\quad
\text{ for }m=1,2,\dots
\end{equation*}
In such a case, the $\Psi(K,\bmu)$ function, defined in Equation \eqref{eqn:Psi} for $K\geq0$ and $\bmu\in(\R)^d$ equals 
\begin{equation*}
\label{eqn:Psi_VecFDP_proof1}
\begin{split}
\Psi(K,\bmu) &= 
\sum_{m=0}^\infty
\frac{(m+K)!}{m!}
\frac{e^{-\Lambda}\Lambda^{m+K-1}}{(m+K-1)!}
\prod_{j=1}^d\left(\psi_j(u_j)\right)^m \\
& =
e^{-\Lambda}\Lambda^{K-1}
\sum_{m=0}^\infty
\frac{(m+K)\Lambda^{m}}{m!}
\bmpsi(\bmu)^m,
\end{split}
\end{equation*}
where $\bmpsi(\bmu) = \prod_{j=1}^d(\psi_j(u_j) $.
We note that, in the case under consideration, $\psi_j(u_j) = 1/(u_j+1)^{\gamma_j} < 1$, hence, the series does converge. As a consequence, we can split the series in two parts
\begin{equation*}
\label{eqn:Psi_VecFDP_proof2}
\begin{split}
\Psi(K,\bmu) &= 
e^{-\Lambda}\Lambda^{K-1}
\left(
\sum_{m=1}^\infty
\frac{\Lambda^{m}}{(m-1)!}
\bmpsi(\bmu)^m
+
K\sum_{m=0}^\infty
\frac{\Lambda^{m}}{(m)!}
\bmpsi(\bmu)^m
\right),
\end{split}
\end{equation*}
where we point out that the first sum starts from $m=1$ because we simplified $m/m!$, which is exactly equal to $0$ when $m=0$. Then, we change of the index in the first sum ($h=m-1$)
\begin{equation*}
\label{eqn:Psi_VecFDP_proof3}
\begin{split}
\Psi(K,\bmu) &= 
e^{-\Lambda}\Lambda^{K-1}
\left(
\Lambda\bmpsi(\bmu)
\sum_{h=0}^\infty
\frac{1}{h!}
\left(\Lambda\bmpsi(\bmu)\right)^h
+
K\sum_{m=0}^\infty
\frac{\Lambda^{m}}{(m)!}
\prod_{j=1}^d\left(\psi_j(u_j)\right)^m
\right) .
\end{split}
\end{equation*}
The result follows after recognizing the exponential series
\begin{equation*}
\Psi(K,\bmu) = e^{-\Lambda}\Lambda^{K-1}
\left(
\Lambda\bmpsi(\bmu)
e^{\Lambda\bmpsi(\bmu)}
+
K e^{\Lambda\bmpsi(\bmu)}
\right)
= 
\Lambda^{K-1} 
\left(K+\Lambda \bmpsi(\bmu)\right) 
e^{-\Lambda\left(1-\bmpsi(\bmu)\right)}.
\end{equation*}

\subsection{Proof of Theorem \ref{thm:priorK}}\label{proof:priorK}
\begin{proof}
Consider a sequence of $n$ observations divided into $d=2$ groups of length $n_1$ and $n_2$, respectively. We want to compute the probability of having $K_{(n_1,n_2)}=K$ global clusters under a $\operatorname{Vec-FDP}(q_M,\bmgamma,\Lambda)$ prior.\\
Firstly, we remind the following formula \cite[Theorem 8.16]{chara2002}
\begin{equation}
\label{eqn:gen_formula}
\lvert C(n, K; -\gamma, -\rho )\rvert\ = \ \frac{1}{K!} \sum  \binom{n}{r_1, \ldots , r_K} \prod_{j=1}^K  (\gamma)_{r_j} 
\end{equation}
where $(x)_{n}$ denotes the Pochhammer symbol, i.e., $(x)_n = \frac{\Gamma(x+n)}{\Gamma(x)}$. In general, this is defined for any $n\in\R^+$ but, for positive and integer values of $n$, it reduces to the rising factorial of $x$ of order $n$. Moreover, in Equation \eqref{eqn:gen_formula}
the sum is extended over all the vectors $(r_1, \ldots , r_K) $ of positive integers such that $\sum_{j=1}^K r_j = n$.\\
The probability under study $\P(K_{(n_1, n_2)} = K )$ can be written as follows in terms of the pEPPF function,
\begin{equation*}
\label{eqn:Kprior_general_formula}
\P(K_{n_1, n_2} = K ) =  \sum_{(\star)}  \frac{1}{K!}  \prod_{j=1}^2 
\binom{n_j}{n_{j,1}, \ldots , n_{j,K}} \cdot \Pi_K^{(n)} (\bm{n}_1,  \bm{n}_2) 
\end{equation*}
where the sum $(\star)$ is extended over all the vectors $(n_{1,1}, \ldots , n_{1,K})$ and $(n_{2,1}, \ldots , n_{2,K})$ of non-negative integers satisfying the following constraints
\[
n_{j,k} \geq 0, \; k  \in {1, \ldots , K} ,\; j \in\{ 1,2 \}, \; \sum_{k=1}^K n_{j,k} =n_j \; j \in \{ 1,2\}
\text{ and } n_{1,k}+n_{2,k} \geq 1 \text{ as } k=1, \ldots , K .
\]
By exploiting the expression of the pEPPF we get
\begin{equation} 
\label{eqn:K_somme_partizioni}
\begin{split}
\P(K_{(n_1, n_2)} =K ) &= 
\sum_{\bar{M}=0}^\infty  \frac{(\bar{M}+K)!}{\bar{M}!}
q_M (\bar{M}+K)   \prod_{j=1}^2  \frac{1}{(\gamma_j (\bar{M}+K))_{n_j}}\\
& \qquad \times
\sum_{(\star)} \frac{1}{K!}  \prod_{j=1}^2 
\binom{n_j}{n_{j,1}, \ldots , n_{j,K}} \cdot \prod_{j=1}^2 \prod_{k=1}^K  (\gamma_j)_{n_{j,k}}.
\end{split}
\end{equation}
We now focus on the evaluation of the sums over the partitions in \eqref{eqn:K_somme_partizioni}. This may be calculated by counting how many zeros we have in the vectors $(n_{j,1}, \ldots , n_{j,K})$. We denote by $r_1$ (resp. $r_2$) the number of possible zeros in the first vector (resp. second vector). Thanks to the validity of the condition
$ n_{1,k}+n_{2,k} \geq 1$, we know that if $n_{1,k}$ then $n_{2,k} \geq 1$ and vice versa.
Moreover, we observe that one has $\binom{K}{r_1}$ ways to choose the zeros in the first vector and $\binom{K-r_1}{r_2}$ possibilities to choose the zeros in the second one. Thus, we get
\begin{align*}
&\sum_{(\star)} \frac{1}{K!}  \prod_{j=1}^2 
\binom{n_j}{n_{j,1}, \ldots , n_{j,K}} \cdot \prod_{j=1}^2 \prod_{k=1}^K  (\gamma_j)_{n_{j,k}}\\
& \qquad = \sum_{r_1=0}^K  \sum_{r_2=0}^{K-r_1} \binom{K}{r_1} \binom{K-r_1}{r_2}
\sum_{(\star_1)}\sum_{(\star_2)} \frac{1}{K!}  \prod_{j=1}^2 
\binom{n_j}{n_{j,1}, \ldots , n_{j,K-r_j}} \cdot \prod_{j=1}^2 \prod_{k=1}^{K-r_j}  (\gamma_j)_{n_{j,k}}
\end{align*}
where
\[
(\star_j) =  \left\{ (n_{j,1}, \ldots , n_{j,K-r_j}) :\; n_{j,k} \geq 1 , k=1,\ldots , K-r_j, \,  \sum_{k=1}^{K-r_j} n_{j,k} =n_j  \right\}
\]
as $j=1,2$. 
We now exploit \eqref{eqn:gen_formula} to obtain
\begin{equation}\label{eqn:sommmona_K}
\begin{split}
&\sum_{(\star)} \frac{1}{K!}  \prod_{j=1}^2 
\binom{n_j}{n_{j,1}, \ldots , n_{j,K}} \cdot \prod_{j=1}^2 \prod_{k=1}^K  (\gamma_j)_{n_{j,k}}  \\
& \qquad \times \sum_{r_1=0}^K  \sum_{r_2=0}^{K-r_1} 
\prod_{j=1}^2 \lvert C (n_j  , K-r_j ; -\gamma_j ) \rvert
\end{split}
\end{equation}
If we now replace \eqref{eqn:sommmona_K} in \eqref{eqn:K_somme_partizioni}, the statement of the theorem follows.
\end{proof}

\section{Predictive distributions}
\label{app:predictive}

In this section, we provide and prove the formula for the predictive distribution for a new client entering a restaurant $j$. This has already been stated in Equation \eqref{eqn:predittiva_FDP} for the special case of a $\operatorname{Vec-FDP}$ prior. Instead, in this section we refer to the more general $\operatorname{Vec-NIFPP}$.
More precisely, consider a realization $\left(\bmtheta_1,\dots,\bmtheta_d\right)$ from $\left(P_1,\dots,P_d\right)\sim\operatorname{Vec-NIFPP}(q_M,H,P_0)$ with $K$ distinct values $\left(\theta^{**}_1,\dots,\theta^{**}_K\right)$ and a partition $\rho = \left\{C_1,\dots,C_K\right\}$ with counts $\left(\bmn_1,\dots,\bmn_d\right)$ satisfying the constraints in Equation \eqref{eqn:cluster_vincoli}.
Following the approach of \citet{jlp2009}, \citet{favaroteh2013} and \citet{argiento2022annals} we work conditionally to $\bmU_n = \bmu$. Then, for each group, say $j$, we have  
\begin{equation}
\label{eqn:app_pred_general}
\begin{split}
\mathbb{P}(\theta&_{j,n_j+1} \in \cdot \mid \bmtheta_1,\dots,\bmtheta_d, \bmu)  \\
& \propto 
\sum_{k=1}^K \frac{\kappa_j(u_j,n_{j,k}+1)}{\kappa_j(u_j,n_{j,k})}\delta_{\theta^{**}_k}(\cdot) 
\ + \
\kappa_j(u_j,1)
\left(
\prod_{h\neq j}
\kappa_h(u_h,0)
\right)
\frac{\Psi(K+1,\bmu)}{\Psi(K,\bmu)}
P_0(\cdot)
\end{split}
\end{equation}
%
Without loss of generality, consider a new client entering the first restaurant, where $n_1$ customers are currently seated. This $(n_1+1)$-th client can $(i)$ choose an already existing table, or $(ii)$ sit to a non-existing table. 
Note that, as detailed in \Cref{subsection:CRFP}, we distinguish between empty and non-existing tables. The latter refers to tables that do not appear in any restaurant of the franchise, while we admit the possibility of having tables with no clients in some restaurants as long as at least one customer in the whole franchise is seated at the corresponding table, see \Cref{fig:tavoli}. 

As far as the proof of \Cref{eqn:app_pred_general} is concerned, considering  case $(i)$: we compute the probability of sitting to table $k$, $k=1,\dots,K$, where dish $\theta^{**}_k$ is served. 
This probability equals 
\begin{equation*}
\label{eqn:pred_gen_1}
\begin{split}
\P\bigl(\theta_{1,n_1+1} = \theta^{**}_k  \mid \bmtheta_1,\dots,\bmtheta_d\bigr) \ = \ 
\Pi^{(n+1)}_{K}(\widetilde{\bmn}_1,\dots,\bmn_d),
\end{split}
\end{equation*}
where $\widetilde{\bmn}_1 = (n_{1,1},\dots,n_{1,k}+1,\dots,n_{1,K})$. If we further condition to $\bmU_n=u$, we obtain
\begin{equation*}
\label{eqn:pred_gen_2}
\begin{split}
\P\bigl(&\theta_{1,n_1+1}  = \theta^{**}_k \mid
\bmtheta_1,\dots,\bmtheta_d, \bmu \bigr) \\
&\propto
\Psi(\bmu,K)
\left(\prod_{j=2}^K
\frac{(u_j)^{n_j-1}}{\Gamma(n_j)}
\prod_{l=1}^K
\kappa_j(u_j,n_{j,l})
\right)
\frac{(u_1)^{n_1}}{\Gamma(n_1+1)}
\kappa_1(u_1,n_{1,k}+1)
\prod_{l\neq k}
\kappa_1(u_1,n_{1,l}) \\
& =
\Psi(\bmu,K)
\left(\prod_{j=2}^K
\frac{(u_j)^{n_j-1}}{\Gamma(n_j)}
\prod_{l=1}^K
\kappa_j(u_j,n_{j,l})
\right)
\frac{(u_1)^{n_1}}{\Gamma(n_1+1)}
\frac{\kappa_1(u_1,n_{1,k}+1)}{\kappa_1(u_1,n_{1,k})}
\prod_{l=1}^K
\kappa_1(u_1,n_{1,l}),
\end{split}
\end{equation*}
where the second equality follows since we just multiplied and divided by $\kappa_1(u_1,n_{1,k})$.

Now, considering case $(ii)$: we are interested in the probability
of sitting at a new, non-existing, table where a new dish $\theta^{**}_{K+1}$ drawn from $P_0$ is served. This equals to
\begin{equation*}
\label{eqn:pred_gen_3}
\begin{split}
\P\bigl(\theta_{1,n_1+1} \in \cdot \mid \bmtheta_1,\dots,\bmtheta_d\bigr) \ = \ 
\Pi^{(n+1)}_{K+1}(\bmn^\dagger_1,\dots,\bmn^\dagger_d),
\end{split}
\end{equation*}
where 
$\bmn^\dagger_1 = (n_{1,1},\dots,n_{1,K},1)$ and 
$\bmn^\dagger_j = (n_{j,1},\dots,n_{j,K},0)$ for each $j=2,\dots,d$.
As before, we further condition to $\bmU_n=u$ and we obtain
\begin{equation*}
\label{eqn:pred_gen_4}
\begin{split}
\P\left(\theta_{1,n_1+1} \in \cdot \mid
\bmtheta_1,\dots,\bmtheta_d, \bmu \right) & \propto 
\ \Psi(\bmu,K+1)
\left(\prod_{j=2}^K
\frac{(u_j)^{n_j-1}}{\Gamma(n_j)}
\kappa_j(u_j,0)
\prod_{l=1}^K
\kappa_j(u_j,n_{j,l})
\right) \\
&
\qquad \times \frac{(u_1)^{n_1}}{\Gamma(n_1+1)}
\kappa_1(u_1,1)
\prod_{l=1}^K
\kappa_1(u_1,n_{1,l})
P_0(\cdot).
\end{split}
\end{equation*}

We can collect the previous expression in the following compact form
\begin{equation*}
\label{eqn:pred_gen_5}
\begin{split}
&\P\left(\theta_{1,n_1+1} \in \cdot \mid
\bmtheta_1,\dots,\bmtheta_d, \bmu \right) \propto\\
&
\Psi(\bmu,K)
\sum_{k=1}^K
\left\{
\left(\prod_{j=2}^K
\frac{(u_j)^{n_j-1}}{\Gamma(n_j)}
\prod_{l=1}^K
\kappa_j(u_j,n_{j,l})
\right)
\frac{(u_1)^{n_1}}{\Gamma(n_1+1)}
\frac{\kappa_1(u_1,n_{1,k}+1)}{\kappa_1(u_1,n_{1,k})}
\prod_{l=1}^K
\kappa_1(u_1,n_{1,l}) 
\delta_{\theta^{**}}(\cdot)
\right\}\\
& + \Psi(\bmu,K+1)
\left(\prod_{j=2}^K
\frac{(u_j)^{n_j-1}}{\Gamma(n_j)}
\kappa_j(u_j,0)
\prod_{l=1}^K
\kappa_j(u_j,n_{j,l})
\right)
\frac{(u_1)^{n_1}}{\Gamma(n_1+1)}
\kappa_1(u_1,1)
\prod_{l=1}^K
\kappa_1(u_1,n_{1,l})
P_0(\cdot).
\end{split}
\end{equation*}
The latter is a mixture (with unnormalized weights) of Dirac point masses centred in the distinct values $\theta^{**}_k$ and the prior $P_0$. 
Finally, we cancel out the common term in the unnormalized weights and we get
\begin{equation*}
\label{eqn:pred_gen_6}
\begin{split}
\P\bigl(\theta&_{1,n_1+1} \in \cdot \mid
\bmtheta_1,\dots,\bmtheta_d, \bmu \bigr) \\
&\propto
\Psi(\bmu,K)
\sum_{k=1}^K
\left\{
\frac{\kappa_1(u_1,n_{1,k}+1)}{\kappa_1(u_1,n_{1,k})}
\delta_{\theta^{**}}(\cdot)
\right\} + 
\Psi(\bmu,K+1)
\left(\prod_{j=2}^K
\kappa_j(u_j,0)
\right)
\kappa_1(u_1,1)
P_0(\cdot).
\end{split}
\end{equation*}
Dividing by $\Psi(\bmu,K)$, Equation \Cref{eqn:app_pred_general} is obtained. 

Figure \ref{fig:tavoli} provides a graphical representation of the Chinese restaurant franchise process for Vec-NIFPP based on a sample of six, five and five observations in three groups, respectively.   The first (left panel) and the second client (middle panel) enter the first restaurant and sit at different tables. Tables having the same color must be prepared in all other restaurants. Finally, (right panel) one of the possible configurations when all clients are seated.  
Global clustering is obtained by merging tables having the same color.   

\begin{figure}[h]
\centering
\includegraphics[width=0.31\linewidth]{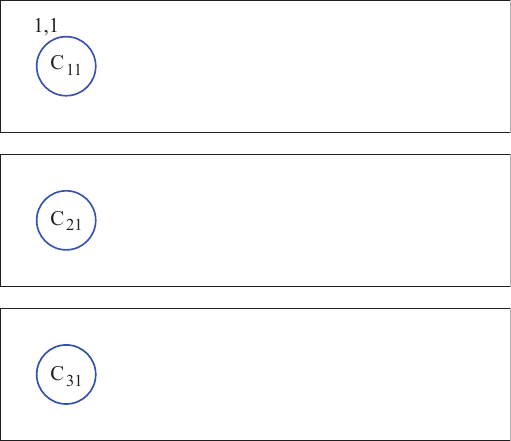}
\hfill
\includegraphics[width=0.31\linewidth]{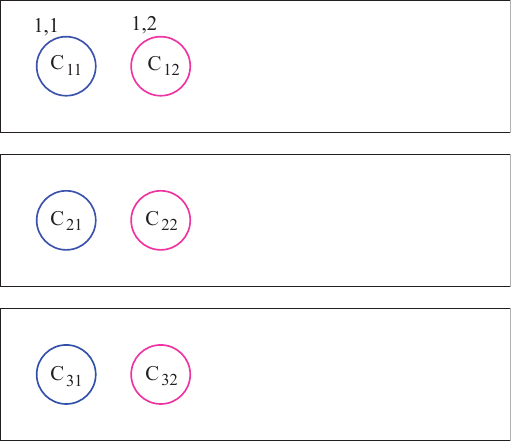}
\hfill
\includegraphics[width=0.31\linewidth]{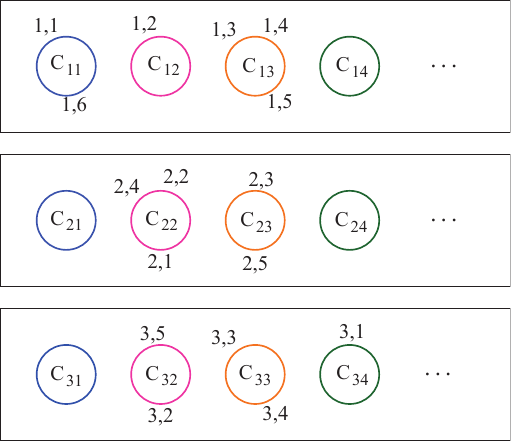}
\caption{Chinese restaurant franchise process representation for $\operatorname{Vec-NIFPP}$ based on a sample of $6$, $5$ and $5$ observations in $d=3$ groups, respectively. }
\label{fig:tavoli}
\end{figure}

\section{MCMC Algorithm for HMFM}
\label{app:MCMC}
From this point on, we place our attention on the HMFM model introduced in \Cref{section:HMFM}.
\subsection{Hyperparameters setup}
\label{app:hyperparam}
As mentioned in \Cref{section:HMFM}, we rely on the sample equivalence principle \citep{diaconis1979} to design a suitable reparametrization of the prior $$
\Lambda\mid\bmgamma \sim \operatorname{Gamma}\left(a_\Lambda + d a_\gamma, b_\Lambda + b_\gamma\sum_{j=1}^d \gamma_j\right),
$$ 
with an easier interpretation. 
Indeed, it is straightforward to show that
\[
\E[\Lambda\mid\bmgamma] = \left( \frac{1}{\E[\Lambda]}\frac{a_\Lambda/a_\gamma}{a_\Lambda/a_\gamma + d} + \frac{1}{\widehat{\Lambda}}\frac{d}{a_\Lambda/a_\gamma+d}\right)^{-1},
\]
where $\E[\Lambda]$ is the prior mean of $\Lambda$, i.e., $a_\Lambda/b_\Lambda$ and $\widehat{\Lambda} = \frac{a_\gamma}{b_\gamma}\frac{d}{\sum_{j=1}^d \gamma_j}$
is the maximum likelihood estimator for $\Lambda$ based on a sample $\gamma_1,\dots,\gamma_d$. Hence, $a_\Lambda/a_\gamma$ represents the size of a prior sample and it can be compared to the size $d$. Let $\Lambda_0$ and $\text{V}_\Lambda$ be the prior expected value and variance of $\Lambda$, respectively, and $\gamma_0$ be  
a common prior guess   for the $\gamma_j$'s, so that $\frac{1}{d}\sum_{j=1}^d \gamma_j = \gamma_0$. Then, assuming $\E[\Lambda] = \widehat{\Lambda}$, 
we obtain
\[
a_\gamma = \frac{1}{d}\frac{\Lambda^2_0}{\text{V}_\Lambda},
\quad
b_\gamma = \frac{a_\gamma}{\gamma_0\Lambda_0},
\quad
a_\Lambda = \frac{\Lambda_0^2}{\text{V}_\Lambda},
\quad
b_\Lambda = \frac{\Lambda_0}{\text{V}_\Lambda}.
\]
We now provide an intuition for choosing $\Lambda_0$, $\text{V}_\Lambda$ and $\gamma_0$. We would like our mixture to be sparse, which, in practice, means that the expected number of clusters is (much) smaller than the expected number of mixture components. 
Under the HMFM model, this translates to small values of $\gamma_j$'s. 
Looking at each season separately, 
we note that
each element of the Vec-FDP 
is a finite DP. Hence, it is easy to compute the prior distribution for each local number of cluster $K_j$, using  \Cref{eqn:Kprior_d2} with $n_2 = 0$, and compare it with the law of $M$. Setting a small number of clusters for each season implies being a priori noninformative about the heterogeneity across different seasons.



\subsection{Conditional Algorithm}
\label{app:MCMC_conditional}
The conditional algorithm for the HMFM model, introduced in \Cref{subsection:computational}, provides full Bayesian inference on both the mixing parameters $(P_1,\dots, P_d)$ and the clustering structure $\rho$. We use the following notation, $\mathcal{L}\left(X\mid \text{rest}\right)$, to indicate the distribution of a random variable $X$ conditionally to everything but itself. 
The algorithm is a blocked Gibbs sampler whose state space is 
$(\bmc_1,\dots, \bmc_d,\bmtau,\bmS, M,\bmU,\Lambda,\bmgamma)$. The algorithm is composed of two main steps.

\paragraph{Step 1:}  Metropolis-within-Gibbs algorithm to update one parameter at a time. 
\begin{itemize}
\item[1.1 ] Update cluster allocations $\bmc$. This step is done independently for each group $j$ and for each individual $i$ by sampling $c_{j,i}$ from a discrete distribution such that
\[
\mathbb{P}(c_{j,i} = m \mid \text{rest}) \propto S_{j,m}f(y_{j,i}\mid\tau_m), \quad \text{for }m=1,\dots,M.
\]
After sampling the whole vector, compute the number of allocated components $K$.
This operation takes $O(Mn)$ time.
\item[1.2 ] Update mixture parameters $\bmtau = \left(\tau_1,\dots,\tau_M\right)$. From \Cref{thm:posterior}, the full conditional distribution further factorizes in two independent parts, one for the first $K$ allocated components and one for the remaining $M^{*}$ non-allocated components. In particular, for $k=1,\dots,K$, sample $\tau_k$ independently from 
\[
\mathbb{P}\left(\tau_k \in d\tau_k \mid \text{rest} \right) \propto
\left\{
\prod_{(j,i):c_{j,i}=k}
f(y_{j,i}\mid\tau_k)
\right\}
P_0(d\tau_k)
\]
while for $m=K+1,\dots,M$, sample $\tau_m$ independently from the prior $P_0$.
In general, to find elements belonging to some cluster $k$ requires performing a search a vector of length $n$, which takes $O(n)$ time. Hence, overall, this operation would take $O(Kn)$ time. Nevertheless, a smart implementation of the code can avoid such a search. Therefore, in practice, the computational time of this update scales sublinearly with respect to $n$.
\item[1.3 ] Update the unnormalized mixture weights $\bmS$ differently for the allocated and non-allocated parts of the process. In particular, for $k=1,\dots,K$, sample $S_{j,k}$ from 
$\operatorname{Gamma}(\gamma_j + n_{j,k},U_j + 1)$, independently for each $j=1,\dots,d$, while for $m=K+1,\dots,M$, sample $S_{j,m}$ from 
$\operatorname{Gamma}(\gamma_j, U_j + 1)$.
This operation takes $O(Md)$ time.
\item[1.4 ] Update $\Lambda$ from the mixture of gamma densities
\begin{equation*}
\label{eqn:conditional_fullLambda_conditional}
\begin{split}
&\frac{K\left(b^* + 1 - \bmpsi(\bmu)  \right)}{(a^*-1) (\bmpsi(\bmu)   + K(b^*+1)}
\operatorname{Gamma}\left(\Lambda;a^*+K-1, 
b^* + 1 - \bmpsi(\bmu)  \right)  \\ 
& \qquad + 
\frac{(a^*+K-1)(\bmpsi(\bmu)  }
{(a^*-1) \bmpsi(\bmu)   + K(b^*+1)}
\operatorname{Gamma}\left(\Lambda;a^*+K, 
b^* + 1 - \bmpsi(\bmu)  \right)
\end{split}
\end{equation*}
where 
$
a^* = a_\Lambda + da_\gamma; \ b^* = b_\Lambda + b_\gamma \sum_{j=1}^d \gamma_j.
$ 
This update is straightforward as it only takes $O(1)$ time.
\end{itemize}

\paragraph{Step 2:} To avoid conditioning to $\bmU$ when updating $\bmgamma$ and $M$, we write the full conditional distribution of $\Delta_2$ as $\mathcal{L}\left(\bmU\mid \text{rest}\right)\mathcal{L}\left(M,\bmgamma\mid\Delta_1,\bmy\right)$.
\begin{itemize}
\item[2.1 ] Update $U_j$ from $\operatorname{Gamma}(n_j,T_j)$, independently for each $j=1,\dots,d$.
This operation takes $O(d)$ time.
\item[2.2 ] Update $M=K+M^*$, where $K$ has been updated in Step 1.1 and $M^*$ is sampled from
\[
\mathbb{P}\left(M^* = m^*\mid\bmgamma,\Delta_2,\bmy\right) \propto
\frac{m^*+K}{m^*!}\Lambda^{m^*}
\prod_{j=1}^d
\frac{\Gamma\left(\gamma_j(m^*+K)\right)}{\Gamma\left(\gamma_j(m^*+K) + n_j\right)}.
\]
This step is accomplished using an adaptive Metropolis-Hastings step.
This update takes $O(1)$ time.
\item[2.3 ] Update $\gamma_j$ from a density proportional to
\begin{equation*}
\label{eqn:conditional_fullgamma_marginal}
\begin{split}
\frac{\Gamma\left(\gamma_j(M^*+K)\right)}{\Gamma\left(\gamma_j(M^*+K)+n_j\right)}
\prod_{k=1}^K
\left(\frac{\Gamma\left(n_{j,k}+\gamma_j\right)}{\Gamma\left(\gamma_j\right)}\right)
\operatorname{Gamma}(\gamma_j;a_\gamma,\Lambda b_\gamma),
\end{split}
\end{equation*}
independently for each $j=1,\dots,d$. This update is performed via the adaptive Metropolis-Hastings step.
This operation takes $O(d)$ time.
\end{itemize}
Overall, the computational time is driven by the update of cluster allocation variables and the unique values. As discussed above, in practice the latter can be easily reduced. In general, each iteration takes $O((M+K)n)$ even though in practice the computational time is much closer to $O(Mn)$.

\subsection{Marginal Algorithm}
\label{app:MCMC_marginal}
We now turn our attention to the marginal algorithm for the HMFM model. The algorithm is based on the Chinese restaurant franchise characterization of the Vector of Finite Dirichlet process (see \Cref{subsection:CRFP} in the main manuscript). The state space of the Markov Chain is identified by the parameters $\left(\bmc_1,\dots,\bmc_d, \bmU, \Lambda, \bmgamma\right)$. The algorithm is composed of four steps:
\begin{paragraph}{Step 1:} Update cluster allocations $\bmc$. 
This is done in $n=\sum_{j=1}^d n_j$ sub-steps. 
For each restaurant $j=1,\dots,d$ and for each customers $i=1, \ldots, n_j$,
we remove observation $i$ from the cluster and denote by 
$\rho_j^{-i}=\left\{C_{j,1}^{-i}, \ldots, C_{j,K^{-i}}^{-i}\right\}$ the resulting partition in $K^{-i}$ clusters. 
Then, the $i$-th observation is assigned to a new cluster $C_{j,K^{-i}+1}^{-i}$ with probability proportional to
\begin{equation*}
\label{eqn:newtable_fullcond}
\begin{split}
\mathbb{P}
\left(i \in C_{j,K^{-i}+1}^{-i} \mid \bmu, \rho_j^{-i}, \bmy\right) 
\propto 
\bmpsi(\bmu)
\gamma_j\Lambda
\frac{K^{-i}+1+\Lambda\bmpsi(\bmu)}{K^{-i}+\Lambda\bmpsi(\bmu)}
\mathcal{M}\left(y_{j,i}\right)
\end{split}
\end{equation*}
while it is assigned to each of the existing clusters $C_{j,l}^{-i}$, for  $l=1, \ldots, K^{-i}$, with probability proportional to
\begin{equation*}
\label{eqn:oldtable_fullcond}
\begin{split}
\mathbb{P}
\left(i \in C_{j,l}^{-i} \mid \bmu, \rho_j^{-i}, \bmy\right) 
\propto 
\left(n_{j,k}^{-i} + \gamma_j\right)
\frac{\mathcal{M}\left(\bmy_{C_{l}^{-i}\cup i}\right)}
{\mathcal{M}\left(\bmy_{C_{l}^{-i}}\right)}
\end{split}
\end{equation*}
where $\bmpsi(\bmu) = \prod_{j=1}^d \psi_j\left(u_j\right) $ and, for each global cluster $C_{l}\in \rho$, $\bmy_{C_{l}}$ is the vector of the $y_{h,g}$, $h\in\{1,\dots,d\}$ and $g\in\{1,\dots,n_h\}$, such that ${h,g} \in C_l$ and 
\[
\mathcal{M}\left(\bmy_{C_l}\right) = 
\int_\X 
\prod_{(h,g) \in C_l} f\left(y_{h,g} \mid \theta\right) P_0(d\theta)
\]
is the marginal distribution of the data within the cluster $C_l$ with sampling model $f\left(y_{h,g} \mid \theta\right)$ and prior $P_0(d\theta)$.
\end{paragraph}
The computational time is proportional to $(K+1)n$ times the evaluation time for the marginal distribution $\mathcal{M}(\cdot)$). This latter cost may be problem specific. In our experiments and application, $\mathcal{M}(\cdot)$ is available in closed analytical form and its parameters are easily computed by updating the sufficient statistics of the likelihood $f$. Hence, for our purposes, this update takes $O((K+1)n)$.
\begin{paragraph}{Step 2:}
Update the whole vector $\bmU$ by sampling from a density proportional to 
$$
\prod_{j=1}^d\left\{
U_j^{n_j-1} \frac{1}{(1+U_j)^{n_j + K\gamma_j}}
\right\}
\left( K + \Lambda\prod_{j=1}^d\frac{1}{(1+U_j)^{\gamma_j}}\right)
\exp \left\{
\Lambda\prod_{j=1}^d\frac{1}{(1+U_j)^{\gamma_j}}
\right\}.
$$
This operation takes $O(d)$ time.
\end{paragraph}
\begin{paragraph}{Step 3:}
Update the whole vector $\bmgamma$ by sampling from a density proportional to
\begin{equation*}
\begin{split}
\prod_{j=1}^d&\left\{
\operatorname{Gamma}\left(\gamma_j;a_\gamma,\Lambda b_\gamma\right)
\frac{1}{(1+u_j)^{K\gamma_j}}
\prod_{m=1}^K\frac{\Gamma(\gamma_j + n_{j,m})}{\Gamma(\gamma_j)}
\right\}  \\
& \times \left(K+\Lambda\prod_{j=1}^d\frac{1}{(1+U_j)^{\gamma_j}}\right)
\exp \left\{
\Lambda\prod_{j=1}^d\frac{1}{(1+U_j)^{\gamma_j}}
\right\}
\end{split}
\end{equation*}
This operation takes $O(d)$ time.
\end{paragraph}    
\begin{paragraph}{Step 4:}
Update $\Lambda$ from the mixture of gamma densities
\begin{equation*}
\label{eqn:conditional_fullLambda_main}
\begin{split}
& \frac{K\left(b^* + 1 - \bmpsi(\bmu)  \right)}{(a^*-1)( \bmpsi(\bmu)   + K(b^*+1)}
\operatorname{Gamma}\left(\Lambda;a^*+K-1, 
b^* + 1 - \bmpsi(\bmu)  \right)  \\ 
& \qquad + 
\frac{(a^*+K-1)\bmpsi(\bmu)  }
{(a^*-1)( \bmpsi(\bmu)   + K(b^*+1)}
\operatorname{Gamma}\left(\Lambda;a^*+K, 
b^* + 1 - \bmpsi(\bmu)  \right)
\end{split}
\end{equation*}
where 
$
a^* = a_\Lambda + da_\gamma; \ b^* = b_\Lambda + b_\gamma \sum_{j=1}^d \gamma_j
$.
This update is straightforward as it only takes $O(1)$ time.
\end{paragraph}    

Note that both Steps 2 and 3 require samples from $d$-dimensional distributions which are not available in closed form and which can not be factorized in the product of $d$ independent components. Joint updates must be performed. We suggest to use adaptive Metropolis-Hastings techniques or gradient based methods.
In particular, we use here the Metropolis Adjusted Langevin Algorithm (MALA) introduced by \cite{MALA1998}.
Overall, each iteration takes $O((K+1)n)$.

\section{Simulation study}
\label{app:simstudy}
In this section we present a simulation study to show our model’s ability to perform clustering as well as density estimation. 
In particular, we investigate the performance of the HMFM considering both the conditional (HMFM-cond) and the marginal (HMFM-marg) samplers. Thus, we compare the HMFM model  with the HDP \citep{teh2006hierarchical} mixture model. In the first scenario, we also  compare the HMFM against the MFM model 
that is assumed independently for each group and fitted using the algorithm proposed by 
\citet{argiento2022annals}. 
The latter is implemented in the 
\texttt{R} package \texttt{AntMAN} \citep{antman2021}. 

The simulation study consists of  three scenarios. 
The first experiment considers $d=2$ groups which share a component. Thus, the example shows the benefits of the hierarchical analysis against the independent group-specific analysis. 
In the second scenario, again $d=2$ groups are considered, but without any components shared across the groups. This example shows how the HDP borrows too much information  from the other group with respect to the HMFM, leading to misleading results. 
Finally, the third experiment emphasizes such differences in a more realistic situation, where $d=15$ groups are considered; here, the HMFM outperforms the HDP in terms of both clustering and density estimation.
For all experiments, $50$ independent datasets have been generated.

For a fair comparison, we fit all competing models by setting the same base measure $P_0$. In particular, we follow the  strategy explained in \Cref{section:HMFM} and set a $\operatorname{Normal-InvGamma}$ prior with $\mu_0 = \text{mean}(\mathbf{y})$, 
$k_0 = 1 /(\text{range}(\mathbf{y}))^2$, 
$\nu_0 = 4$ and $\sigma^2_0 = 1/2$. 
Regarding the hyperparameters, for the HDP we set default hyperprior $\alpha \sim \operatorname{Gamma}(1,1)$ and $\gamma \sim \operatorname{Gamma}(1,0.1)$ \citep{teh2006hierarchical}.
Also,  we use default  hyperparameters   for the independent analysis as suggested in the $\texttt{AntMAN}$ package, i.e., $\gamma \sim \operatorname{Gamma}(1,1)$ and $\Lambda \sim \operatorname{Gamma}(1,1)$.
Finally,  for the  HMFM, we follow the strategy 
described in \Cref{app:hyperparam} to set the values of 
$\Lambda_0 $, $\text{V}_\Lambda$ and $\gamma_0 $; the values are detailed for each experiment below.

To assess the performance of recovering the true clustering, we compute the Co-clustering error (CCE; \citealt{casarin2020})  and the Adjusted Rand Index (ARI; \citealt{arandi}). 
The CCE is based on 
the posterior probabilities $\hat{\pi}_{lk}\in [0,1]$ of two data points $l$ and $k$ to belong to the same cluster, i.e., the proportion of times that two observations have been assigned to the same mixing component over the MCMC iterations. Specifically, the error is defined as the average $L_1$ distance between the true pairwise co-clustering matrix with elements $\pi_{lk}\in \{0,1\}$ and the estimated co-clustering probability (similarity), namely
\begin{equation*}
\label{eqn:CCE_def}
\text{CCE} = \frac{1}{n} 
\sum_{l=1}^{n}\sum_{k=1}^{n} \ \lvert \ \pi_{lk} - \hat{\pi}_{lk} \ \rvert,
\end{equation*}
where $n = \sum_{j=1}^d n_j$. Being a distance, lower values of CCE indicate better fit, attaining zero in the absence of co-clustering error.

On the other hand, the ARI is computed using the  estimated partition which is obtained by minimizing the variation of information function \citep{wade2018,dahl2022}. The index ranges between zero (the true and the estimated clustering do not agree on any pair of points) and one (the two clustering are the same). 

Regarding density estimation,  we compute 
a Predictive Score (PS; \citealt{casarin2020}) for each group $j=1,\dots,d$, that is the $L_1$ distance between the group-specific density $f(y_{j,n_j+1})$ and the corresponding predictive density $\hat{f}(y_{j,n_j+1}\mid\mathbf{Y})$, namely 
\begin{equation*}
\label{eqn:PS_def}
\text{PS}_j =  \int_{-\infty}^{\infty} \ \lvert f(y_{j,n_j+1}) - \hat{f}(y_{j,n_j+1}\mid \mathbf{Y})  \rvert \ dy_{j,n_j+1}.
\end{equation*}
The PS is a distance so lower values indicate better fit, with zero being the lowest possible value. 

\subsection{Experiment 1}
\label{subsection:exp1_app}
This experiment considers $d=2$ groups, both having two local clusters ($K_1 = 2$, $K_2 = 2$), one of which is shared; hence,  the global number of clusters is $K = 3$. The mixing probabilities are set so that the shared component has a lower value in the second group. Namely
\begin{equation}
\begin{split}
y_{1,i} \ &\iid \ 0.5\text{N}(-3,0.1) + 0.5\text{N}(0,0.5), 
\ \text{for }i=1,\dots,300 \\
y_{2,i} \ &\iid \ 0.2\text{N}(0,0.5) + 0.8\text{N}(1.75,1.5),
\ \text{for }i=1,\dots,300 \\
\end{split}
\label{eq:sim1}
\end{equation}
Hence, the component specific parameters are $\mu = (-3,0,1.75)$ and $\sigma = (\sqrt{0.1},\sqrt{0.5},\sqrt{1.5})$.
We generate $n_1 = 300$ and $n_2 = 300$ observations for each group. 
Note that the second group is defined so that the two components strongly overlap. As a result, the shared component is completely masked in this group. Nevertheless, the masked component can be spotted by exploiting the sharing of information with the other group. \Cref{fig:exp1_esempio} shows   the empirical distribution of a dataset generated from model Equation \eqref{eq:sim1} and the underlying densities.
\begin{figure}[h]
\centering
\includegraphics[width=0.75\linewidth]{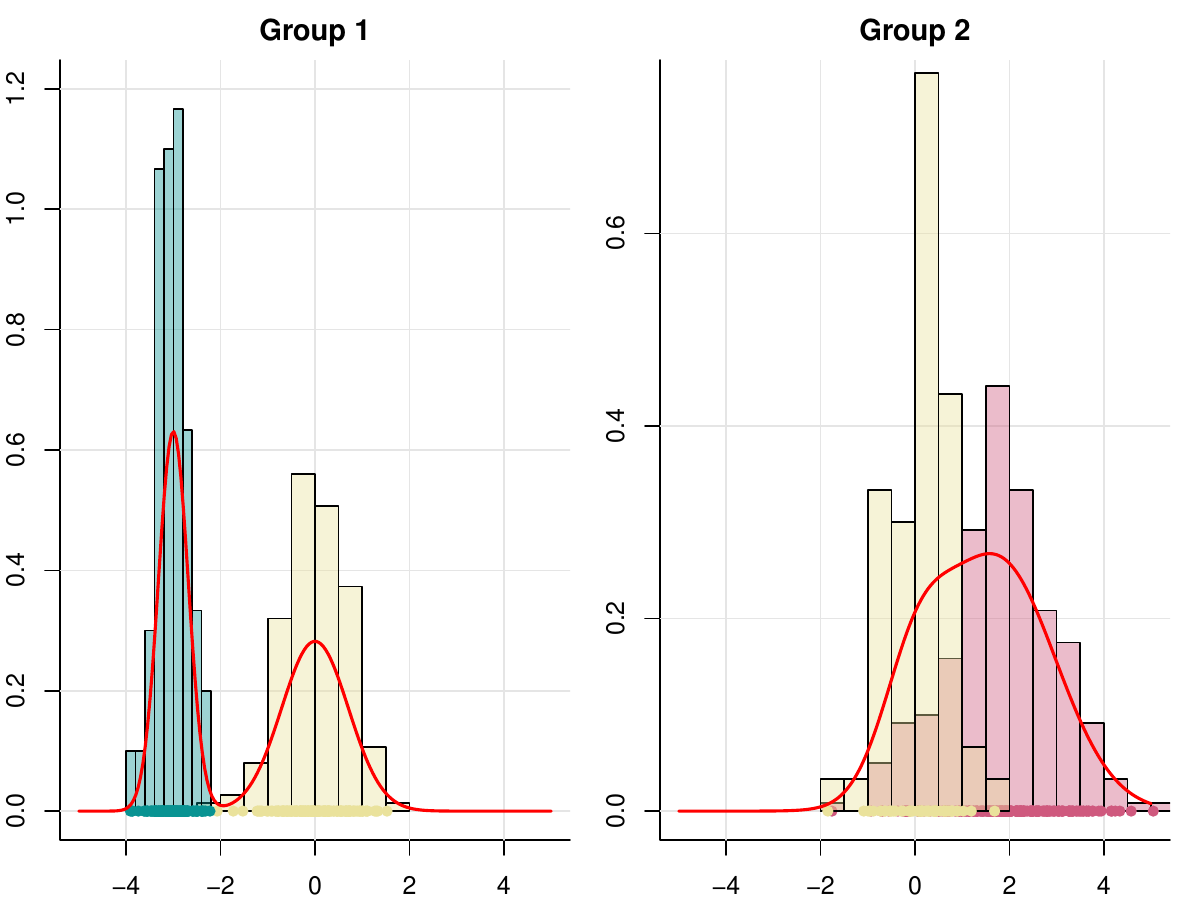}
\caption{Empirical distributions of a dataset simulated under Experiment 1. Dots represent the observations while lines represent the  underlying densities. Colours relate to the mixing components.}
\label{fig:exp1_esempio}
\end{figure}

For each simulated dataset, we fit the HDP, the independent group-specific MFM and the proposed HMFM by setting, following the guidelines in \Cref{app:hyperparam}, $\Lambda_0 = 5$, $\text{V}_\Lambda = 5$, $\gamma_0 = 0.5$.
A clear advantage of joint modelling is the possibility of deriving model-based clustering also across different groups, which is not possible when we run independent analyses. However, for the sake of a fair comparison with the MFM, clustering performances are evaluated within each group, not globally.  \Cref{tab:exp1_LocalARI} presents the results of model comparison based on the mean and standard deviation (in brackets) of the ARI over the simulated datasets. The 
results 
show that all methods, with negligible differences, are able to perfectly recover the true clustering in the first group, where the two components are well separated. Instead, the advantage of the sharing of information allowed by hierarchical modelling is evident in the second group, where components overlap. Indeed, the MFM fails to identify the presence of two different clusters in the second group, gathering all observations together and obtaining an ARI value close to zero. On the other hand, HDP and HMFM are able to borrow information from the first group to recognize the presence of two clusters in the second group, leading to higher values of ARI, with HMFM (both marginal and conditional samplers) outperforming the HDP. 
To clarify, we compute the percentage of simulated datasets for which each method is able 
to obtain a partition of the second group consisting only of a single cluster. 
The results are reported in the third column of \Cref{tab:exp1_LocalARI} and show how the MFM consistently fails to identify at least two clusters for the second group, 
showing the limitations of the independent analyses against the hierarchical approach. 
Finally, we point out that the HMFM is able to outperform the HDP, with slightly better performances for the marginal sampler over the conditional one.

\begin{table}
\caption{The first two columns report the mean and the standard deviation (in brackets)  of the ARI for the two groups over the $50$ simulated datasets under Experiment 1. The last column shows the percentage of times that each method gathers all observations of the second group in a single cluster, hence, lower values are better.}
\centering
\label{tab:exp1_LocalARI}
\begin{tabular}{cccc}
\hline
& Group 1      & Group 2      & \% 1 cluster  \\ \hline
MFM             & 0.992 (0.011) & 0.002 (0.037) & 98\% \\
HDP                & 0.992 (0.011) & 0.122 (0.183) & 68\% \\
HMFM - Marginal    & 0.990 (0.012) & 0.216 (0.191) & 42\% \\
HMFM - Conditional & 0.991 (0.012)  & 0.177 (0.190) & 52\% \\ \hline
\end{tabular}
\end{table}

To conclude the clustering comparison, \Cref{fig:exp1_CCE} provides a picture of the distribution of the CCE  within each group over the simulated datasets. Not surprisingly, differences are negligible in the first group, while the MFM commits a higher error in the second one, with also higher variability among different datasets.
\begin{figure}
\centering
\includegraphics[width=1\linewidth]{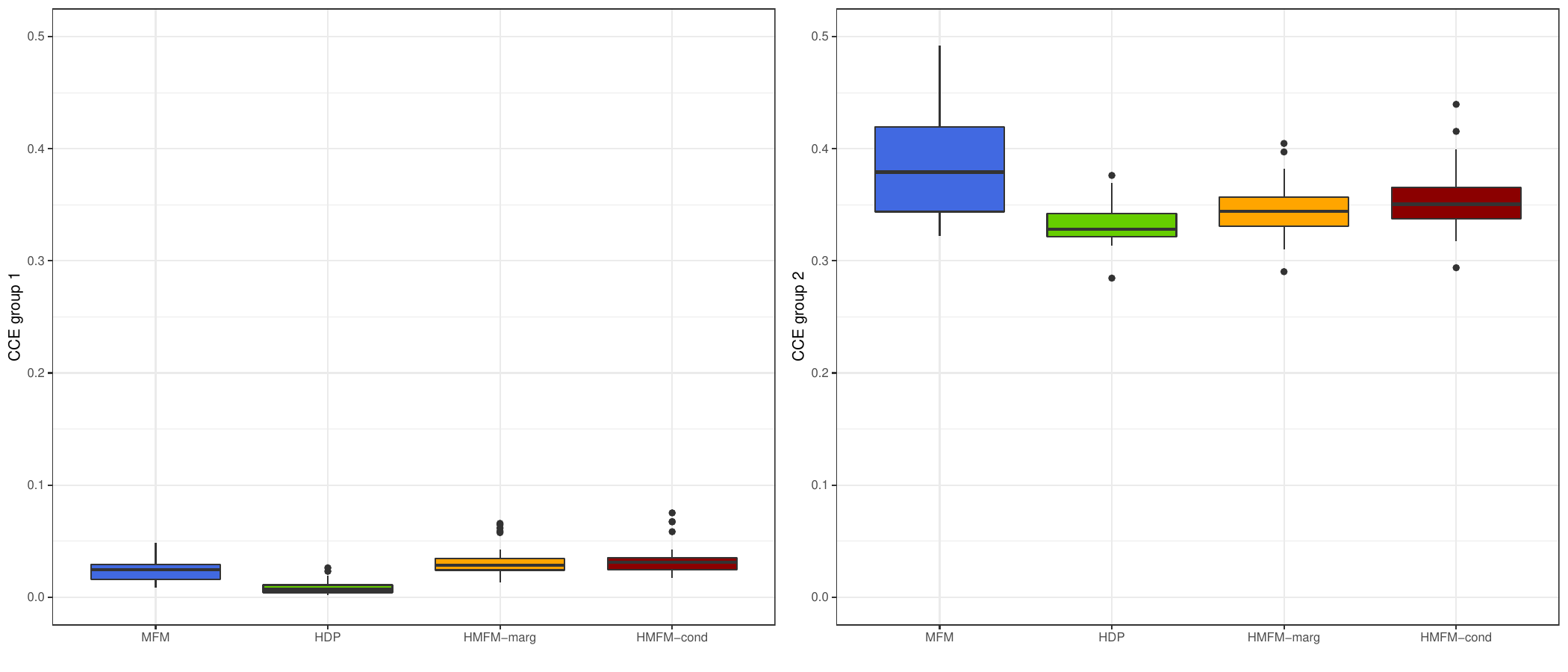}
\caption{ Co-Clustering Error in the first (left) and second (right) group. Boxplot are obtained over $50$ datasets simulated under Experiment 1. 
}
\label{fig:exp1_CCE}
\end{figure}

Similar conclusions can be drawn in terms of density estimation. 
\Cref{fig:exp1_PS} displays the boxplots for the group-specific PS. The figure shows that differences are negligible in the first group, for which all methods provide a good estimate of the density. However, the  density for the second group estimated using the MFM is worse than the one obtained by the HMFM and the HDP. 

\begin{figure}
\centering
\includegraphics[width=\linewidth]{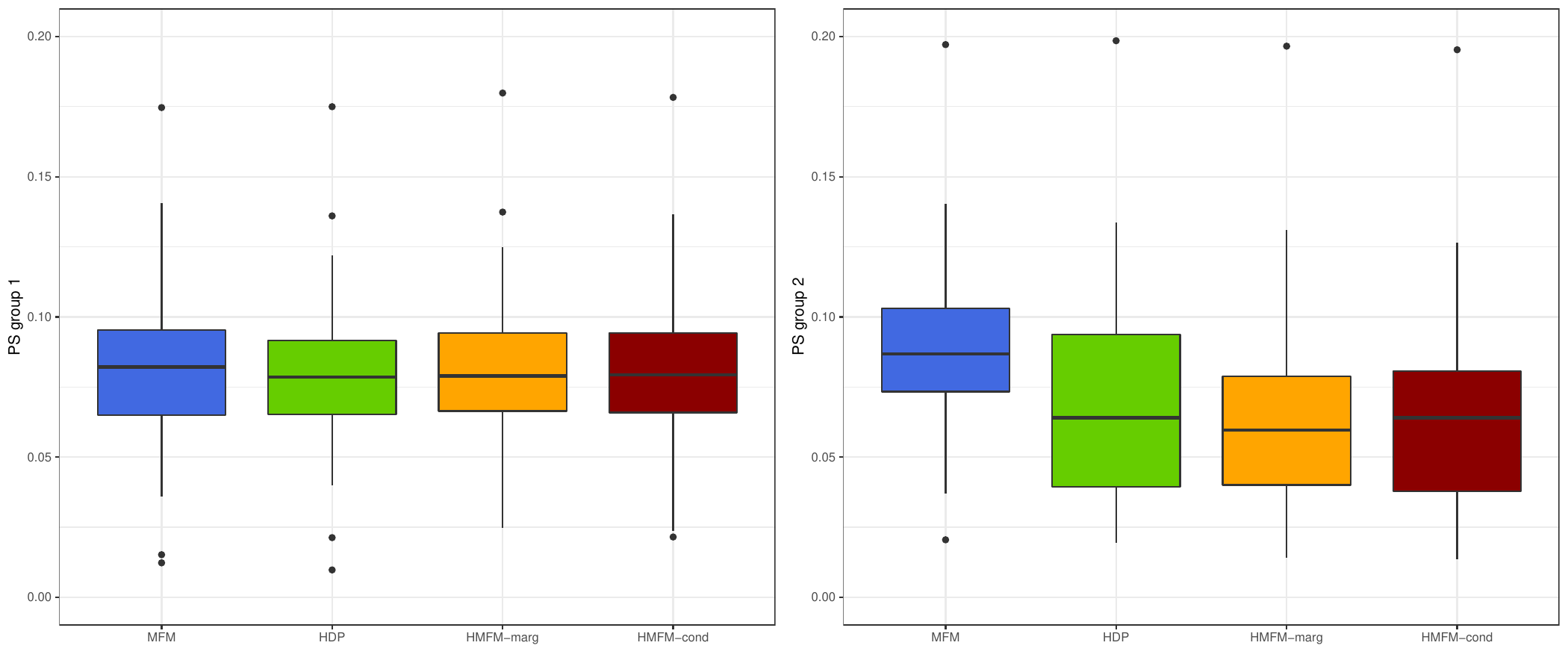}
\caption{Predictive Score in the first (left) and second (right) group. Boxplot are obtained over $50$ datasets simulated under Experiment 1. }
\label{fig:exp1_PS}
\end{figure}

\subsection{Experiment 2}
\label{subsection:exp2_app}
This experiment considers $d=2$ groups coming from a single component, not shared across the groups, namely $K_1 = 1$, $K_2 = 1$ so that $K = 2$. The component specific parameters are $\mu = (0,1)$ and $\sigma = (1,1)$, hence the sampling model is 
$y_{1,i} \ \iid \ \text{N}(0,1)$ and $y_{2,i} \ \iid \ \text{N}(1,1)$, each for $i=1,\dots,n_j$.

\begin{figure}
\centering
\includegraphics[width=0.75\linewidth]{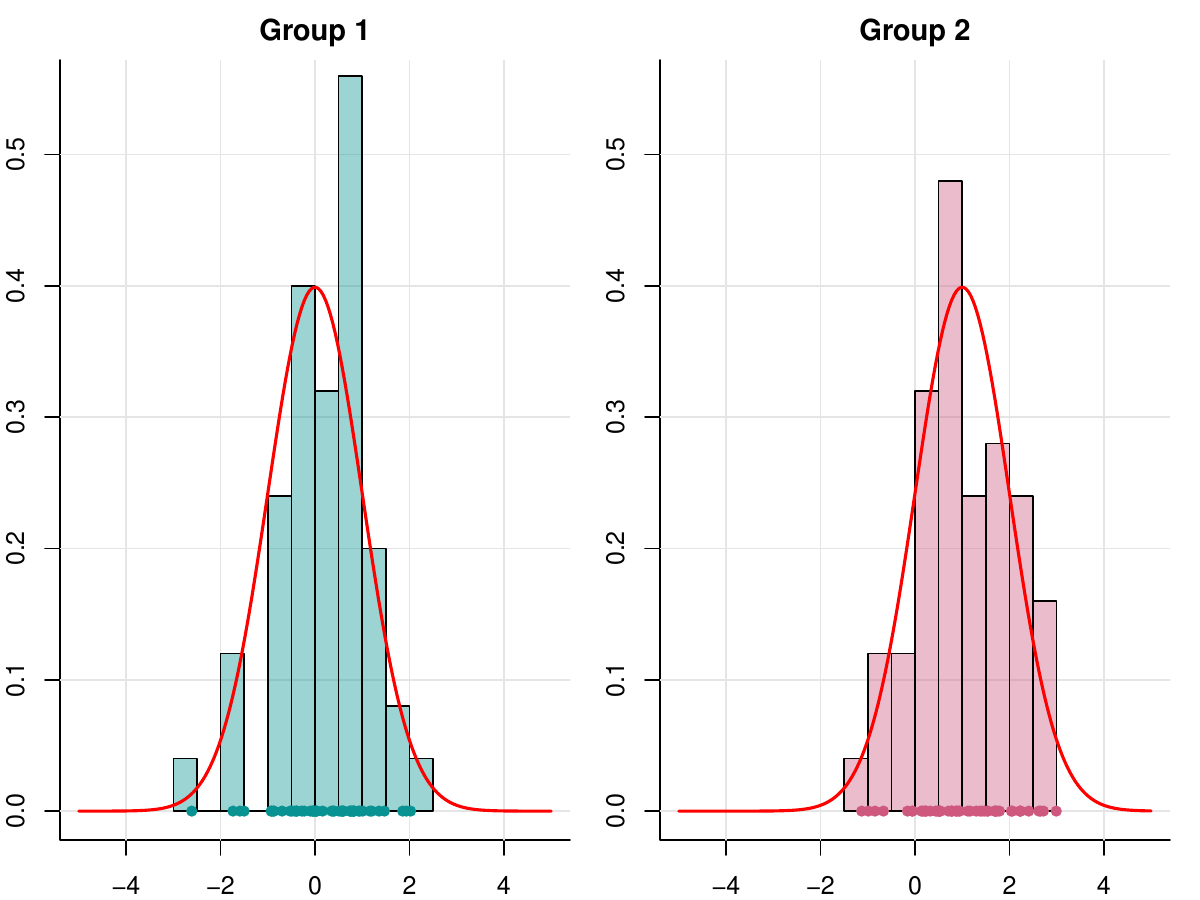}
\caption{Empirical distributions of a dataset simulated under Experiment 2. Dots represent the observations while lines represent the  underlying densities. Colours relate to the mixing components.}
\label{fig:exp2_esempio}
\end{figure}

We repeat the experiment for an increasing number of  observations, $n = 50,100,200$, which we equally divide into the two groups. 
See \Cref{fig:exp2_esempio} for an example of the empirical distribution of a dataset with n=100 generated from the model and the underlying densities. 
This experiment is designed to assess whether the borrowing of information may lead to misleading results in situations where groups do not share any features. Hence, for this scenario, we confine the  comparison between the HDP and the HMFM. 
The latter, is fitted by setting 
$\Lambda_0 = 10$, $\text{V}_\Lambda = 2$, $\gamma_0 = 0.01$.

\Cref{tab:exp2_ARI} presents the results of model comparison based on the mean and standard deviation (in brackets) of the ARI over the simulated datasets. 
Although the scenario is simple and we would expect that both methods will recover the true partition, the table shows that  the HDP struggles to find the underlying global clustering, especially for a small number of observations. Rather, the HMFM is able to recover the true partition for all the sample sizes.


\begin{table}[h]
\caption{
Mean and standard deviation (in brackets) of the ARI for the two groups over the 50 simulated datasets under Experiment 2. }
\centering
\label{tab:exp2_ARI}
\begin{tabular}{cccc}
\hline
& n = 50 & n = 100 & n = 200 \\ \hline
HDP                & 0.515 (0.484)      & 0.877 (0.310)      & 0.957 (0.172)       \\
HMFM - Marginal    & 0.980 (0.141)      & 0.999 (0.005)      & 1 (0)       \\
HMFM - Conditional & 0.978 (0.141)      & 0.994 (0.033)      & 1 (0)      \\ \hline
\end{tabular}
\end{table}

The same conclusions can be drawn from the CCE, see \Cref{fig:exp2_CCE}, which takes into account all the posterior pairwise probabilities of observations to be clustered together. 
Clearly, the HMFM  outperforms the HDP in terms of clustering estimation. Also, the proposed model provides  better estimates of the densities over the HDP, see \Cref{fig:exp2_PS}, although differences are here less evident.


This simple example showcases that the HDP's propensity to excessively share information across groups can compromise the accurate recovery of the true underlying partition. 
To understand why, we resort to the restaurant franchise metaphor discussed in \Cref{subsection:CRFP}.
According to the HDP, when a new customer arrives, the probability of consuming a dish not yet served in that specific restaurant, but available in other franchise restaurants, hinges on the cumulative number of clients eating that dish across all restaurants.
In contrast, the HMFM uses information only about whether a dish is being consumed or not in other restaurants. 
This experiment unveils a potential issue in the HDP's clustering mechanism; the concentration of all clients within the first restaurant (group) eating the same dish increases the probability of that dish being offered in the second restaurant (group), even though no components are shared between them. Rather, this confusion is circumvented by the HMFM's clustering mechanism. We notice that this phenomenon is directly tied to the choice of prior, and its impact diminishes as the sample size increases, as shown in \Cref{tab:exp2_ARI}. 
However, even though  the HDP reaches  the correct clustering for $n=200$, \Cref{fig:exp2_CCE} shows that the HMFM model still outperforms the competing approach.

\begin{figure}
\centering \includegraphics[width=1\linewidth]{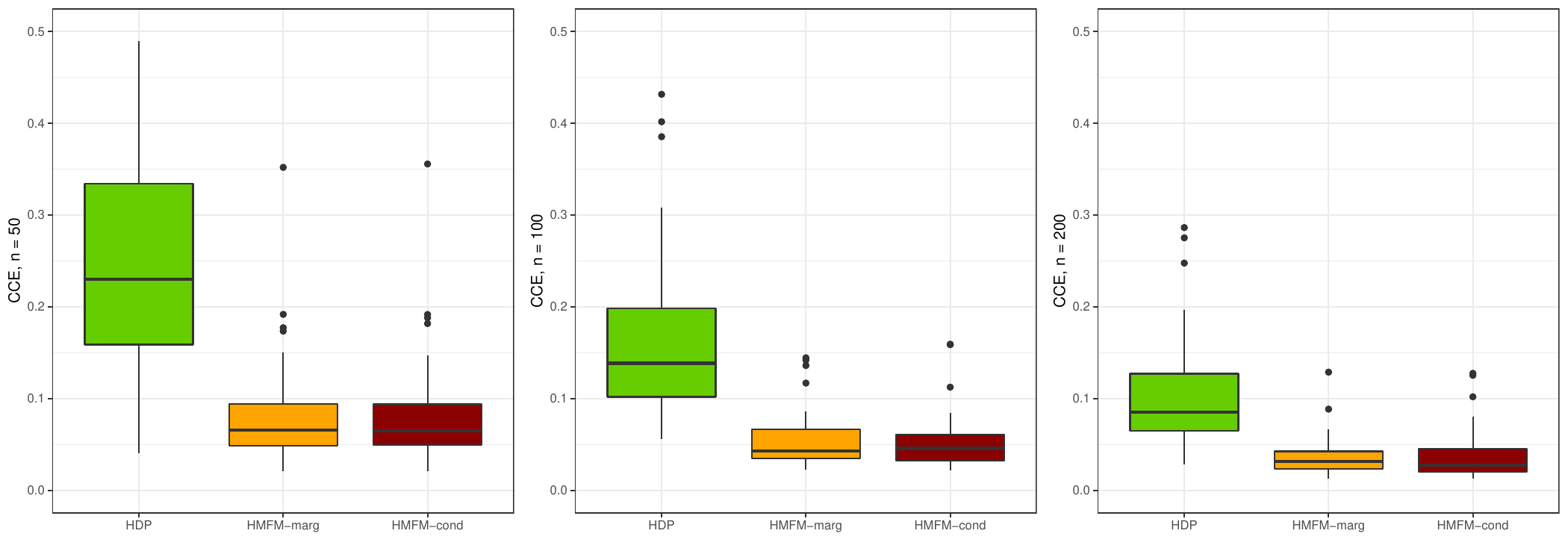}
\caption{ Co-Clustering Error for different sample sizes. Boxplots are obtained over $50$ datasets simulated under Experiment 2. }
\label{fig:exp2_CCE}
\end{figure}

\begin{figure}
\centering
\includegraphics[width=1\linewidth]{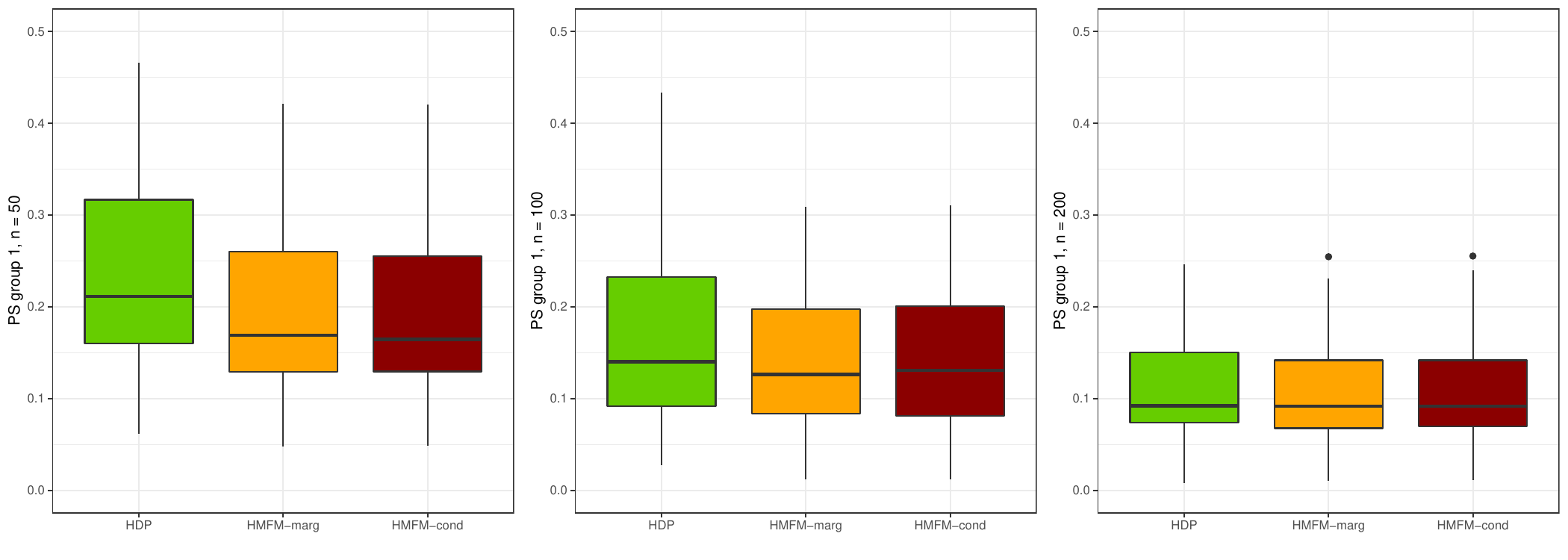} \\
\includegraphics[width=1\linewidth]{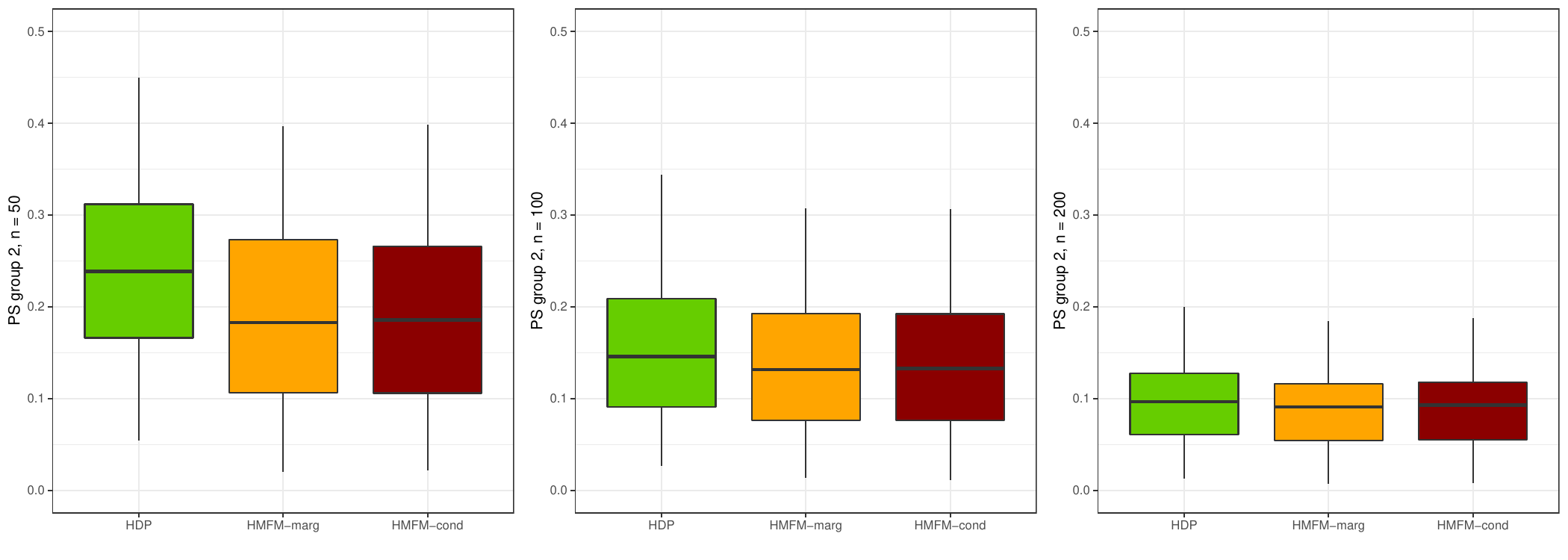} 
\caption{ Predictive score  for the first group (top panels) and the second group (bottom panels) for different sample sizes. 
Boxplot are obtained over $50$ datasets simulated under Experiment 2.}
\label{fig:exp2_PS}
\end{figure}
\subsection{Experiment 3}
\label{subsection:exp3_app}
The final experiment considers $d=15$ groups, each consisting of $n_j = 30$ observations. 
Data have been generated so that for the first $12$ groups, we consider the following component specific parameters $\mu_1 = (-3,0,1)$, $\sigma_1 = (\sqrt{0.5}, \sqrt{0.5}, \sqrt{0.5})$.
To increase the variability and differences between the groups, for each $j=1,\dots,12$ we first sample the number of local components $K_j \in \{2,3\}$ and then, conditionally to $K_j$, we choose uniformly the components to be selected. Regarding the weights, for each of the $12$ groups, we assign most of the mass to  the second component, if included, namely, $\pi_{j,2} = 0.5$ if $K_j = 3$ or $\pi_{j,2} = 2/3$ if $K_j = 2$. The remaining components receive the remaining mass split in equal parts.  
Finally, the last three groups all  share two components whose parameters are $\mu_2 = (-1.5,1.5)$, $\sigma_2 = (\sqrt{0.5},\sqrt{0.5})$ with equal mixing weights. Then, the global number of components is $K = 5$. 
\Cref{fig:exp3_esempio} shows an example of data simulated under this scenario. 
\begin{figure}
\centering
\includegraphics[width=1\linewidth]{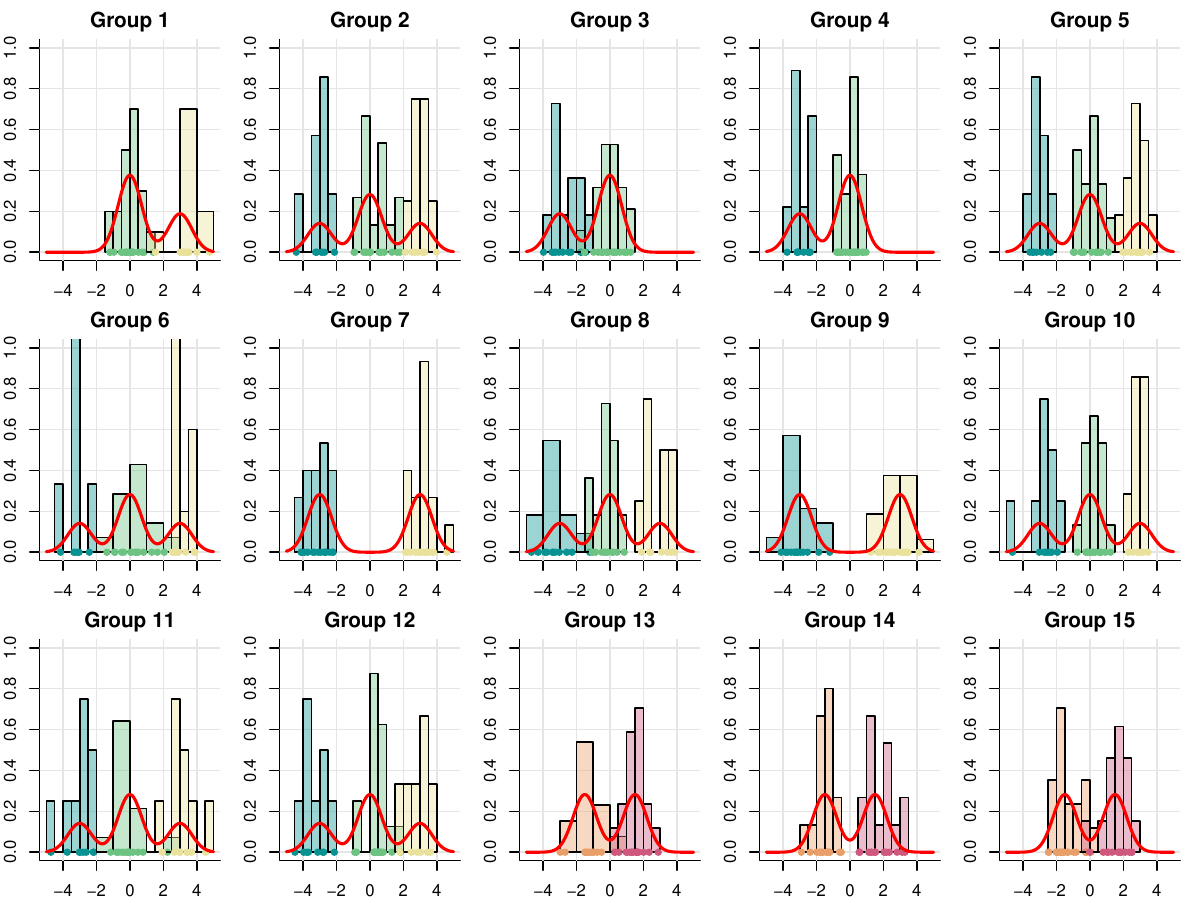}
\caption{Empirical distributions of a dataset simulated under Experiment 3. Dots represent the observations while lines represent the  underlying densities. Colours relate to the mixing components. }
\label{fig:exp3_esempio}
\end{figure}

For each simulated dataset, we fit the HDP and the HMFM; in particular, the latter, is fitted by setting the hyperparameters 
$\Lambda_0 = 15$, $\text{V}_\Lambda = 3$, $\gamma_0 = 0.05$.  \Cref{fig:exp3_clustering} shows the boxplot of the ARI (left) and  CCE (right) of the global clustering evaluated over the simulated datasets. 
Both metrics show that the HFMF significantly outperforms the HDP. 
The same conclusions can be drawn by inspecting the ARI and the CCE within each group (not shown here for brevity).

\begin{figure}
\centering
\includegraphics[width=0.45\linewidth]{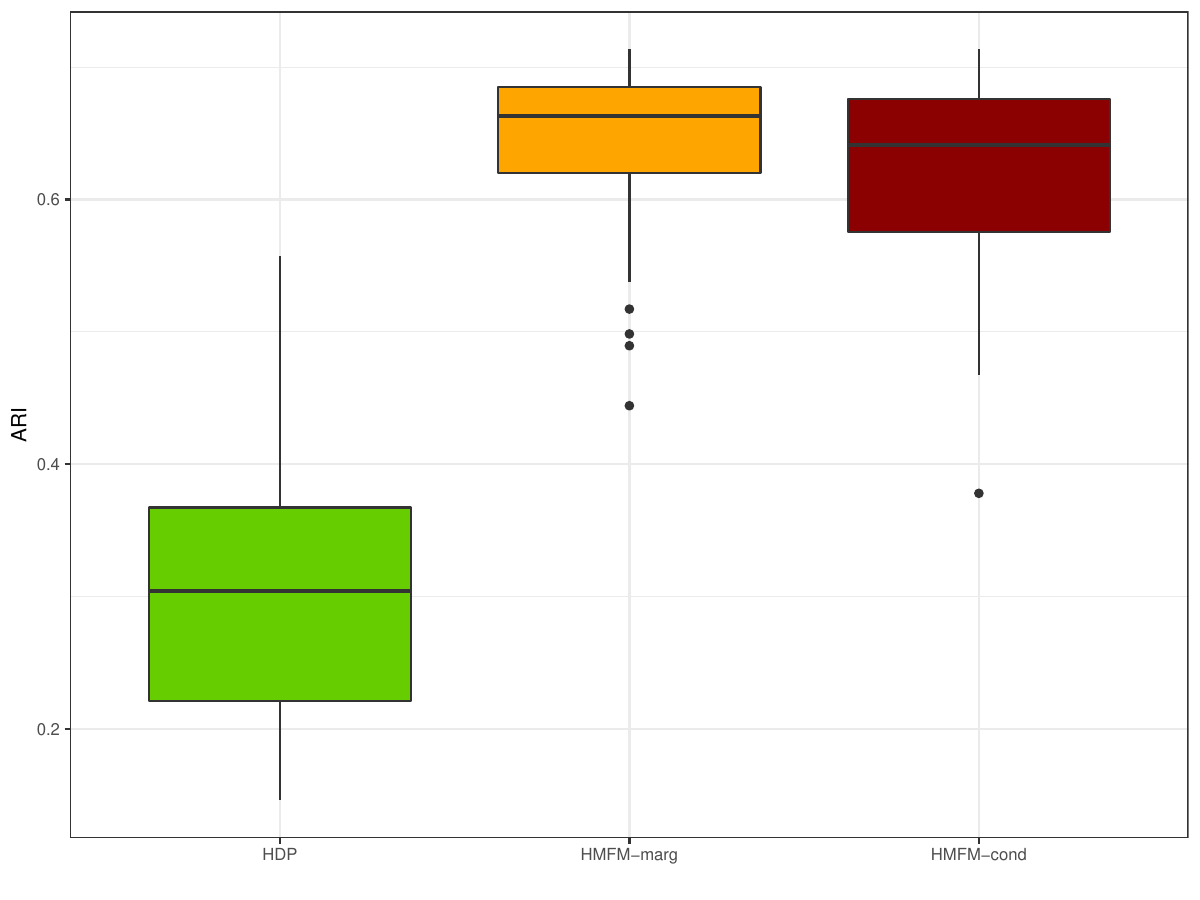} 
\hfill
\includegraphics[width=0.45\linewidth]{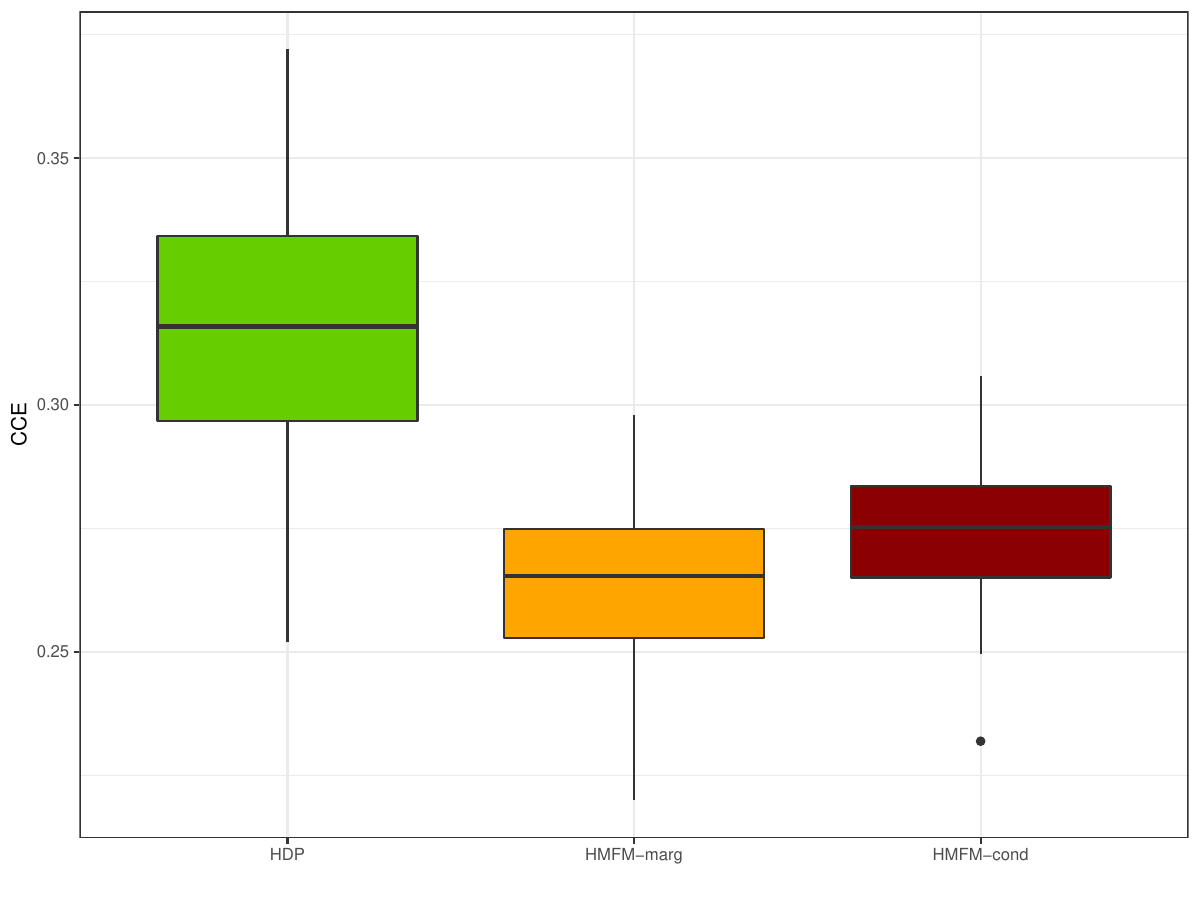} 
\caption{ ARI (left) and  CCE (right) of the global clustering evaluated over the datasets simulated under Experiment 3.  }
\label{fig:exp3_clustering}
\end{figure}

Similar to the previous experiments, the proposed mixture model outperforms the HDP also in terms of group-specific density estimation, even though the differences are less evident than for the clustering.

\begin{figure}[h]
\centering
\includegraphics[width=1\linewidth]{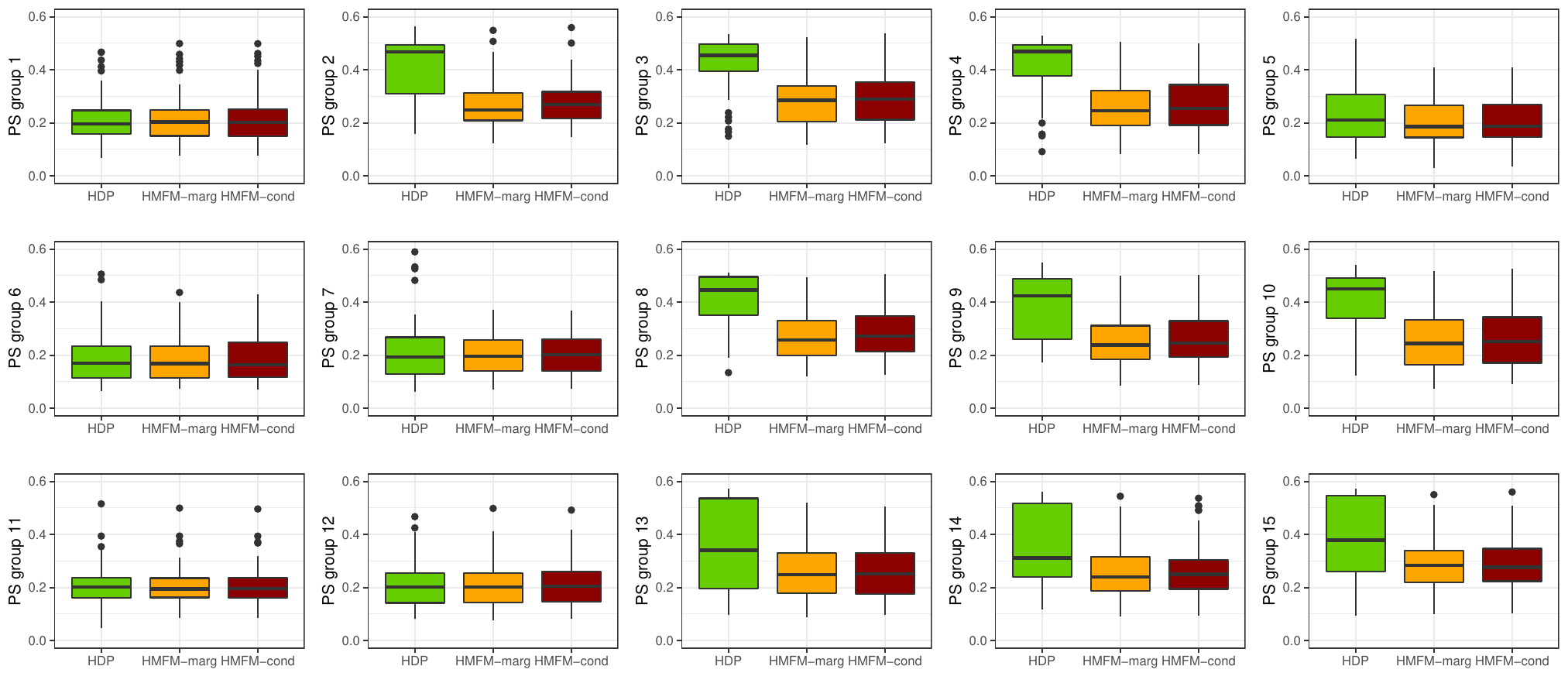}
\caption{ Predictive Score in all the groups. Boxplot are obtained over $50$ datasets simulated under Experiment 3. }
\label{fig:exp3_PS}
\end{figure}

\subsection{Comparison of computational times}
\label{app:computational_time}
The goal of this section is to asses the scalability of the HDP and the HMFM  
when the number of observations increases. In particular, we consider $n = (24,50,100,200,400,800,1600)$ data, equally divided in  $d=2$ groups. 
Data are generated from a 
simple mixture model with well separated components, enabling all methods to quickly converge to the true clustering. 
We run each method for $10,000$ iterations, burning out half of them. The total execution time is then divided by the number of iterations, so that the mean time for a single iteration is computed and displayed in \Cref{fig:comp_times}.


\begin{figure}
\centering
\includegraphics[width=0.75\linewidth]{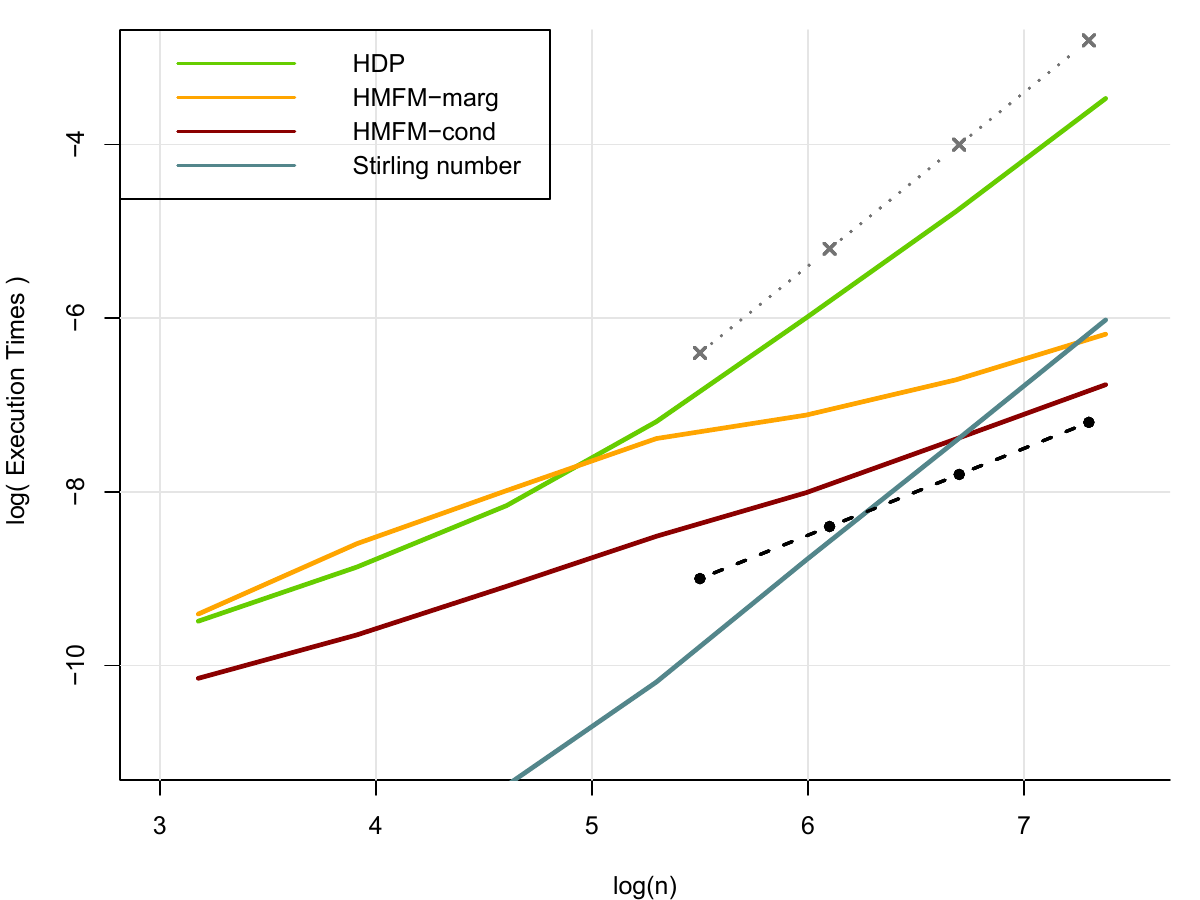}
\caption{ 
Execution time in log-scale for the HMFM,  HDP and calculation of Stirling numbers.  The dashed line and the dotted line represent, respectively, a linear and a quadratic increasing trend with the number of observations $n$.
}
\label{fig:comp_times}
\end{figure}

The HDP sampling strategy 
is based on the direct assignment scheme  suggested by \cite{teh2006hierarchical}. 
Differently from the traditional Chinese restaurant franchise process where observations are firstly assigned to the tables and then the menu (mixture components) are assigned to the tables, the direct assignment sampler  assigns data directly to the mixture components.
The computational bottleneck of this algorithm is the computation of the unsigned Stirling numbers $s(n_{j,k},m)$ for each $j=1,\dots,d$, $k=1,\dots,K$ and for all positive integers $m\leq n_{j,k}$, where $n_{j,k}$ is the number of observations 
in group $j$ assigned to cluster $k$. 
The computational time to compute $s(n_{j,k},m)$ is $O((n_{j,k})^2)$. Hence, iterating over all clusters in all groups, the overall time for updating all cluster allocation labels is $O(n^2)$. This is empirically shown in \Cref{fig:comp_times}. 
As expected, for large $n$,  the slope of the computational time needed to calculate the Striling numbers equals the one of the HDP. 
In particular, such slope is steeper than the one of HMFM  (both marginal and conditional sampler) 
which grows with a linear trend. This empirically verifies that our proposed model is faster and scales much better when increasing the number of observations.

\section{Additional results on shot put data analysis}
\label{app:shotput}
Here we detail the complete specification of the HMFM model for the shot put data:
\begin{equation*}
\label{eqn:shotput_mixtureFDP}
\begin{split}
\bmy_{j,1}, \dots, \bmy_{j,n_j} \mid \bmS_j, \theta, M 
& \ \iid \
\sum_{m=1}^M \frac{S_{j,m}}{T_j} 
d\text{N}_{N_{j,i}}(\bmy_{j,i} \mid \mu_{m}\mathbf{1}_{N_{j,i}} + X_{j,i} \bmbeta_j, \sigma^2_{m}\mathbf{I}_{N_{j,i}}),
\quad \text{for } j=1,\dots, d\\
\bmbeta_1,\dots,\bmbeta_d 
&\ \iid \
\text{N}_d\left( \bmbeta_0, \Sigma_0 \right)\\
\tau_{1},\dots,\tau_{M} \mid M
&\ \iid \ \operatorname{Normal-InvGamma}\left(\mu_0,k_0,\nu_0,\sigma_0\right)\\
S_{j,1},\dots,S_{j,M} \mid M,\gamma_j
& \ \iid \ \operatorname{Gamma}(\gamma_j,1),
\quad \text{for } j = 1,\dots,d\\
M \mid \Lambda & \ \sim \ \operatorname{Pois}_1(\Lambda)\\
\gamma_1,\dots,\gamma_d \mid \Lambda 
& \ \sim \
\operatorname{Gamma}(a_\gamma,\Lambda b_\gamma) \\
\Lambda 
& \ \sim \
\operatorname{Gamma}(a_\Lambda, b_\Lambda),
\end{split}
\end{equation*}
where $d\text{N}_{N_{j,i}}(\bmy \mid \bmmu,\Sigma)$ is the density of a multivariate normal distribution evaluated in $\bmy$ with mean $\bmmu$ and covariance matrix $\Sigma$.


\begin{figure}[h]
\centering
\includegraphics[width=0.48\linewidth]{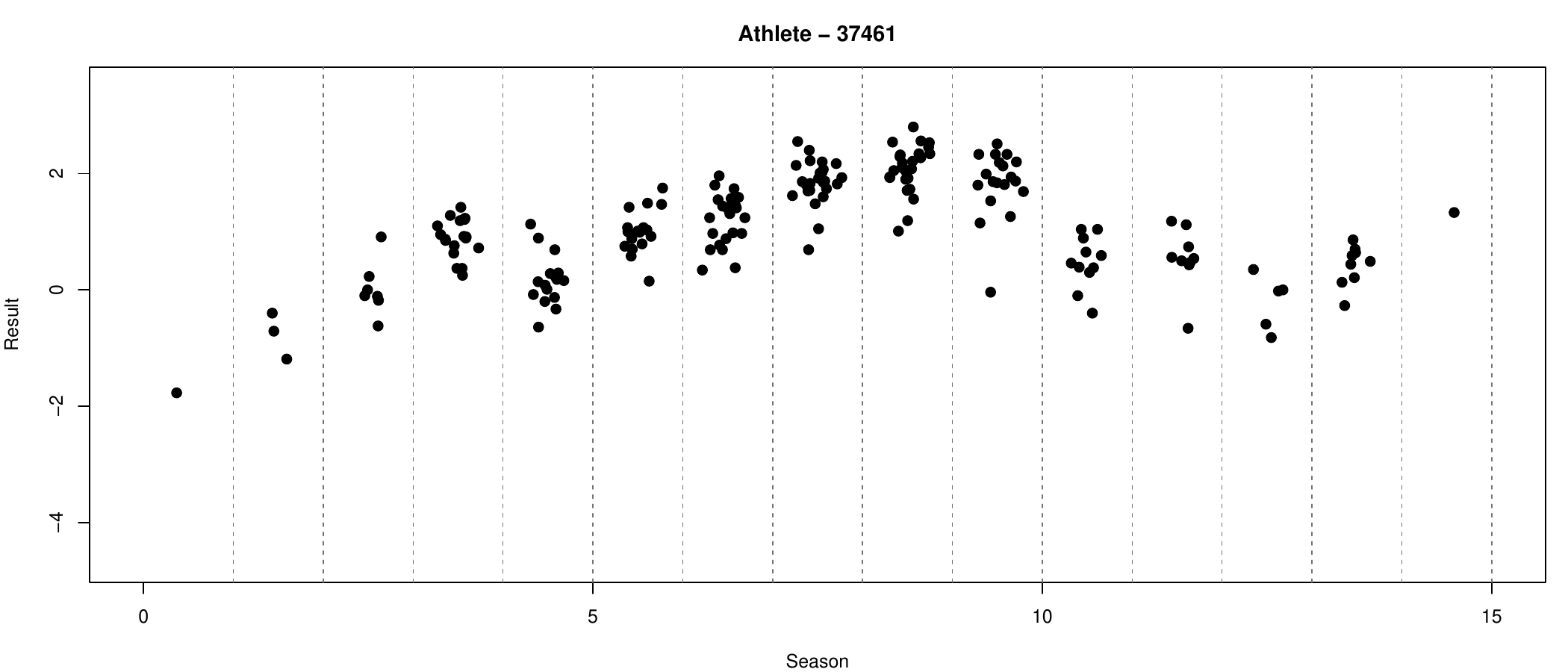}
\hfill
\includegraphics[width=0.48\linewidth]{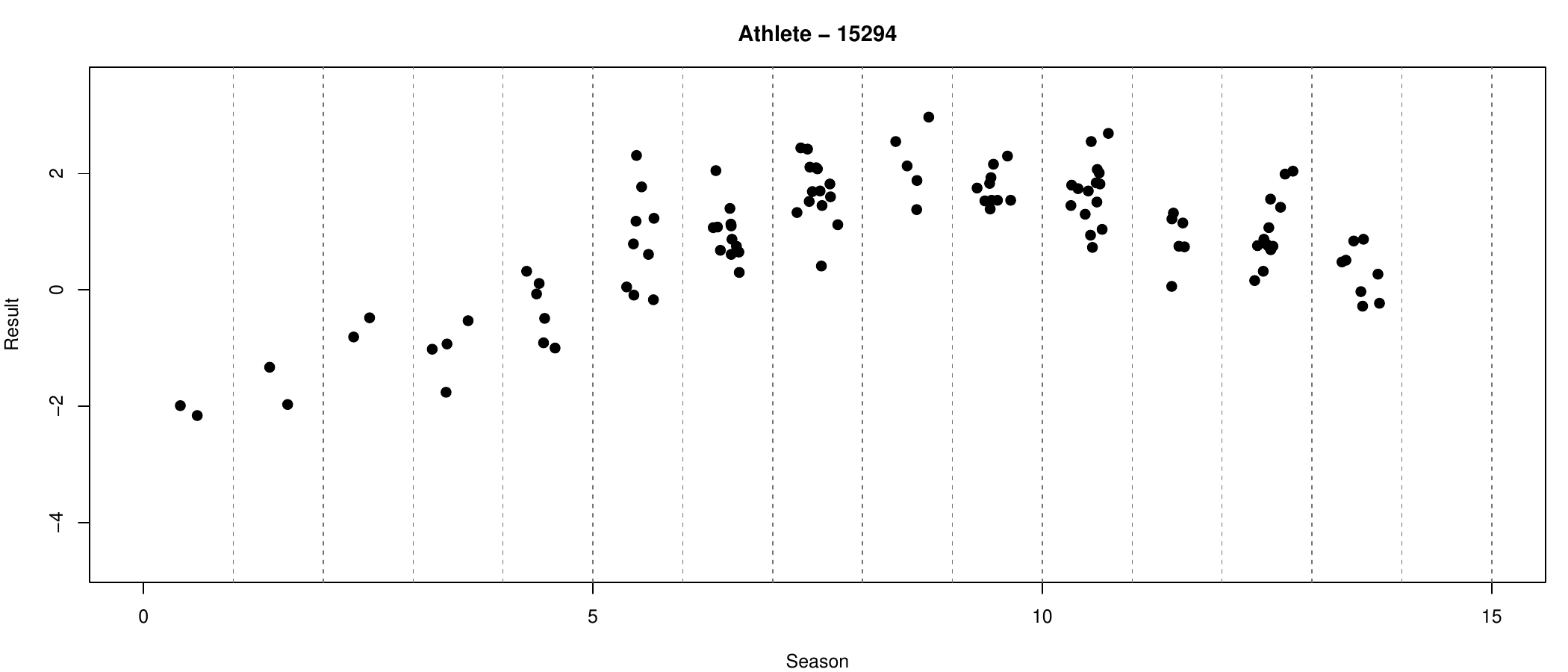}\\
\includegraphics[width=0.48\linewidth]{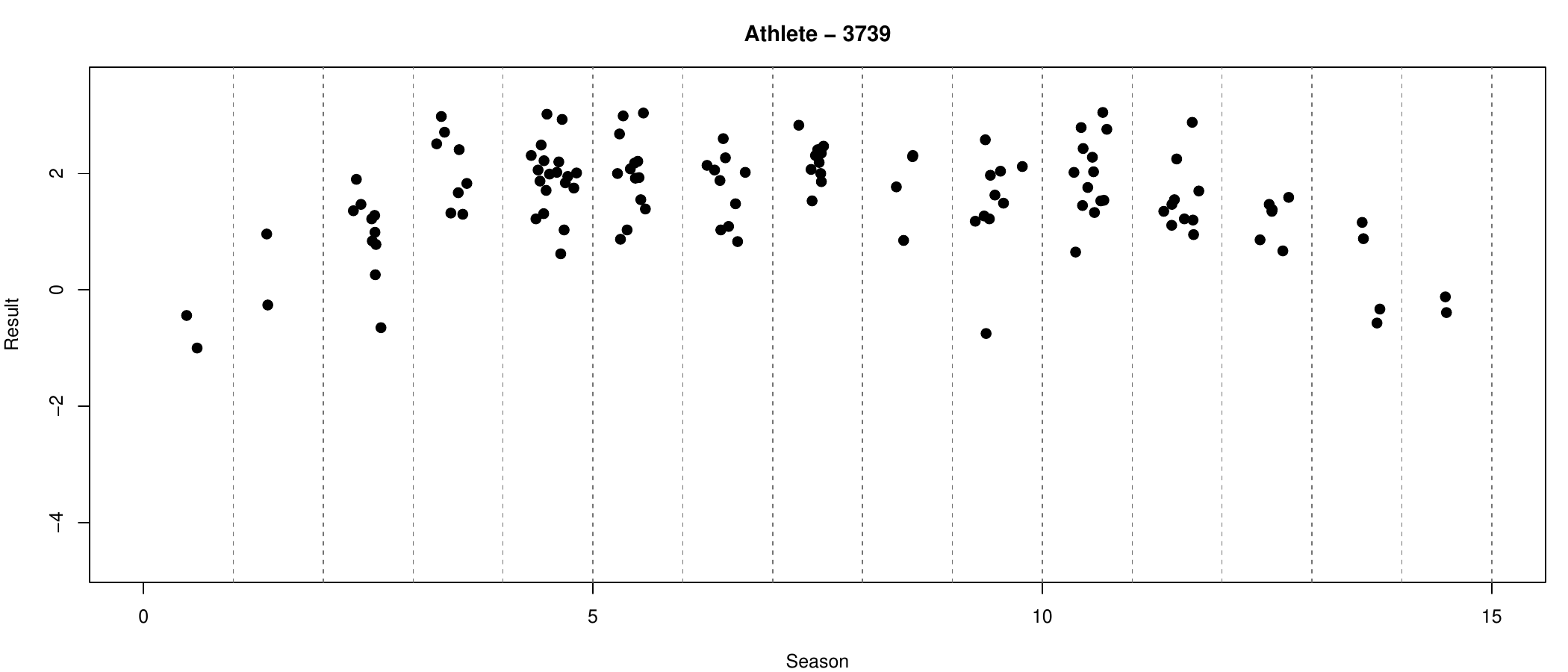}
\hfill
\includegraphics[width=0.48\linewidth]{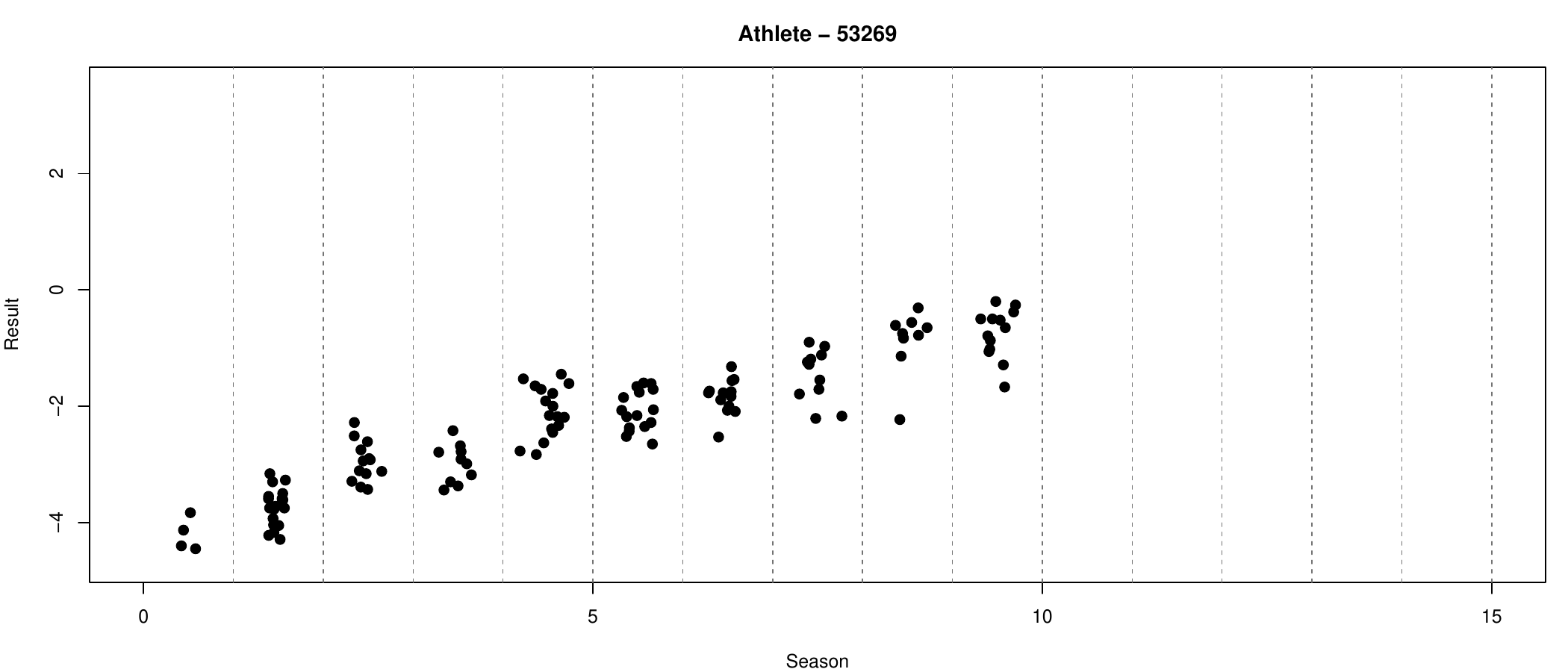}
\caption{Shot put measurements collected throughout an athlete's career for four randomly selected athletes. Vertical dotted lines delimit seasons. 
}
\label{fig:application_traiettorie}
\end{figure}

\Cref{tab:app_cluster_sizes} presents the local clustering sizes, for each season and for each cluster. Finally, \Cref{tab:my-table} shows some summary statistics for the global clusters and compares them with the overall dataset. 

\begin{table}[h]
\centering
\caption{Cluster cardinalities for each cluster (rows) and for each season (columns). }
\begin{tabular}{@{}c@{\hspace{0.5em}}*{15}{c@{\hspace{0.5em}}}@{}}
\hline \empty & \textbf{S1} & \textbf{S2} & \textbf{S3} & \textbf{S4} & \textbf{S5} & \textbf{S6} & \textbf{S7} & \textbf{S8} & \textbf{S9} & \textbf{S10} & \textbf{S11} & \textbf{S12} & \textbf{S13} & \textbf{S14} & \textbf{S15}             \\ \hline
Cluster 1  &  0   &  4   &  5   &  9   &  7  &  12 &  6  &  9  &  11 &  7  &  7  &  5  &  6  &  5  &  3 \\
Cluster 2  &  14  &  10  &  14  &  13  &  19 &  16 &  21 &  9  &  11 &  14 &  4  &  9  &  3  &  6  &  4 \\
Cluster 3  &  11  &  14  &  10  &  31  &  34 &  31 &  34 &  31 &  24 &  17 &  17 &  0  &  8  &  2  &  0 \\
Cluster 4  &  0   &  14  &  20  &  45  &  54 &  51 &  38 &  35 &  46 &  24 &  10 &  20 &  0  &  0  &  9 \\
Cluster 5  &  33  &  29  &  41  &  61  &  70 &  66 &  49 &  39 &  20 &  11 &  12 &  0  &  17 &  6  &  0 \\
Cluster 6  &  0   &  0   &  14  &  11  &  7  &  11 &  19 &  0  &  0  &  0  &  4  &  1  &  0  &  0  &  0 \\
Cluster 7  &  9   &  39  &  70  &  108 &  79 &  42 &  36 &  52 &  29 &  21 &  21 &  29 &  0  &  0  &  7 \\
Cluster 8  &  29  &  85  &  109 &  79  &  72 &  59 &  48 &  23 &  30 &  29 &  4  &  0  &  7  &  12 &  0 \\
Cluster 9 &  75  &  148 &  90  &  42  &  38 &  46 &  34 &  36 &  22 &  13 &  18 &  9  &  16 &  0  &  2 \\
Cluster 10 &  200 &  43  &  30  &  4   &  3  &  7  &  9  &  21 &  18 &  22 &  12 &  16 &  0  &  0  &  6 \\
Cluster 11 &  31  &  17  &  0   &  0   &  0  &  0  &  0  &  0  &  0  &  5  &  12 &  0  &  0  &  11 &  0 \\
Cluster 12 &  1   &  0   &  0   &  0   &  0  &  0  &  0  &  0  &  0  &  0  &  0  &  0  &  3  &  0  &  2 \\
Cluster 13  &  0   &  0   &  0   &  0   &  0  &  0  &  0  &  3  &  0  &  0  &  0  &  0  &  0  &  0  &  0 \\
Cluster 14 &  0   &  0   &  0   &  0   &  0  &  0  &  0  &  0  &  0  &  0  &  0  &  0  &  2  &  0  &  0 \\
Cluster 15 &  0   &  0   &  0   &  0   &  0  &  1  &  1  &  0  &  0  &  0  &  0  &  3  &  0  &  0  &  0 
\end{tabular}
\label{tab:app_cluster_sizes}
\end{table}

\begin{table}[h]
\caption{Cluster statistics for each global cluster. The first row represents the case where all data are pooled in a single cluster. Each subsequent row corresponds to a specific global cluster. The columns display the sample mean of various cluster statistics, differentiated by M (male) or F (female) athletes in the cluster.}
\centering
\renewcommand{\arraystretch}{1.0} 
\begin{tabular}{@{}c@{\hspace{0.5em}}c@{\hspace{0.5em}}c@{\hspace{0.5em}}c@{\hspace{0.5em}}c@{\hspace{0.5em}}c@{\hspace{0.5em}}c@{\hspace{0.5em}}c@{\hspace{0.5em}}c@{}}
\hline
\textbf{Cluster} & \textbf{SizeM} & \textbf{SizeF} & \textbf{MeanM} & \textbf{MeanF} & \textbf{VarM} & \textbf{VarF} &  \textbf{AgeM} &  \textbf{AgeF} \\ \hline
Pooled &  1925 &  1683 &  1.12 &  -1.29 &  1.51 &  2.16 &  24.47 &  22.77 \\
1 &  42 &  54 &  3.47 &  1.50 &  0.28 &  0.48 &  27.70 &  26.08 \\
2 &  77 &  90 &  2.85 &  0.72 &  0.28 &  0.38 &  27.52 &  24.91 \\
3 &  147 &  117 &  2.26 &  0.17 &  0.28 &  0.42 &  25.79 &  25.05 \\
4 &  234 &  132 &  1.70 &  -0.39 &  0.29 &  0.42 &  25.80 &  23.54 \\
5 &  260 &  194 &  1.18 &  -1.05 &  0.21 &  0.33 &  24.41 &  22.34 \\
6 &  42 &  25 &  1.14 &  -1.02 &  0.86 &  0.98 &  24.24 &  21.69 \\
7 &  312 &  228 &  0.72 &  -1.56 &  0.28 &  0.39 &  23.84 &  22.17 \\
8 &  316 &  270 &  0.24 &  -2.08 &  0.26 &  0.34 &  22.71 &  21.85 \\
9 &  277 &  309 &  -0.37 &  -2.68 &  0.28 &  0.34 &  22.18 &  21.38 \\
10 &  178 &  212 &  -1.07 &  -3.30 &  0.23 &  0.33 &  22.05 &  21.03 \\
11 &  35 &  41 &  -1.81 &  -3.56 &  0.29 &  1.20 &  21.40 &  25.78 \\
12 &  0 &  6 &  - &  2.63 &  - &  0.32 &  - &  28.05 \\
13 &  2 &  1 &  1.08 &  0.00 &  13.05 &  1.83 &  26.96 &  28.00 \\
14 &  1 &  1 &  -0.46 &  -4.19 &  2.19 &  0.47 &  30.80 &  31.00 \\
15 &  2 &  3 &  -4.76 &  -4.20 &  18.10 &  2.60 &  36.33 &  25.42 \\
\end{tabular}
\label{tab:my-table}
\end{table}





\clearpage
\bibliographystyle{jasa3}
\bibliography{references}

\end{document}